\documentclass[a4paper,USenglish,cleveref,nameinlink,autoref,thm-restate]{lipics-v2021}
\hideLIPIcs
\usepackage{multirow}
\usepackage[table]{xcolor}
\usepackage{soul}
\nolinenumbers

\usepackage[table]{xcolor}
\usepackage{graphicx}
\usepackage{hhline}

\usepackage{makecell}

\usepackage{float}
\usepackage[most]{tcolorbox}
\usepackage[inline]{enumitem}
\usepackage{amsfonts,amsthm,amsmath}
\usepackage{amstext}
\usepackage{algorithmic}
\usepackage{graphics}
\usepackage[all]{nowidow}
\usepackage{microtype}
\usepackage[bottom]{footmisc}
\usepackage{hyperref}
\usepackage[forwardlinking=yes]{apxproof} 
\usepackage{complexity}

\definecolor{lightblue}{HTML}{cceeff}

\definecolor{lipicsblue}{rgb}{0.08235294118,0.3098039216,0.537254902}
\hypersetup{
    colorlinks=true,
    linkcolor=lipicsblue,
    filecolor=lipicsblue,      
    urlcolor=lipicsblue,
    citecolor=lipicsblue,
    pdfpagemode=FullScreen,
    }

\usepackage{amsmath,amssymb,amsthm}
\usepackage{xspace}

\newcommand{\shapeDesc}[8]{\ensuremath{\langle #1,\allowbreak #2,\allowbreak #3,\allowbreak #4,\allowbreak #5,\allowbreak #6,\allowbreak #7,\allowbreak #8 \rangle}}

\newcommand{\neworrenewcommand}[1]{\providecommand{#1}{}\renewcommand{#1}}

\newcommand{\shapeDescPUBE}[9]{
    \neworrenewcommand{\ffoo}[1]{
        \ensuremath{\langle #1,\allowbreak #2,\allowbreak #3,\allowbreak #4,\allowbreak #5,\allowbreak #6,\allowbreak #7,\allowbreak #8,\allowbreak #9,\allowbreak ##1\rangle}
    }
    \ffoo
}

\title{Upward Book Embeddings of Partitioned Digraphs}

\author{Giordano {Da Lozzo}}{ICITA Department, Roma Tre University, Italy}{giordano.dalozzo@uniroma3.it}{http://orcid.org/0000-0003-2396-5174}{}

\author{Fabrizio Frati}{ICITA Department, Roma Tre University, Italy}{fabrizio.frati@uniroma3.it}{https://orcid.org/0000-0001-5987-8713}{}

\author{Ignaz Rutter}{Faculty of Computer Science and Mathematics, University of Passau, Germany}{rutter@fim.uni-passau.de}{https://orcid.org/0000-0002-3794-4406}{}

\authorrunning{G. Da Lozzo, F. Frati, and I. Rutter}
\Copyright{Giordano {Da Lozzo,} Fabrizio Frati, and Ignaz Rutter}

\usepackage{todonotes}

\newtheorem{property}{Property}

\newcommand{\blue}[1]{{{\textcolor{blue}{#1}\xspace}}}
\renewcommand{\emph}[1]{\blue{\bf \em #1}}

\Crefname{observation}{Observation}{Observations}
\Crefname{algorithm}{Algorithm}{Algorithms}
\Crefname{section}{Sect.}{Sects.}
\Crefname{observation}{Observation}{Observations}
\Crefname{lemma}{Lemma}{Lemmas}
\Crefname{corollary}{Corollary}{Corollaries}
\Crefname{claimx}{Claim}{Claims}
\Crefname{figure}{Fig.}{Figs.}
\Crefname{figure}{Fig.}{Figs.}
\Crefname{invariant}{Inv.}{Invs.}
\Crefname{enumi}{Condition}{Conditions}
\Crefname{property}{Property}{Properties}
\Crefname{assumption}{Assumption}{Assumptions}

\relatedversion{A preliminary version of the paper appears at the 42nd International Symposium on Computational Geometry (SoCG '26).}

\begin{document}
\maketitle

\keywords{upward book embeddings, partitioned digraphs, SPQ-trees, $2$-trees}

\ccsdesc[500]{Theory of computation~Computational geometry}
\ccsdesc[500]{Mathematics of computing~Graph algorithms}
\ccsdesc[500]{Theory of computation~Design and analysis of algorithms}

\begin{abstract}
In 1999, Heath, Pemmaraju, and Trenk [SIAM J. Comput. 28(4), 1999] extended the classic notion of book embeddings to digraphs, introducing the concept of {\em upward book embeddings}, in which the vertices must appear along the spine in a topological order and the edges are partitioned into pages, so that no two edges in the same page cross. For a partitioned digraph~$G=(V,\bigcup^k_{i=1} E_i)$, that is, a digraph whose edge set is partitioned into~$k$ subsets, an upward book embedding is required to assign edges to pages as prescribed by the given partition. In a companion paper, Heath and Pemmaraju [SIAM J. Comput 28(5), 1999] proved that the problem of testing the existence of an upward book embedding of a partitioned digraph is linear-time solvable for~$k=1$ and recently Akitaya, Demaine, Hesterberg, and Liu [GD, 2017] have shown the problem \NP-complete for~$k\geq 3$.
In this paper, we study upward book embeddings of partitioned digraphs and focus on the unsolved case~$k=2$. Our first main result is a novel characterization
of the {\em upward embeddings} that support an upward book embedding in two pages.
We exploit this characterization in several ways, and obtain a rich picture of the complexity landscape of the problem.
First, we show that the problem remains \NP-complete when~$k=2$, thus closing the complexity gap for the problem.
Second, we show that, for an~$n$-vertex partitioned digraph~$G$ with a prescribed planar embedding, the existence of an upward book embedding of~$G$ that respects the given planar embedding can be tested in~$O(n \log^3 n)$ time. Finally, leveraging the SPQ(R)-tree decomposition of biconnected graphs into triconnected components, we present a cubic-time testing algorithm for biconnected directed partial $2$-trees.
%
\end{abstract}

\setcounter{tocdepth}{1}


\newcommand{\putbe}{partitioned upward 2-page book embedding\xspace}
\section{Introduction}

\begin{figure}[b!]
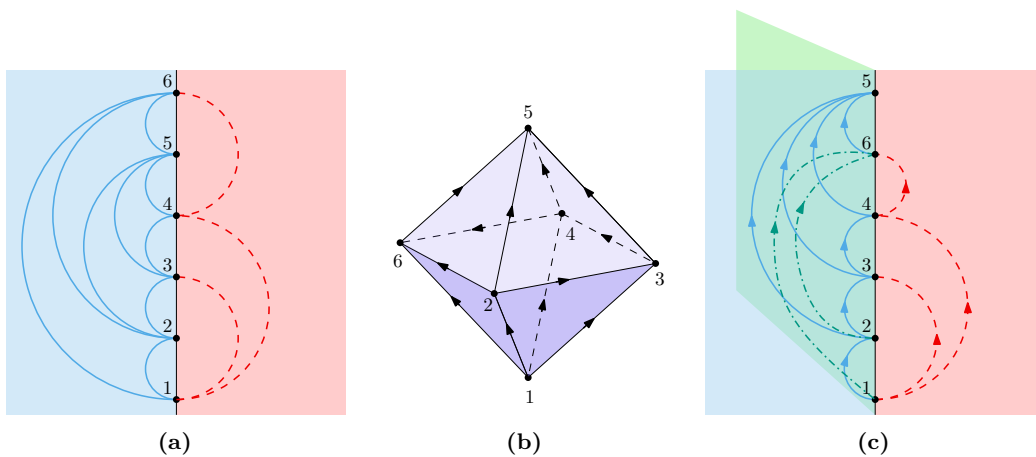

    \centering
    \begin{subfigure}{.32\textwidth}\centering
\includegraphics[page=5,width=\textwidth]{figures/bookEmbeddingIntro.pdf}
    \subcaption{\label{fig:book-undirected}}
    \end{subfigure}
    \begin{subfigure}{.32\textwidth}\centering
    \includegraphics[page=6,width=\textwidth]{figures/bookEmbeddingIntro.pdf}
    \subcaption{\label{fig:octahedron}}
    \end{subfigure}
    \begin{subfigure}{.32\textwidth}\centering
\includegraphics[page=7,width=\textwidth]{figures/bookEmbeddingIntro.pdf}
    \subcaption{\label{fig:book-directed}}
    \end{subfigure}
    \caption{(a) A book embedding of the octahedron in $2$ pages. (b) An orientation of the octahedron. (c) An upward book embedding of the directed octahedron in (b) in $3$ pages, which is optimal.}\label{fig:bookEmbeddingExample}
\end{figure}

Book embeddings are a classic and influential topic in combinatorial and algorithmic graph theory. The notion of book as a topological space was introduced in the late 60s by Persinger~\cite{MR195077} and Atneosen~\cite{MR293592}, and later developed in its current and more popular form by the seminal work of Ollmann~\cite{Oll73}. In a \emph{book embedding} of a graph~$G=(V,E)$, all vertices lie along a line—referred to as the \emph{spine}—while edges are placed into distinct half-planes bounded by the spine, known as the \emph{pages} of the book. Therefore, constructing such an embedding for~$G$ amounts to computing a pair~$(\pi,\sigma)$, where~$\pi: V \leftrightarrow \{1,\dots,|V|\}$ is a linear ordering of the vertices and~$\sigma: E \rightarrow \{1,\dots,k\}$ is a partition of the edges into~$k$ pages so that no two edges in the same page cross according to~$\pi$, i.e., their end-vertices do not alternate in~$\pi$; see \cref{fig:book-undirected} for an example. The minimum value of~$k$ for which this is possible is the \emph{book thickness} of~$G$  (also called \emph{stack number} or \emph{page number}) and the \emph{book thickness} of a graph class~$\cal G$ is the maximum book thickness among all graphs in~$\cal G$. 

Research on book embeddings and book thickness originated from problems in VLSI circuit design~\cite{Chung87}, and has since found applications in a variety of domains. These include sorting permutations~\cite{Pratt73,Tarjan72}, fault-tolerant processing~\cite{Rosenberg83}, compact graph encodings~\cite{Jacobson89,Munro01}, graph drawing~\cite{Biedl1999,DBLP:journals/algorithmica/GiacomoDLW06,Everett10,GIORDANO201545}, computational origami~\cite{Akitaya18,Morgan2012MapFolding}, parallel process scheduling~\cite{BHATT199655}, and parallel matrix computations~\cite{heath1993sparse}, among others. For additional references and a more comprehensive overview of applications, see e.g.~\cite{Dujmović2004}. The notion of book embedding was extended to digraphs by Heath, Pemmaraju, and Trenk~\cite{HeathPT99} by introducing the natural requirement that in a book embedding~$(\pi,\sigma)$ of a digraph~$G$ the ordering~$\pi$ must be a topological ordering of~$G$; see \cref{fig:octahedron,fig:book-directed} for an example. Such book embeddings are called \emph{upward book embeddings} as they are naturally depicted with vertices placed on a vertical line and edges drawn as arcs monotonically increasing in the~$y$-direction in their page. Next, we provide an overview of the major results on book embeddings.

\subparagraph{Undirected graphs.} In 1979, Bernhart and Kainen~\cite{BERNHART1979320} showed that the graphs of book thickness~$1$ are exactly the outerplanar graphs and that the graphs of book thickness~$2$ are exactly the sub-Hamiltonian planar graphs. Whereas the former are known to be recognizable in linear time~\cite{10.1007/3-540-17218-1_57}, recognizing sub-Hamiltonian planar graphs is \NP-complete, even for planar triangulations~\cite{Wig82}. Several classes of planar graphs are known to admit a book embedding in two pages, e.g., 4-connected planar graphs~\cite{Tutte56}, planar graphs without separating triangles~\cite{DBLP:journals/appml/KainenO07}, planar graphs of maximum degree~4~\cite{DBLP:journals/algorithmica/BekosGR16}, triconnected planar graphs of maximum degree~5~\cite{DBLP:conf/esa/0001K19}, maximal planar graphs of maximum degree~6~\cite{Ewald1973}, Halin graphs~\cite{DBLP:journals/mp/CornuejolsNP83}, series-parallel graphs~\cite{DBLP:conf/cocoon/RengarajanM95}, and bipartite planar graphs~\cite{DBLP:journals/dcg/FraysseixMP95}.
Recently, Ganian et al.~\cite{DBLP:conf/icalp/GanianMOPR24} presented a~$2^{O{(\sqrt{n})}}$-time algorithm for testing the existence of a book embedding of an~$n$-vertex graph on two pages--a bound which is asymptotically tight under ETH.
Perhaps the most celebrated result concerning book thickness is the one due to Yannakakis, who showed that every planar graph has book thickness at most~$4$~\cite{DBLP:conf/stoc/Yannakakis86,DBLP:journals/jcss/Yannakakis89}. This upper bound was only recently shown to be tight independently by Yannakakis~\cite{YANNAKAKIS2020241}
and Bekos et al.~\cite{Kaufmann2020}.
For more results on book thickness see also~\cite{DBLP:journals/corr/AlamBK15,DBLP:journals/algorithmica/BekosBKR17,BEKOS2024113690,Bla03,DBLP:conf/stoc/BussS84,DBLP:journals/dcg/DujmovicW07,DBLP:journals/dam/GanleyH01,DBLP:journals/corr/GuanY2018,DBLP:conf/focs/Heath84,Istrail1988a,DBLP:journals/jal/Malitz94a,DBLP:journals/jal/Malitz94}. Finally, we remark that, for arbitrary~$k$, the problem of testing the existence of a book embedding in~$k$ pages is known to be fixed-parameter tractable (FPT) with respect to the vertex cover number~\cite{DBLP:journals/jgaa/BhoreGMN20} and the feedback edge number~\cite{DBLP:conf/icalp/GanianMOPR24}.


\subparagraph{Directed graphs.} On the combinatorial side, a large body of research has directed its focus toward establishing upper and lower bounds on the book thickness of digraphs. Tight upper bounds have long been known for directed trees and unicyclic
digraphs~\cite{HeathPT99}, for series-parallel digraphs~\cite{DBLP:conf/gd/AlzohairiR96,DBLP:journals/algorithmica/GiacomoDLW06}, and for N-free upward planar digraphs~\cite{DBLP:conf/isaac/MchedlidzeS09}.
In~\cite{HeathPT99}, Heath, Pemmaraju and Trenk conjectured a constant upper bound for the book thickness of outerplanar digraphs. The conjecture was first confirmed for several families of outerplanar digraphs by Bhore et al.~\cite{DBLP:journals/ejc/BhoreLMN23} and by N\"ollenburg and Pupyrev~\cite{DBLP:conf/gd/NollenburgP23}, and finally settled by Jungeblut, Merker, and Ueckerdt~\cite{10353199}. The major unsolved question in this area is the one posed more than 30 years ago by Nowakowski and Parker~\cite{Nowakowski89}  of whether planar posets,
and more generally upward planar digraphs, have bounded book thickeness. Recently, Jungeblut, Merker, and Ueckerdt~\cite{DBLP:journals/siamdm/JungeblutMU23} presented the first sublinear upper bound on the
page number of upward planar graphs.
A large body of research has devoted its attention to testing the existence of a book embedding of a DAG in~$k$ pages. The problem is called \textsc{Upward Book Embedding}. 
For more than two decades, the only known \NP-completeness result for the problem was the one shown by Heath and Pemmaraju~\cite{HeathP99} when~$k=6$. Recently, in two subsequent papers, Binucci et al.~\cite{BinucciLGDMP23} and Bekos et al.~\cite{DBLP:journals/tcs/BekosLFGMR23} closed the computational gap by showing \NP-completeness for~$k\geq 3$ and~$k=2$, respectively. These results, together with the linear-time algorithm for testing the existence of 1-page book embeddings of DAGs~\cite{HeathP99}, completely characterize the complexity of the \textsc{Upward Book Embedding} problem with respect to the number of available pages. For~$k=2$, efficient algorithms have been devised for outerplanar and planar triangulated~$st$-graphs~\cite{DBLP:journals/jgaa/MchedlidzeS11}, and the problem is known to be FPT for~$st$-graphs of bounded treewidth~\cite{DBLP:conf/compgeom/BinucciLGDMP19} and for~$st$-graphs whose vertices can be covered by a bounded number of directed paths~\cite{DBLP:conf/isaac/MchedlidzeS09}. For arbitrary~$k$, the \textsc{Upward Book Embedding} problem has been proved FPT with respect to the vertex cover number~\cite{DBLP:journals/ejc/BhoreLMN23,DBLP:journals/ijfcs/LiuLH24}.

\subparagraph{Partitioned book embeddings.} 

In the construction of a book embedding of a (di)graph, one is allowed to select a vertex ordering~$\pi$ and a page assignment~$\sigma$.  Since, as discussed, determining the existence of such a pair~$(\pi,\sigma)$ so to minimize the number of pages is \NP-hard, it is natural to study the complexity of the problem if $\pi$ or $\sigma$ is given as part of the input.  

Determining an assignment on~$k$ pages for a fixed vertex ordering $\pi$ naturally corresponds to a~$k$-coloring problem on circle graphs. Observe that, in this case, the undirected and directed versions retain the same complexity, as a linear-time pre-processing can be used to reject directed instances for which the prescribed vertex ordering is not a topological ordering. The problem is called \textsc{Fixed-Order Book Embedding} and is clearly polynomial-time solvable for~$k \le 2$. Unger~\cite{DBLP:conf/stacs/Unger88} showed that it is \NP-complete for~$k \ge 4$. The complexity of the case~$k=3$ is still unsolved~\cite{DBLP:conf/gd/BachmannRS23,DBLP:journals/jgaa/BachmannRS24}, although a quasi-polynomial-time algorithm has recently been proposed~\cite{AjayGanianLS}. For arbitrary~$k$, FPT algorithms for \textsc{Fixed-Order Book Embedding} have also been presented with respect to the vertex cover number~\cite{DBLP:journals/jgaa/BhoreGMN20} and the pathwidth of the vertex ordering~\cite{DBLP:journals/jgaa/BhoreGMN20,DBLP:journals/tcs/LiuCHW21}.

The complementary problem asks, for a given partition~$\sigma$ of the edges in~$k$ pages, whether there is an ordering~$\pi$ of the vertices that yields a book embedding. In the case of undirected graphs, this problem is polynomial-time solvable for~$k=1$~\cite{HeathP99},~$k=2$~\cite{DBLP:conf/gd/AngeliniBB12,hn-tpbecg-09,DBLP:journals/tcs/HongN18} and \NP-complete for~$k \ge 3$~\cite{DBLP:journals/tcs/AngeliniLN15}.  
This paper studies the complexity of this problem for directed graphs, i.e., for upward book embeddings, in which~$\pi$ is required to be a topological ordering of the input graph. This problem is called \textsc{Partitioned Upward Book Embedding}.  For~$k=1$ it coincides with the ``unpartitioned'' case already solved in~\cite{HeathP99,HeathPT99}.  For~$k>1$, it was first studied systematically by Akitaya et al.~\cite{Akitaya18}.  They connected the problem to applications in map folding~\cite{Morgan2012MapFolding} and attributed it to Edmonds, who,  already in 1997, posed the question specifically for~$k=4$ when the edges assigned to each page form a matching. They showed that the problem is \NP-complete for~$k\ge 3$, it is \NP-complete for~$k\ge 4$ even if the edges in each page form a matching, and they gave a linear-time algorithm for the case~$k=2$ when the edges in each page form a matching.
\cref{table:complexity} provides a comprehensive view of the complexity of the problem of computing upward book~embeddings~of~digraphs.

\subparagraph{Our Contributions.} 
In this paper, we study the \textsc{Partitioned Upward Book Embedding} problem and focus on the unsolved case~$k=2$.  Note that every upward book embedding is an \emph{upward planar drawing}, i.e., a planar drawing where each edge appears as a~$y$-monotone curve. The topological information in an upward planar drawing is represented by the concept of \emph{upward embedding}~\cite{DBLP:journals/algorithmica/BertolazziBLM94}.  Our first main result is a characterization of the upward embeddings that support an upward book embedding in two pages.
We exploit this characterization in several ways and obtain a rich~picture~of~the~complexity~landscape~of~the~problem. 

First, we show that the \textsc{Partitioned Upward Book Embedding} problem remains \NP-complete when~$k=2$, thus closing the complexity gap for the problem and exhibiting a sharp contrast with the undirected case, in which the problem is linear-time solvable~\cite{DBLP:journals/tcs/HongN18}.
Our proof also implies that the problem is W[1]-hard with respect to the treewidth. 


Second, we show that, for an~$n$-vertex digraph with a prescribed planar embedding, the existence of an upward book embedding that respects the given planar embedding can be tested in~$O(n \log^3 n)$ time. Our algorithm is inspired by the network-flow approach of Bertolazzi et al.~\cite{DBLP:journals/algorithmica/BertolazziBLM94} and requires the use of several non-trivial ingredients arising from our characterization.

Finally, we present a cubic-time algorithm that tests the existence of an upward book embedding for a given biconnected directed partial $2$-tree $G$. Our algorithm exploits a compact representation (called {\em descriptor pair}) of the features of an upward embedding of a subgraph of $G$ that are relevant for its extensibility to an upward embedding, satisfying our characterization, of $G$. This allows us to compute, via a bottom-up dynamic programming algorithm built on the SPQ(R)-tree decomposition of $G$, which descriptor pairs are realizable by each subgraph associated with a node of the SPQ(R)-tree.

In the description of our algorithms, we focus on the decision problem, however they can be made constructive in order to yield the desired upward book embeddings, if any.

\begin{table}[tb!]
\begin{tabular}{p{1cm}p{1.3cm}|p{1cm}p{5.5cm}|}
\cline{3-4}
\multicolumn{2}{c|}{\multirow{2}{*}{}}                                              & \multicolumn{2}{c|}{{\textbf{vertex order~$\mathbb{\pi}$}}}              \\ \cline{3-4} 
\multicolumn{2}{c|}{}                                                               & \multicolumn{1}{c|}{{\textbf{fixed}}} & \multicolumn{1}{c|}{{ \textbf{variable}}} \\ \hline
\multicolumn{1}{|c|}{\multirow{2}{*}{\multirow{2}{*}{\rotatebox[origin=c]{90}{\textbf{\shortstack{page\\ assignment~$\mathbb{\sigma}$}}}}}} & \multicolumn{1}{c|}{\textbf{fixed}}    &  \multicolumn{1}{l|}{ O(n+m) time~$\checkmark$ }               &      1 page: O(n) time~\cite{HeathP99}\newline {\setlength{\fboxrule}{1pt}
				\fcolorbox{lipicsblue}{lightblue}{2 pages: \NP-complete (\cref{th:np-complete})}}\newline~$\geq 3$ pages: \NP-complete~\cite{Akitaya18}             \\ \cline{2-4} 
\multicolumn{1}{|c|}{}                                          & \textbf{variable}  & \multicolumn{1}{p{5cm}|}{
2 pages: O(n) time~$\checkmark$ \newline
\fcolorbox{red}{pink}{3 pages: OPEN~\cite{DBLP:journals/jgaa/BachmannRS24}}  \newline
$\geq 4$ pages: \NP-complete~\cite{DBLP:conf/stacs/Unger88}
     }               &       
     2 pages: \NP-complete~\cite{DBLP:journals/tcs/BekosLFGMR23} \newline
    ~$\geq 3$ pages: \NP-complete~\cite{BinucciLGDMP23} 
     \\ \hline
\end{tabular}
\caption{\label{table:complexity}
Known results on the computational complexity of testing the existence of an upward book embedding of a digraph. Results marked with the symbol~$\checkmark$ are trivial.}
\end{table}

\section{Preliminaries} \label{se:preliminaries}

All the graphs considered in this paper are finite and \emph{simple}, i.e., they contain neither self-loops nor multiple edges. For an integer~$k>0$, let~$[k]$ denote the set~$\{1,\dots,k\}$.

\subparagraph*{Digraphs.}
A \emph{directed graph}~$G$, or \emph{digraph}, is a graph whose edges have an assigned orientation.  
We denote a directed edge as~$(u,v)$ if it is oriented from~$u$ to~$v$; then we say that~$u$ is the \emph{tail} and~$v$ is the \emph{head} of the edge. A vertex~$u$ of~$G$ is a \emph{source} (resp.\ a \emph{sink}) if it is the tail (resp.\ the head) of all its incident edges. A vertex that is a source or a sink is a \emph{switch}.
A \emph{directed acyclic graph} (for short, \emph{DAG}) is a digraph with no directed cycle. 

\subparagraph*{Drawings and embeddings.} A \emph{drawing} of a digraph maps each vertex to a distinct point in~$\mathbb{R}^2$ and each edge to a Jordan arc connecting the images of its end-vertices, so that each arc that is the image of an edge does not contain any point that is the image of a vertex, except for its end-points. A drawing of a digraph is \emph{planar} if no two edges cross. A digraph is \emph{planar} if it admits a planar drawing. A planar drawing partitions the plane into topologically connected regions, called \emph{faces}. The unique unbounded face is the \emph{outer face}, whereas the remaining faces are the \emph{internal faces}. The set of edges incident to a face forms its \emph{boundary}. Such edges determine a collection of walks or a single walk if the digraph is connected. Two planar drawings of a connected digraph are \emph{topologically equivalent} if they have the same clockwise order of the edges around each vertex and the same clockwise order of the edges along the boundary of the outer face. For disconnected digraphs, the notion of topological equivalence additionally comprises the \emph{relative positions} of connected components to one another, that is, the information about the containment of each connected component inside the regions of the plane delimited by the face boundaries
of another connected component\footnote{In the literature, topological equivalence between two drawings of a disconnected digraph sometimes only requires topological equivalence between the drawings of the connected components of the digraph, without taking into account the relative positions of the components. This relaxed notion of equivalence would make our research easier, as the existence of an upward book embedding of digraph would boil down to the existence of an upward book embedding for each connected component of the digraph.}. A \emph{planar embedding} is a class of topologically equivalent planar drawings. 
All the drawings in the same equivalence class \emph{respect} the embedding defined by the class.
A \emph{plane digraph} is a planar digraph together with a planar embedding. A planar embedding of a digraph~$G$ is \emph{bimodal} if, for every vertex~$v$, the edges of~$G$ that have their tail at~$v$ are consecutive in the clockwise order of the edges incident to~$v$. 



\subparagraph*{Upward planarity.} 
A drawing of a digraph is \emph{upward} if every edge is represented by a Jordan arc that is strictly increasing in the~$y$-direction from its tail to its head. A digraph that admits an upward planar drawing is an \emph{upward planar digraph}. Clearly, an upward planar digraph is a DAG, as a directed cycle cannot be drawn upward. A \emph{planar~$st$-graph} is a DAG with one source~$s$ and one sink~$t$ that admits a planar embedding in which~$s$ and~$t$ are on the boundary of the outer face. A planar~$st$-graph equipped with a planar embedding such that~$s$ and~$t$ are incident to the outer face is a \emph{plane~$st$-graph}. A plane~$st$-graph~$G$ always admits an upward planar drawing respecting the planar embedding of~$G$~\cite{DBLP:journals/tcs/BattistaT88}. A face~$f$ of a planar embedding~$\mathcal E$ of a plane digraph is an \emph{$st$-face} if its boundary consists of two directed paths. These are called \emph{left path} and \emph{right path} of~$f$, and are denoted by~$\ell_f$ and~$r_f$, respectively, where the edge of~$\ell_f$ incident to the source~$s_f$ of the cycle bounding~$f$ immediately precedes the edge of the right path incident to~$s_f$ in the clockwise order of the edges incident to~$s_f$ if~$f$ is an internal face, or immediately follows it if~$f$ is the outer face. The left and right path of the outer face are also called \emph{leftmost} and \emph{rightmost} path of~$G$, respectively. 



Let~$\mathcal E$ be a bimodal planar embedding of a digraph~$G=(V,E)$.  For~$v\in V$, an \emph{angle at~$v$} is an ordered pair~$(e_1,e_2)$ of edges incident to~$v$ such that~$e_2$ immediately follows~$e_1$ in the clockwise order of edges around~$v$.  An angle~$(e_1,e_2)$ is a \emph{switch angle} if~$v$ is the head or the tail of both~$e_1$ and~$e_2$.  An angle that is not a switch angle is a \emph{flat angle}. For a vertex~$v$, we denote by~$A_{\mathcal E}(v)$ the set of angles incident to~$v$. Also, we denote by~$\mathcal{A}_{\mathcal E} = \bigcup_{v \in V} A_{\mathcal E}(v)$ the set of angles of~$\mathcal E$ and, for a face~$f$, by~$A_{\mathcal E}(f)$ the set of angles incident to~$f$. 

An \emph{angle assignment for~$\mathcal E$} is a function~$\lambda \colon \mathcal{A}_{\mathcal E} \to \{-1,0,1\}$.  
An angle assignment is \emph{upward-consistent} if it satisfies the following conditions:
\begin{enumerate}[label={\bf C\arabic*}]
    \item \label{cond:flat-assignment} For each angle~$a\in \mathcal{A}_{\mathcal E}$:~$\lambda(a) = 0$ if and only if~$a$ is flat.
    \item \label{cond:vertex-assignment} For each vertex~$v \in V$:~$\sum_{a \in A_{\mathcal E}(v)} \lambda(a) = 2-\deg(v)$.
    \item \label{cond:face-assignment} For each face~$f$ of~$\mathcal E$:~$\sum_{a \in A_{\mathcal E}(f)} \lambda(a) = -2$ if~$f$ is an internal face and~$\sum_{a \in A_{\mathcal E}(f)} \lambda(a) = 2$ if~$f$ is the outer face.
\end{enumerate}

\begin{figure}
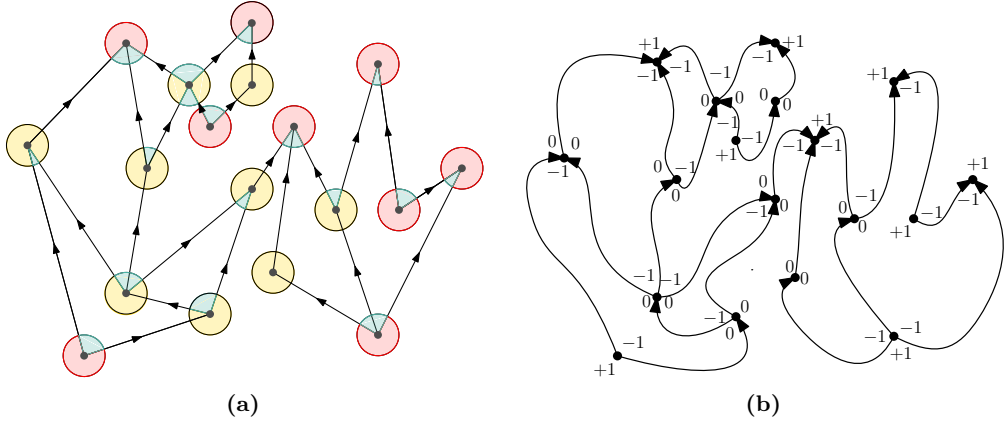

    \centering
    \begin{subfigure}{.45\textwidth}\centering
\includegraphics[page=7,width=\textwidth]{figures/characterization.pdf}
    \subcaption{\label{fig:upward-drawing}}
    \end{subfigure}
    \hfil
    \begin{subfigure}{.45\textwidth}\centering
\includegraphics[page=8,width=\textwidth]{figures/characterization.pdf}
    \subcaption{\label{fig:upward-consistent}}
    \end{subfigure}
    \caption{(a) An upward planar drawing~$\Gamma$ and its big, small, and flat angles  depicted as red, green, and yellow sectors, respectively. (b) The upward-consistent angle assignmentas~$\lambda_\Gamma$ defined~by~$\Gamma$.}\label{fig:angle-assignment}
\end{figure}

Let~$\Gamma$ be an upward planar drawing of a digraph~$G$ with planar embedding~$\mathcal E$; refer to \cref{fig:angle-assignment}.  Clearly,~$\mathcal E$ is bimodal.  Moreover, it defines an angle assignment~$\lambda_\Gamma$ as follows.  Let~$a=(e_1,e_2)$ be an angle of~$\mathcal E$. If~$a$ is flat, then we define~$\lambda_\Gamma(a) = 0$. Otherwise, consider the geometric angle~$\alpha$ in~$\Gamma$ corresponding to~$a$, i.e., lying clockwise after~$e_1$ and before~$e_2$.  We define~$\lambda_\Gamma(a) = -1$ if~$\alpha<\pi$ and~$\lambda_\Gamma(a) = 1$ if~$\alpha>\pi$.  Observe that~$\alpha\neq \pi$, since~$a$ is a switch angle.  It is not hard to see that~$\lambda_\Gamma$ is upward-consistent~\cite{DBLP:journals/algorithmica/BertolazziBLM94,DBLP:journals/siamdm/DidimoGL09}.

A pair~$(\mathcal E, \lambda)$ of a planar embedding~$\mathcal E$ and an angle assignment~$\lambda$ is an \emph{upward embedding} if there exists an upward planar drawing~$\Gamma$ with embedding~$\mathcal E$ such that~$\lambda = \lambda_\Gamma$.  
We have that angles~$a$ with~$\lambda(a) = 0$ are flat; also, we call an angle~$a$ \emph{large} if~$\lambda(a) = 1$ and \emph{small} if~$\lambda(a) = -1$.  An angle assignment is completely determined by {\em assigning} each switch to one of its incident angles, which corresponds to making that angle large; the switch angles to which no switch is assigned are small.

\begin{theorem}[\cite{DBLP:journals/algorithmica/BertolazziBLM94,DBLP:journals/siamdm/DidimoGL09}]
    \label{th:upward-conditions}
    Let~$G$ be a digraph, let~$\mathcal E$ be a planar embedding of~$G$, and let~$\lambda$ be an angle assignment for~$\mathcal E$.  Then the pair~$(\mathcal E, \lambda)$ is an upward embedding if and only if~$\mathcal E$ is bimodal and~$\lambda$ is upward-consistent.
\end{theorem}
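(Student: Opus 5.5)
The plan is to prove the two directions separately. The ``only if'' direction is mostly geometric bookkeeping. Let $\Gamma$ be an upward planar drawing realizing $(\mathcal E,\lambda)$. Bimodality follows because, around each vertex $v$, outgoing edges leave upward and incoming edges arrive from below. A small circle around $v$ therefore meets the outgoing edges in its upper half and the incoming edges in its lower half, so each group is consecutive.

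For upward-consistency of $\lambda=\lambda_\Gamma$, we check the three conditions in turn.
\begin{itemize}
\item Condition~\ref{cond:flat-assignment} holds by definition of $\lambda_\Gamma$.
\item For Condition~\ref{cond:vertex-assignment}: at a non-switch vertex all angles are flat except exactly two switch angles, and both are small. Indeed, the angle between the two groups on each side is less than $\pi$, which gives a sum of $-2 \cdot$\dots{} More precisely, a non-switch vertex of degree $\deg(v)$ has $\deg(v)-2$ flat angles... We handle this via the standard argument: at a switch, all $\deg(v)$ angles are switch angles and exactly one of them contains the open upper (or lower) half-plane, hence exceeds $\pi$, while the others are below $\pi$. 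This gives a sum of $1-(\deg(v)-1)=2-\deg(v)$. At a non-switch vertex, the two switch angles are small, and one has to reconcile this with the count; this is the place where I would recheck the convention in \cite{DBLP:journals/algorithmica/BertolazziBLM94}. In the standard formulation only sources and sinks carry large/small labels, the sum condition is imposed at switches, and non-switch vertices are handled by treating flat angles as contributing to the count appropriately.
\item For Condition~\ref{cond:face-assignment}: walk around the boundary of a face and track the turning of the $y$-monotone curves. Replacing the drawing locally by a polygon, the sum of exterior turning angles is $\pm 2\pi$. Each small switch angle contributes a turn of about $+\pi$ and each large one about $-\pi$, while flat angles contribute nothing in the limit. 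This yields $-2$ for internal faces and $+2$ for the outer face.
\end{itemize}

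The ``if'' direction is the substantive part, and I would follow the augmentation strategy of Bertolazzi et al. Given a bimodal $\mathcal E$ and an upward-consistent $\lambda$, we repeatedly insert edges inside faces to make every face an $st$-face. The construction handles each face $f$ with more than one source-angle pattern as follows.
\begin{enumerate}
\item Find along the boundary of $f$ a subsequence small--small--large, or the symmetric one with respect to sources and sinks.
\item Add an edge from the vertex at the large angle to the appropriate switch, splitting $f$ into two faces.
\item Define the new $\lambda$ so that the two new faces satisfy Condition~\ref{cond:face-assignment} and the endpoints still satisfy Condition~\ref{cond:vertex-assignment}, while bimodality is preserved.
\end{enumerate}
The existence of such a pattern follows from the face sum: an internal face has two more small angles than large ones, so by a cyclic counting argument some large angle is preceded by two small ones. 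Edges are added in the direction that keeps the graph acyclic, going from a large-angle source upward.

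Finally, we add a super source and a super sink in the outer face. The outer face has sum $2$, which leaves exactly the room for these two extra vertices. The result is a planar $st$-graph whose faces are all $st$-faces. By the result of \cite{DBLP:journals/tcs/BattistaT88} it has an upward planar drawing with that embedding, and in that drawing the large angles of every $st$-face sit at its source and sink, matching $\lambda$. Removing the added edges then yields a drawing $\Gamma$ with $\lambda_\Gamma=\lambda$.

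I expect the main obstacle to be proving that the augmentation never creates a directed cycle or violates bimodality, and that the final drawing reproduces $\lambda$ exactly rather than some other assignment. I would handle this by an induction on the number of switch angles per face, maintaining the invariant that the augmented angle assignment is upward-consistent.
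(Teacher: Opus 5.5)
Your overall plan (a turning-number argument for necessity, and a Bertolazzi-style augmentation to a plane $st$-graph for sufficiency) is the route of the cited works. The paper does not reprove this theorem, but Steps~1 and~2 in the proof of \cref{th:characterization} reuse that machinery. Your sufficiency sketch, however, fails at its central step. You insert the new edge \textit{out of} the large-angle source $v$. That is why it trivially creates no cycle, but then $v$ remains a source, so the switch the step should remove survives. Moreover, if the other endpoint is the vertex $u$ carrying the small source-type angle two switch angles before $v$ along $f$, you are inserting an incoming edge between two consecutive outgoing edges of $u$. This destroys bimodality if $u$ is not a source, and violates C2 at $u$ (a $-1$ replaced by two $0$'s) if it is. The correct step adds $(u,v)$, i.e., an edge \textit{into} $v$, with both new angles at $u$ small and both new angles at $v$ flat (and $(v,u)$ when $v$ is a sink); this preserves bimodality and C1--C3 and removes one switch. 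With this orientation a directed cycle appears exactly when $v$ already reaches $u$, so acyclicity cannot come from orienting edges cleverly. What is missing is the global fact that a bimodal embedding with an upward-consistent assignment has no directed cycle. For instance, suppose $C$ were a directed cycle and let $H$ consist of $C$ and everything inside it. Then the angles of $H$ in its outer face are flat. Every face inside $C$ is an internal face of $G$ and so has $2\ell_f+2$ switch angles, where $\ell_f$ is its number of large angles. By C1--C2 there is exactly one large angle per switch of $G$ and none elsewhere. Let $s_H$ be the number of switches strictly inside $C$. Counting the switch angles of $H$ at vertices gives $2|E_H|-2|V_H|+2s_H$, and counting them at faces gives $2s_H+2(|F_H|-1)$. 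Hence $|V_H|-|E_H|+|F_H|=1$, contradicting Euler's formula.

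Second, the outer face is never turned into an $st$-face. Your cyclic counting yields a small--small--large pattern only in internal faces. The outer face has two more large than small angles and may keep many switches, with sources and sinks interleaved along its boundary. For example, the path $a\to b\leftarrow c\to d\leftarrow e$ can be drawn upward so that its boundary walk meets large angles at $a,b,c,d,e$ in this order. Then a new source joined to all sources and a new sink joined from all sinks cannot both be embedded in the outer face without crossings, and ``the outer face has sum~$2$'' does not justify this step. The fix, which is Step~1 of the proof of \cref{th:characterization}, is to first enclose $G$ in a triangle $a\to b\to c$, $a\to c$, plus a path $a\to d\to s$ ending at a source $s$ whose large angle is on the outer face. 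The new outer face is an $st$-face, and the old outer face becomes an internal face of sum $-2$, to which the internal-face reduction applies. Smaller points:
\begin{itemize}
\item In the necessity part your count at non-switch vertices is wrong. In a bimodal embedding such a vertex has exactly two flat angles and $\deg(v)-2$ switch angles. Each of these is small because it is bounded by two consecutive edges on the same side of the horizontal line through $v$, which gives $2-\deg(v)$. C2 holds at every vertex; there is no other convention to reconcile.
\item Internal faces of a plane $st$-graph have no large angles at all.
\item After deleting the added edges you must still check that merged angles get the labels prescribed by $\lambda$. At $v$ the two flat angles merge into the unique angle containing the downward direction, hence large. At $u$ two small angles merge into a small one, since either $u$ is not a switch or its large angle, which exceeds $\pi$, lies elsewhere.
\end{itemize}
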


A digraph equipped with an upward embedding is an \emph{upward plane digraph}. 

In the paper, we often say that a face or an angle is \emph{to the left} (or \emph{to the right}) of a directed path, possibly a single edge. This means that the face or angle is to the left (to the right) of the directed path when traversing the path according to its orientation.

\subparagraph*{Upward Book Embeddings.} 
A \emph{partitioned digraph} is a digraph~$G=(V,\bigcup^k_{i=1} E_i)$, whose edge set is partitioned into~$k$ sets.
An \emph{upward book embedding} (in~$k$ pages)  of an~$n$-vertex partitioned digraph~$G=(V,\bigcup^k_{i=1} E_i)$ is a bijection~$\pi \colon V \leftrightarrow \{1, \dots, n\}$ such that:
\begin{enumerate}[label={\bf(\roman*)}]
\item 
for each edge~$e=(u,v)$, it holds that~$\pi(u) < \pi(v)$, i.e., the tail of~$e$ precedes the head of~$e$ according to~$\pi$, and 
\item for any~$i \in [k]$, no two edges~$(u,v), (w,x) \in E_i$ cross, where~$(u,v)$ and~$(w,x)$ \emph{cross} if~$\pi(u) < \pi(w) < \pi(v) < \pi(x)$ or~$\pi(w) < \pi(u) < \pi(x) < \pi(v)$, i.e., their end-vertices interleave in the total order of~$V$ defined~by~$\pi$.
\end{enumerate}
In this paper, we focus on the case~$k=2$ and denote the sets~$E_1$ and~$E_2$ as~$L$ and~$R$, respectively. We often omit that our upward book embeddings are in two pages and just talk about upward book embeddings.  Also, we call the edges in~$L$ \emph{left edges} and the edges in~$R$  \emph{right edges}. This terminology is motivated by the fact that an upward book embedding~$(\pi,\sigma)$ of~$G = (V,L \cup R)$ determines an upward planar drawing of~$G$ as follows:  
The vertices of~$G$ lie along a vertical line, called \emph{spine}, so that the~$y$-coordinate of each vertex~$v$ is~$\pi(v)$, and each edge~$(u,v)$ in~$L$ (resp.\ in~$R$) is drawn as a semi-circle with diameter~$\pi(v)-\pi(u)$ to the left (resp.\ to the right) of the spine. We often implicitly refer to such a representation and say that in an upward book embedding the edges lie to the left or to the right of the spine. In all the illustrations, the edges in $L$ are blue and solid, while the edges in $R$ are red and dashed.


An upward planar drawing \emph{respects} a planar embedding~$\mathcal E$ if it belongs to the equivalence class~$\mathcal E$. An upward book embedding~$\Gamma$ \emph{respects} a planar embedding~$\mathcal E$ if the upward planar drawing associated with~$\Gamma$ respects~$\mathcal E$, and it \emph{respects} an upward embedding~$(\mathcal E,\lambda)$ if~it respects~$\mathcal E$ and~$\lambda_\Gamma = \lambda$.


Let~$G=(V,L \cup R)$ be a partitioned digraph with an upward embedding~$(\mathcal E,\lambda)$. A vertex~$v \in V$ is \emph{4-modal} if it satisfies the following condition: 
\begin{enumerate}[label={\bf(\roman*)}]
\item if~$v$ is a non-switch vertex, then in clockwise order around~$v$ in~$\mathcal E$ we have all the outgoing left edges, all the outgoing right edges, all the incoming right edges, and all the incoming left edges; one of the former two sets and/or one of the latter two sets might be empty.
\item if~$v$ is a source (resp.\ a sink), then in clockwise order around~$v$ in~$(\mathcal E,\lambda)$ we have the large angle at~$v$, all the outgoing left edges, and all the outgoing right edges (resp.\ the large angle at~$v$, all the incoming right edges, and all the incoming left edges); one of the two sets of outgoing (resp.\ incoming) edges might be empty.
\end{enumerate}
We say that~$(\mathcal E,\lambda)$ is \emph{4-modal} if all the vertices of~$G$ are 4-modal.


\begin{property}\label{pr:4-modal}
Let~$\Gamma$ be an upward book embedding of a partitioned digraph~$G=(V,L \cup R)$ and let~$(\mathcal E,\lambda)$ be the upward embedding of~$G$ defined by~$\Gamma$. Then~$(\mathcal E,\lambda)$ is 4-modal.   
\end{property}

\begin{proof}
Consider any non-switch vertex~$v$ of~$G$. Since~$\Gamma$ is an upward book embedding, the left (right) edges are to the left (resp.\ right) of the spine of~$\Gamma$. Also, since~$\Gamma$ is an upward planar drawing, the edges incoming into~$v$ (outgoing from~$v$), lie below (resp.\ above) the horizontal line through~$v$. Hence,  the outgoing left, outgoing right, incoming right, and incoming left edges are incident into~$v$ in the second, first, fourth, and third quadrant, respectively. Thus,~$v$ is 4-modal. If~$v$ is a source (a sink), we analogously have that the outgoing left and outgoing right edges (resp.\ the incoming right and incoming left edges) incident to~$v$ lie in the second and first (resp.\ fourth and third) quadrant, respectively, while the large angle at~$v$ occupies the  third and fourth (resp.\ first and second) quadrant, hence~$v$ is 4-modal.
\end{proof}

\begin{figure}[tb!]
    \centering
    \begin{subfigure}{.35\textwidth}\centering
\includegraphics[page=1,width=\textwidth]{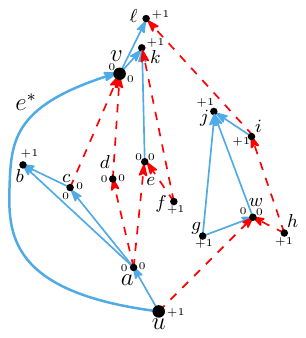}
    \subcaption{\label{fig:spq-tree-graph}}
    \end{subfigure}
    \hfill
    \begin{subfigure}{.64\textwidth}\centering
\includegraphics[page=2,height=.67\textwidth]{figures/Preliminaries.pdf}
    \subcaption{\label{fig:spq-tree-ecomposition}}
    \end{subfigure}
\caption{\label{fig:spq-tree}(a) An upward planar drawing~$\Gamma$ of a biconnected partitioned directed partial~$2$-tree $G$. The labeling~$\lambda_\Gamma$ of the angles determined by~$\Gamma$ is shown; the missing labels are equal to~$-1$.
    (b) The SPQ-tree of~$G$ rooted at the Q-node corresponding to the edge~$e^*$ of~$G$.}
\end{figure}

\subparagraph*{Partial 2-trees.} The class of (undirected) \emph{partial~$2$-trees} can be defined in several equivalent ways. Namely, a graph is a partial~$2$-tree if and only if:
\begin{itemize}
\item it has treewidth at most two;
\item it excludes~$K_4$ as a minor; or 
\item it is a subgraph of a~$2$-tree, which is a graph that can be obtained starting from an edge and repeatedly inserting a vertex of degree two adjacent to two adjacent vertices. 
\end{itemize}
Notably, the class of partial~$2$-trees includes the \emph{series-parallel graphs}, see, e.g., \cite{DBLP:journals/ijcga/BertolazziCBTT94,DBLP:journals/dcg/Biedl11,DBLP:journals/dmtcs/Frati10}.

Let~$G$ be a biconnected partial~$2$-tree and let~$e^*$ be an edge of~$G$ with end-vertices~$u^*$ and~$v^*$; refer to \cref{fig:spq-tree}. The \emph{SPQ-tree~$T$ of~$G$ with respect to~$e^*$} is a rooted tree that describes a recursive decomposition of~$G$ into smaller partial~$2$-trees; it is a specialization of the well-known \emph{SPQR-tree}, which is defined for general biconnected planar graphs~\cite{dt-opl-96,gm-lti-00}.

The root of~$T$ is a Q-node~$\rho^*$ associated with the entire graph~$G$ and has a single child~$\sigma^*$. Define~$G-e^*$ as the \emph{pertinent graph} of~$\sigma^*$, and let~$u^*$ and~$v^*$ be the \emph{poles} of both~$\sigma^*$ and~$\rho^*$. The remainder of the definition of~$T$ proceeds recursively as follows. Suppose we are given a quadruple~$\langle \mu, u, v, G_{\mu} \rangle$, where~$\mu$ is a node of~$T$ with poles~$u$ and~$v$, and~$G_{\mu}$ is its pertinent graph. Initially, this quadruple is~$\langle \sigma^*, u^*, v^*, G - e^* \rangle$. Three cases can occur:

\begin{itemize}
\item If~$G_{\mu}$ is a single edge~$(u,v)$, then~$\mu$ is a Q-node representing that edge;~$\mu$ is a leaf of~$T$.

\item If~$G_{\mu}$ is not biconnected, then~$\mu$ is an S-node. Let~$w$ be a cut-vertex of~$G_{\mu}$. Removing~$w$ splits~$G_{\mu}$ into two connected components: one,~$G^u_{\mu}$, containing~$u$, and the other,~$G^v_{\mu}$, containing~$v$.  
Then~$\mu$ has two children~$\nu_1$ and~$\nu_2$ in~$T$. The pertinent graph~$G_{\nu_1}$ (resp.~$G_{\nu_2}$) is the subgraph of~$G_{\mu}$ induced by~$\{w\} \cup V(G^u_{\mu})$ (resp. by~$\{w\} \cup V(G^v_{\mu})$). The poles of~$\nu_1$ are~$u$ and~$w$, and those of~$\nu_2$ are~$w$ and~$v$. The construction of~$T$ recurses on~$\langle \nu_1, u, w, G_{\nu_1} \rangle$ and on~$\langle \nu_2, w, v, G_{\nu_2} \rangle$.

\item If~$G_{\mu}$ is biconnected, then~$\mu$ is a P-node.
\begin{itemize}
	\item If~$(u,v)$ is not an edge of~$G_{\mu}$, then removing~$u$ and~$v$ splits~$G_{\mu}$ into~$k$ connected components~$G^1_{\mu}, \dots, G^k_{\mu}$, with~$k\geq 2$. Then~$\mu$ has~$k$ children~$\nu_1, \dots, \nu_k$, where~$G_{\nu_i}$ is the subgraph of~$G_{\mu}$ induced by~$\{u,v\}\cup V(G^i_{\mu})$; the poles of~$\nu_i$ are~$u$ and~$v$.
	\item If~$(u,v)$ is an edge of~$G_{\mu}$, then removing~$u$ and~$v$ leaves~$k-1$ components~$G^1_{\mu}, \dots, G^{k-1}_{\mu}$, with~$k\geq 2$. In this case,~$\mu$ has~$k$ children~$\nu_1, \dots, \nu_k$: for~$i=1,\dots,k-1$, the pertinent graph~$G_{\nu_i}$ is the subgraph of~$G_{\mu}$ induced by~$\{u,v\}\cup V(G^i_{\mu})$, excluding the edge~$(u,v)$, while~$G_{\nu_k}$ is the edge~$(u,v)$. Again, all nodes~$\nu_i$ have poles~$u$ and~$v$.
\end{itemize}
The construction of $T$ recurses on each quadruple~$\langle \nu_i, u, v, G_{\nu_i} \rangle$.
\end{itemize}

Observe that every S-node has two children, which may themselves be S-nodes. The SPQ-tree of~$G$ is in general not unique: Different choices for the cut-vertex~$w$ of the pertinent graph of an S-node and different choices for the reference edge might result in different SPQ-trees. In this paper, we assume that the choice of the cut-vertex~$w$ for each S-node of a rooted SPQ-tree is performed arbitrarily. On the other hand, the choice of the reference edge which serves as the root of the SPQ-tree will be done in all possible ways, as the reference edge will be forced to be incident to the outer face. If~$G$ has~$n$ vertices, then its SPQ-tree has~$O(n)$ nodes and can be computed in~$O(n)$ time~\cite{dt-opl-96}.


A \emph{directed partial~$2$-tree} is a digraph whose underlying graph is a partial~$2$-tree, where the underlying graph is the undirected graph obtained by ignoring the edge directions. An SPQ-tree of a biconnected directed partial~$2$-tree~$G$ is an SPQ-tree of its underlying graph, although the edges of the pertinent graph of each node are oriented as in~$G$.

\section{Characterization for Upward Embeddings} \label{se:characterization}


In this section we characterize the upward embeddings of a partitioned digraph that allow for the construction of an upward book embedding. We first present our characterization for plane~$st$-graphs, and then extend it to general plane digraphs.
If~$G$ is a plane~$st$-graph with planar embedding~$\mathcal E$, there is a unique upward-consistent angle-assignment~$\lambda$ which turns~$\mathcal E$ into an upward embedding~$(\mathcal E,\lambda)$. Indeed, vertices different from~$s$ and~$t$ do not have any incident large angle, and the large angles at~$s$ and~$t$ are necessarily those in the outer face of~$\mathcal E$. Thus, for a planar~$st$-graph, we can avoid talking about angle-assignments and upward embedding, and just consider planar embeddings.



\begin{figure}
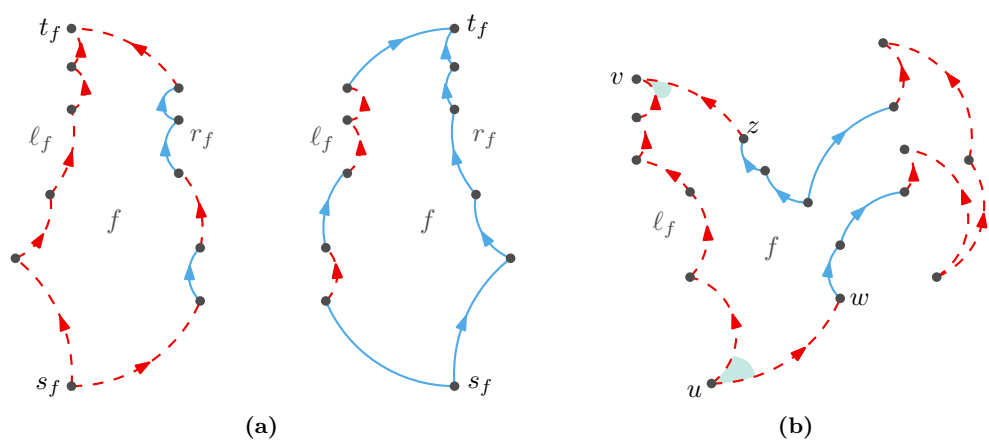

    \centering
    \begin{subfigure}[t]{.49\textwidth}
    \centering
\includegraphics[page=5]{figures/characterization.pdf}
    \subcaption{}
    \label{fig:impossible-st}
    \end{subfigure}
    \hfil
    \begin{subfigure}[t]{.49\textwidth}
    \centering
\includegraphics[page=4]{figures/characterization.pdf}
    \subcaption{}
    \label{fig:impossible-dag}
    \end{subfigure}
\caption{Impossible faces in (a) a plane~$st$-graph and (b) an upward plane digraph.}
\end{figure}

Let~$G=(V,L \cup R)$ be a partitioned plane~$st$-graph, let~$\mathcal E$ be the planar embedding of~$G$, and let~$f$ be a face of~$\mathcal E$. We say that~$f$ is \emph{impossible} if the left path~$\ell_f$ of~$f$ only consists of edges in~$R$ and~the right path~$r_f$ of~$f$ is not a single edge, or if~$r_f$ only consists of edges in~$L$ and~$\ell_f$ is not a single edge; see \cref{fig:impossible-st}. We have the following.


\begin{theorem}\label{th:st-characterization}
Let~$G=(V,L \cup R)$ be a partitioned plane~$st$-graph and let~$\mathcal E$ be the planar embedding of~$G$. Then~$G$ admits an upward book embedding respecting~$\mathcal E$ if and only if~$\mathcal E$ is 4-modal and no face of~$\mathcal E$  is impossible. 
\end{theorem}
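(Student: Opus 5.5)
Necessity and sufficiency are handled separately.

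\textbf{Necessity.} Let $\Gamma$ be an upward book embedding respecting $\mathcal E$. Property~\ref{pr:4-modal} gives 4-modality. Suppose some face $f$ is impossible, say $\ell_f$ consists only of right edges and $r_f$ has at least two edges; the other case is symmetric. All edges of $\ell_f$ lie to the right of the spine, so $\ell_f$ is drawn in the right half-plane from $s_f$ to $t_f$. Since $r_f$ has at least two edges, it has an internal vertex $w$ with $\pi(s_f)<\pi(w)<\pi(t_f)$, and $w$ lies on the spine. Because $f$ lies to the right of $\ell_f$ and to the left of $r_f$, the path $r_f$ lies between $\ell_f$ and $f$. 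I will argue that $w$ is enclosed by the closed curve formed by $\ell_f$ and the spine segment $[\pi(s_f),\pi(t_f)]$: the arcs of $\ell_f$ are nested semicircles in the right page, and together they cover the whole interval $(\pi(s_f),\pi(t_f))$ from the right. Then $r_f$, being to the right of $\ell_f$ in the rotation at $s_f$, would have to cross $\ell_f$ to reach $w$, or else $f$ would not be the face bounded between them. Making this precise through the rotation at $s_f$ and at $t_f$ (using 4-modality) gives a contradiction.

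\textbf{Sufficiency.} I proceed by induction on the number of edges, in the style of $st$-ordering via ear decomposition. First I add a dummy edge to make $G$ a single $st$-face plus smaller parts. The plan is to construct $\pi$ as a topological order in which, for every face $f$, the vertices of $\ell_f\setminus\{s_f,t_f\}$ and of $r_f\setminus\{s_f,t_f\}$ are placed so that the left edges of $f$'s boundary nest correctly. Concretely, I use the standard correspondence of planar $st$-graphs: define $\pi$ as a linear extension of the partial order generated by the edges together with extra relations imposed face by face. In each internal face $f$, the spine should pass through $f$ exactly where the boundary switches between left and right pages.

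The key lemma is this. For a non-impossible face $f$ in a 4-modal embedding, the boundary of $f$ contains a ``spine crossing point''. Either $\ell_f$ contains a left edge or $r_f$ is a single edge, and similarly for $r_f$. From this I will derive a directed ``spine path'' through the dual: add to $G$ an auxiliary edge inside each face $f$ from the last vertex of $\ell_f$ incident to a left edge to the appropriate vertex of $r_f$ (or the symmetric choice), so that the augmented graph has a Hamiltonian directed path compatible with the page assignment. Equivalently, I would do induction by removing the rightmost path's last ear. Remove a face $f$ incident to the outer face on the right whose right path is on the rightmost path of $G$, delete the internal vertices of $r_f$, and apply induction to get $\pi'$. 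Then reinsert $r_f$: if $r_f$ is a single edge, check that it goes in its page without crossing. Otherwise, $\ell_f$ contains a left edge $e=(x,y)$ (by non-impossibility, using the right/left cases). In that case I insert the internal vertices of $r_f$ consecutively just after $x$ in $\pi'$, with edges of $r_f$ in their prescribed pages, and use 4-modality at $s_f$, $t_f$ and $x$ to verify there are no crossings.

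\textbf{Main obstacle.} I expect the main obstacle to be choosing the insertion position in the sufficiency step. Right edges of $r_f$ must not cross right edges of $\ell_f$ that span the insertion point, and left edges of $r_f$ must not cross anything. The invariant to maintain is that the rightmost path is ``visible from the right page'': no right edge of the current drawing spans a vertex of the rightmost path except edges of the rightmost path itself. Symmetrically for left, I would maintain the invariant for both outer paths and possibly choose the ear on the left side when the right choice fails.
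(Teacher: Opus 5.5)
\textbf{Necessity is fine; the sufficiency step has a gap.} Your necessity argument is essentially the paper's: an internal vertex of $r_f$ would have to be a spine point inside the strip between $s_f$ and $t_f$ and to the right of the all-right path $\ell_f$, and no such point exists. (The semicircles of $\ell_f$ are consecutive, not nested, but the conclusion stands.) Peeling off the right path of a face whose right path lies on the rightmost path is also fine; it is the reverse of the paper's construction. The gap is the one concrete rule you commit to: place all internal vertices of $r_f$ consecutively just after $x$, for some left edge $(x,y)$ of $\ell_f$.

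\textbf{Counterexample to your insertion rule.} Take the cycle with $\ell_f=(s_f,a,t_f)$, both edges in $L$, and $r_f=(s_f,c,d,t_f)$ with $(s_f,c),(d,t_f)\in L$ and $(c,d)\in R$. This embedding is 4-modal and has no impossible face. Respecting the rotation at $s_f$ forces the left semicircle of $(s_f,a)$ to enclose that of $(s_f,c)$, so $\pi(c)<\pi(a)$. Symmetrically, at $t_f$ we need $\pi(a)<\pi(d)$. The only valid order is therefore $s_f,c,a,d,t_f$, in which the internal vertices of $r_f$ are not consecutive. Concretely, with $x=s_f$ your rule makes $(d,t_f)$ cross $(s_f,a)$, and with $x=a$ it makes $(s_f,c)$ cross $(a,t_f)$. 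Your visibility invariant only controls right edges, so it does not detect these left-edge crossings. Switching to an ear on the left side does not help either, since the example has a single internal face.

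\textbf{The missing idea.} You need a case distinction on the pages of the end edges $(s_f,s^r_f)$ and $(t^r_f,t_f)$ of $r_f$, combined with 4-modality at $s_f$ and $t_f$.
\begin{itemize}
\item If both end edges are right, your rule works for any left edge of $\ell_f$; this is the paper's first case.
\item If $(s_f,s^r_f)\in L$, then 4-modality at $s_f$ forces $(s_f,s^\ell_f)\in L$, because the angle of $f$ at $s_f$ is not large. If moreover $(t^r_f,t_f)\in R$, the internal vertices must go between $s_f$ and $s^\ell_f$, protected by that edge. So $(x,y)$ is not a free choice: it must be $(s_f,s^\ell_f)$. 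Symmetrically, it must be $(t^\ell_f,t_f)$ when only the last edge of $r_f$ is left.
\item If both end edges are left, there are two subcases. If all edges of $r_f$ are in $L$, non-impossibility forces $\ell_f$ to be a single (left) edge, and $r_f$ goes between $s_f$ and $t_f$. Otherwise $r_f$ contains a right edge $(w,z)$, and $r_f$ must be split. The subpath up to $w$ goes between $s_f$ and $s^\ell_f$, the subpath from $z$ goes between $t^\ell_f$ and $t_f$, and $(w,z)$ passes on the right page around the interior of $\ell_f$.
\end{itemize}
This last split configuration is exactly what consecutive insertion cannot produce.

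\textbf{A smaller point.} In the single-edge case, "check that it goes in its page" should be the observation that $r_f$ must be a right edge. By simplicity, $\ell_f$ has an internal vertex, so a left $r_f$ would make $f$ impossible.
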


\begin{proof}
We first prove the necessity. \cref{pr:4-modal} ensures the necessity of the 4-modality of~$\mathcal E$. Next, consider an upward book embedding~$\Gamma$ respecting~$\mathcal E$ and suppose, for a contradiction, that there exists a face~$f$ of~$\mathcal E$ such that~$\ell_f$ only consists of edges in~$R$ and~$r_f$ contains an internal vertex~$v$. Since~$\ell_f$ only consists of edges in~$R$, its representation in~$\Gamma$ lies entirely to the right of the spine of~$\Gamma$, except at its vertices. Since~$\Gamma$ is an upward planar drawing respecting~$\mathcal E$, we have that~$v$ has to lie in the strip~$S$ delimited by the horizontal lines through the source and the sink of~$f$. However, this is not possible since~$v$ has to lie to the right of~$\ell_f$ (given that~$\ell_f$ and~$r_f$ are respectively the left and right path of~$f$) and since no point of the spine lies in the strip~$S$ and to the right of the curve representing~$\ell_f$ in~$\Gamma$. A contradiction can be achieved analogously if~$\mathcal E$ contains a face such that~$r_f$ only consists of edges in~$L$ and~$\ell_f$ contains an internal vertex.

We now prove the sufficiency. Every plane~$st$-graph can be constructed starting from its leftmost path by repeatedly adding the right path of an internal face whose left path already belongs to the graph, see, e.g., \cite{AngeliniLBF17,DBLP:journals/tcs/BattistaT88,DBLP:journals/tcs/FratiGW14,Mel}; more precisely, what needs to be added are the internal vertices and all the edges of the right path of the face. We use this in order to construct an upward book embedding~$\Gamma$ of~$G$ respecting~$\mathcal E$. We start by drawing in~$\Gamma$ the leftmost path of~$G$ so that edges in~$L$ are to the left of the spine and edges in~$R$ are to the right of the spine. When we draw the right path~$r_f$ of a face~$f$ whose left path~$\ell_f$ already belongs to the subgraph of~$G$ drawn in~$\Gamma$, we distinguish two cases.

First, if~$r_f$ is a single edge~$e$, then, since~$G$ is simple, we have that~$\ell_f$ contains an internal vertex. If~$e \in L$, we have that~$r_f$ consists only of edges in~$L$ and~$\ell_f$ is not just a single edge, i.e.,~$f$ is impossible.  Thus, it follows that~$e \in R$.
Then we just draw~$e$ as a semi-circle to the right of the spine (and hence to the right of the curve representing~$\ell_f$ in~$\Gamma$).

\begin{figure}
    \centering
    \begin{subfigure}{.32\textwidth}\centering
\includegraphics[page=1]{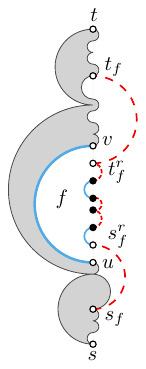}
    \subcaption{}
    \label{fig:rfNotASingleEdge}
    \end{subfigure}
    \begin{subfigure}{.32\textwidth}\centering    
    \includegraphics[page=2]{figures/characterization.pdf}
    \subcaption{}
    \label{fig:sf_srf_is_left}
    \end{subfigure}
    \begin{subfigure}{.32\textwidth}\centering
    \includegraphics[page=3]{figures/characterization.pdf}
\subcaption{}
\label{fig:sf_srf_and_tr_ft_f_are_left}
    \end{subfigure}
    \caption{Illustrations for the proof of \cref{th:st-characterization}. (a)~$(s_f,s^{r}_f),(t^{r}_f,t_f) \in R$. (b)~$(s_f,s^{r}_f) \in L$,~$(t^{r}_f,t_f) \in R$. (c)~$(s_f,s^{r}_f), (t^{r}_f,t_f) \in L$.}
    \label{fig:draw-non-impossible-face}
\end{figure}

Second, suppose that~$r_f$ is not a single edge; refer to \cref{fig:draw-non-impossible-face}. This implies that~$\ell_f$ is not entirely composed of right edges, as otherwise~$f$ would be an impossible face. Hence, let~$(u,v)$ be a left edge of~$\ell_f$. Also, let~$s_f$ and~$t_f$ be the source and the sink of the cycle delimiting~$f$, respectively. Furthermore, let~$s^{r}_f$ and~$t^{r}_f$ be the neighbors of~$s_f$ and~$t_f$, respectively, in~$r_f$. Analogously, let~$s^{\ell}_f$ and~$t^{\ell}_f$ be the neighbors of~$s_f$ and~$t_f$, respectively, in~$\ell_f$. We distinguish four cases.
\begin{itemize}
    \item Suppose first that~$(s_f,s^{r}_f)$ and~$(t^{r}_f,t_f)$ are both right edges; see \cref{fig:rfNotASingleEdge}. Then we embed all the internal vertices of~$r_f$ after~$u$ and before~$v$ on the spine, in the order in which they appear along~$r_f$. The semi-circle representing~$(s_f,s^{r}_f)$ is to the right of the portion of the curve representing~$\ell_f$ between~$s_f$ and the horizontal line through~$s^{r}_f$, in particular it is to the right of the semi-circle representing~$(u,v)$ since~$(u,v)$ is a left edge. Hence,~$(s_f,s^{r}_f)$ does not cause crossings in~$\Gamma$. Analogously,~$(t^{r}_f,t_f)$ does not cause crossings in~$\Gamma$. Finally, the curve representing the directed subpath of~$r_f$ between~$s^{r}_f$ and~$t^{r}_f$ does not cause crossings in~$\Gamma$, since it is {\em protected} by the semi-circle representing the edge~$(u,v)$ to its left. That is, the semi-circle representing~$(u,v)$ is to the left of (and does not cross) the subpath of~$r_f$ between~$s^{r}_f$ and~$t^{r}_f$, since~$(u,v)$ is a left edge, with~$u$ below~$s^{r}_f$ and~$v$ above~$t^{r}_f$. Also, the subpath of~$r_f$ between~$s^{r}_f$ and~$t^{r}_f$ does not cross any edge different from~$(u,v)$ in~$\Gamma$, as the intersection, if any, of such an edge with the strip delimited by the horizontal lines through~$u$ and~$v$, is to the left of~$(u,v)$, given that~$(u,v)$ is on the rightmost path of the graph represented in~$\Gamma$ before drawing~$r_f$. 
    \item Suppose next that~$(s_f,s^{r}_f)$ is a left edge and~$(t^{r}_f,t_f)$ is a right edge; see \cref{fig:sf_srf_is_left}. By 4-modality, the edge~$(s_f,s^{\ell}_f)$ is a left edge.  Then we embed all the internal vertices of~$r_f$ after~$s_f$ and before~$s^{\ell}_f$ on the spine, in the order in which they appear along~$r_f$. The proof that this does not cause crossings is analogous to the previous case.
    \item The case in which~$(s_f,s^{r}_f)$ is a right edge and~$(t^{r}_f,t_f)$ is a left edge can be discussed symmetrically to the previous case. 
    \item Finally, suppose that~$(s_f,s^{r}_f)$ and~$(t^{r}_f,t_f)$ are both left edges. If~$r_f$ is entirely composed of left edges, then~$\ell_f$ is a single edge, as otherwise~$f$ would be an impossible face. By 4-modality,~$\ell_f$ is a left edge. Therefore, we can  embed all the internal vertices of~$r_f$ after~$s_f$ and before~$t_f$ on the spine,  in the order in which they appear along~$r_f$. 
    
    Otherwise,~$r_f$ contains at least one right edge, call it~$(w,z)$; see \cref{fig:sf_srf_and_tr_ft_f_are_left}. We embed all the vertices of the subpath of~$r_f$ between~$s^{r}_f$ and~$w$ after~$s_f$ and before~$s^{\ell}_f$ on the spine, in the order in which they appear along~$r_f$; note that~$(s_f,s^\ell_f)$ is a left edge, since~$(s_f,s^{r}_f)$ is a left edge and by the 4-modality of~$\mathcal E$. Also, we embed all the vertices of the subpath of~$r_f$ between~$z$ and~$t^r_f$ after~$t^{\ell}_f$ and before~$t_f$ on the spine, in the order in which they appear along~$r_f$; again note that~$(t^\ell_f,t_f)$ is a left edge. The curves representing such subpaths of~$r_f$ do not cause crossings, as they are protected by the semi-circles representing the edges~$(s_f,s^\ell_f)$ and~$(t^\ell_f,t_f)$ to their left. Finally, the semi-circle representing~$(w,z)$, which lies to the right of the spine, does not cause crossings, since it lies to the right of the semi-circles representing~$(s_f,s^\ell_f)$ and~$(t^\ell_f,t_f)$, since these are left edges, and lies to the right of the curve representing the subpath of~$\ell_f$ between~$s^\ell_f$ and~$t^\ell_f$, since the vertices of this subpath all come after~$w$ and before~$z$ on the spine.
\end{itemize}
This concludes the proof of the characterization.    
\end{proof}

For plane digraphs that can have multiple sources and sinks, we generalize the notion of impossible face as follows.  Let~$G=(V,L \cup R)$ be a partitioned upward plane digraph, let~$(\mathcal E,\lambda)$ be the upward embedding of~$G$, and let~$f$ be a face of~$\mathcal E$.
Let~$\mathcal L_f$ (resp.\ $\mathcal R_f$) be the set of maximal directed paths in the boundary of~$f$ that consist of edges that have~$f$ to their right (resp.\ to their left).  Note that, if~$G$ is not biconnected, paths from~$\mathcal L_f$ and paths from~$\mathcal R_f$ are not necessarily disjoint.
The face~$f$ is \emph{impossible} if it satisfies one of the following two conditions (see \cref{fig:impossible-dag}):
\begin{enumerate}[label={\bf(\roman*)}]
\item~$\mathcal L_f$ contains a path~$\ell_f$ with the following properties. First,~$\ell_f$ consists of edges in~$R$. Second, the rest of the boundary of~$f$ is not a single edge. Third, let~$u$ and~$v$ be the extremes of~$\ell_f$, let~$e^u_f$ and~$e^v_f$ be the edges of~$\ell_f$ incident to~$u$ and~$v$, respectively; then the angle in~$f$ incident to~$u$ and to the right of~$e^u_f$ and the angle in~$f$ incident to~$v$ and to the right of~$e^v_f$ are small.
\item~$\mathcal R_f$ contains a path~$r_f$ with the following properties. First,~$r_f$ consists of edges in~$L$. Second, the rest of the boundary of~$f$ is not a single edge. Third, let~$u$ and~$v$ be the extremes of~$r_f$, let~$e^u_f$ and~$e^v_f$ be the edges of~$r_f$ incident to~$u$ and~$v$, respectively; then the angle in~$f$ incident to~$u$ and to the left of~$e^u_f$ and the angle in~$f$ incident to~$v$ and to the left of~$e^v_f$ are small.
\end{enumerate}

Note that, if~$G$ is a partitioned plane~$st$-graph, then the new definition of impossible face coincides with the previous one. We now prove our general characterization. A \emph{good embedding} of a partitioned upward plane digraph is an upward embedding~$(\mathcal E, \lambda)$ that is 4-modal and that is such that no face of~$\mathcal E$  is impossible.

\begin{theorem}
    \label{th:characterization}
Let~$G=(V,L \cup R)$ be a partitioned upward plane digraph and let~$(\mathcal E, \lambda)$ be the upward embedding of~$G$. Then~$G$ admits an upward book embedding respecting~$(\mathcal E, \lambda)$ if and only if~$(\mathcal E, \lambda)$ is a good embedding. 
\end{theorem}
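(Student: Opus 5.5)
Necessity goes as in \cref{th:st-characterization}, and sufficiency goes by reduction to \cref{th:st-characterization} through an augmentation.

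\textbf{Necessity.} Let $\Gamma$ be an upward book embedding respecting $(\mathcal E,\lambda)$. By \cref{pr:4-modal}, $(\mathcal E,\lambda)$ is 4-modal. Suppose, for a contradiction, that a face $f$ is impossible because of a path $\ell_f\in\mathcal L_f$ of right edges with extremes $u$ (bottom) and $v$ (top). The angles of $f$ at $u$ and $v$ to the right of $\ell_f$ are small. Hence, near $u$ and near $v$, the face $f$ is locally confined to the region to the right of the curve of $\ell_f$, inside the horizontal strip $S$ between $u$ and $v$. I would then argue that the portion of $f$ adjacent to $\ell_f$ is bounded by $\ell_f$ on the left and by the continuation of the boundary of $f$ at $u$ and $v$ on the right, and that this continuation is a boundary walk from $u$ to $v$. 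Every internal vertex of that walk lies in $S$, since it is on the right of $\ell_f$ within the region cut out by the small angles. Since the rest of the boundary is not a single edge, it has a vertex strictly inside $S$ to the right of $\ell_f$, or it leaves $S$ while staying attached to the small angles, which is impossible. No point of the spine lies in $S$ to the right of the semicircles of $\ell_f$, a contradiction. The case of $r_f$ made of left edges is symmetric. The delicate part is making the ``rest of the boundary'' topology rigorous when $G$ is not biconnected. I would handle it with the region enclosed between $\ell_f$ and the horizontal lines through $u$ and $v$.

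\textbf{Sufficiency.} Given a good embedding, I would augment $G$, respecting $(\mathcal E,\lambda)$ restricted to $G$, to a partitioned plane $st$-graph $G'$ whose embedding is 4-modal and has no impossible face. Then \cref{th:st-characterization} gives an upward book embedding of $G'$, and removing the added edges gives one of $G$ respecting $(\mathcal E,\lambda)$. The augmentation follows the classical saturation of Bertolazzi et al.: in each face, repeatedly find a source-switch $x$ whose large angle is absent (a small switch angle) followed along the boundary by two switches. Add an edge connecting $x$ to the appropriate switch, so that one small sink angle gets ``filled'' by an edge into it from below, and symmetrically for sources. Then add a super-source and a super-sink in the outer face, plus an edge $(s,t)$ to get biconnectivity if needed. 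Each added edge is given a page label as follows.
\begin{itemize}
\item The label is forced by 4-modality at both endpoints, given where it enters the large or flat angle. For example, an edge entering a sink's angle on the right side of its large angle must be right.
\item When the two endpoints' requirements conflict, I would instead subdivide the new edge with a new vertex. The two halves then get possibly different labels, which keeps every vertex 4-modal.
\end{itemize}

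\textbf{Main obstacle.} The crux is proving that no new face is impossible. Consider a new face with a left path of right edges, a right path that is not a single edge, and small angles at its extremes. I would show that it was already contained in an old face $f$. Its left boundary would then be a path of $\mathcal L_f$ consisting of right edges with small flanking angles, or it contains new edges whose labels were chosen as left precisely to avoid this. So $f$ would be impossible, a contradiction. To make this work, I would choose labels and subdivisions greedily so that every new edge lying on the left (resp.\ right) side of a new face is left (resp.\ right) whenever 4-modality allows. I expect a case analysis over the switch types at the endpoints of each saturating edge; this is the step I anticipate being hardest.
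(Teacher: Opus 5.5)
Your overall route for sufficiency (saturate to a plane $st$-graph whose embedding is still good, then invoke the $st$-case) is the paper's. The gap is that the step you defer as the main obstacle is the whole difficulty, and the labelling rule you sketch cannot resolve it. Every saturating edge has one new face on its left and the other on its right, so it lies on the right path of one and on a left path of the other. Your greedy preference (left edges on left sides, right edges on right sides) therefore asks for both pages at once.

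The rule ``a single label unless the endpoints conflict'' also fails concretely. Let $f$ be an internal face with consecutive switch angles small at $u$, small at $w$, and large at the source $v$. Let $P_1$ and $P_2$ be the boundary paths of $f$ from $u$ to $w$ and from $v$ to $w$, and add $(u,v)$. The new face containing $w$ is an $st$-face with left path $P_1$ and right path $(u,v)$ followed by $P_2$. Suppose both boundary edges of $f$ at $u$ are left edges, $P_2\subseteq L$, and $P_1$ has two edges; this occurs in a good embedding of a suitably labelled $6$-cycle. Then 4-modality at $u$ forces $(u,v)\in L$, and nothing is forced at $v$, since the new edge is its only incoming edge. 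The new face is therefore impossible, although $f$ was not, because $P_2$ was flanked by the large angle at $v$.

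The missing idea is to \emph{always} subdivide the saturating edge by a new vertex $z$. Put $(u,z)$ on the page of the edge that follows it clockwise at $u$, so $u$ stays 4-modal ($v$ is 4-modal automatically), and put $(z,v)$ on the other page. Then every maximal directed path through a new edge contains the bichromatic path $u\to z\to v$. Hence any witness of impossibility in a new face lies on the boundary of $f$. The small angle at $u$ was only split into small angles, the large angle at $v$ into flat ones, and the rest of the boundary only shrank, so $f$ would already have been impossible. Induction on the number of switches then handles internal faces.

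Your unspecified super-source/super-sink step for the outer face runs into the same labelling issue. The paper avoids it by first attaching a fixed gadget: an outer triangle on $a,b,c$ with $(a,c)\in L$ and $(a,b),(b,c)\in R$, and a path $a\to d\to s$ with $(a,d)\in L$, $(d,s)\in R$ entering the large outer angle of a source $s$. This makes the outer face an $st$-face without creating impossible faces.

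You also do not treat disconnected $G$, where a component nested in a face needs a portion of the spine inside that face.

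Finally, in the necessity part, the small angles do not force the continuing boundary walk into $S$. The case you dismiss as impossible is settled as follows. The boundary edges $(u,w)$ and $(z,v)$ of $f$ across the two small angles are in $R$ by 4-modality. If neither $w$ nor $z$ lies in $S$, then $\pi(z)<\pi(u)<\pi(v)<\pi(w)$, and these two edges cross.
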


\begin{proof}
We first assume that~$G$ is connected. We will get rid of this assumption later.

We start by proving the necessity. \cref{pr:4-modal} ensures the necessity of the 4-modality of~$\mathcal E$. Next, consider an upward book embedding~$\Gamma$ of~$G$ respecting~$(\mathcal E, \lambda)$ and suppose, for a contradiction, that a  face~$f$ of~$\mathcal E$  is impossible. After possibly horizontally mirroring the upward book embedding and swapping the left edges with the right edges, we may assume without loss of generality that~$\mathcal L_f$ contains a  path~$\ell_f$ with the following properties. First,~$\ell_f$ consists of edges in~$R$. Second, the rest of the boundary of~$f$, say~$p_f$, is not a single edge. Third, let~$u$ and~$v$ be the source and sink of~$\ell_f$, respectively, let~$e^u_f$ and~$e^v_f$ be the edges of~$\ell_f$ incident to~$u$ and~$v$, respectively; then the angle~$\alpha_u$ in~$f$ incident to~$u$ and to the right of~$e^u_f$ and the angle~$\alpha_v$ in~$f$ incident to~$v$ and to the right of~$e^v_f$ are small. Let~$w$ and~$z$ be the vertices adjacent to~$u$ and~$v$, respectively, such that the edge~$(u,w)$ follows the edge~$e^u_f$ in clockwise order around~$u$ and the edge~$(z,v)$ follows the edge~$e^v_f$ in counter-clockwise order around~$v$. Note that~$w=z$ might happen, however~$w\neq v$ and~$z\neq u$, given that~$p_f$ is not a single edge. Since~$\ell_f$ only consists of edges in~$R$, its representation in~$\Gamma$ lies entirely to the right of the spine of~$\Gamma$, except at its vertices; also, by the 4-modality of~$\mathcal E$, the edges~$(u,w)$ and~$(z,v)$ are in~$R$. Since~$\Gamma$ is an upward planar drawing, the order of the vertices of~$\ell_f$ along the spine is the same as their order in~$\ell_f$, with~$u$ below~$v$; also,~$w$ lies above~$u$ and~$z$ below~$v$. If any of~$w$ and~$z$ lies in the strip~$S_{uv}$ delimited by the horizontal lines through~$u$ and~$v$, then a contradiction can be reached as in the proof of \cref{th:st-characterization}, given that, because of the upward embedding~$(\mathcal E,\lambda)$ which forces~$\alpha_u$ and~$\alpha_v$ to be small, the edges~$(u,w)$ and~$(v,z)$ have to be to the right of~$e^u_f$ and~$e^v_f$, respectively. However, no point of the spine lies to the right of~$\ell_f$ and in the strip~$S_{uv}$. It follows that~$w$ has to lie above~$v$ and~$z$ below~$u$. This, however, implies that the edges~$(u,w)$ and~$(z,v)$ cross each other, given that they are both drawn to the right of the spine. This contradiction completes the proof of necessity.

We next prove the sufficiency. In order to do that, we show that it is possible to augment~$G$ and its upward embedding~$(\mathcal E,\lambda)$, respectively, to a partitioned~plane~$st$-graph and to a good embedding of it. Then \cref{th:st-characterization} implies that the augmented graph admits an upward book embedding respecting its upward embedding, from which one can obtain an upward book embedding of~$G$ respecting~$(\mathcal E,\lambda)$ by ignoring the vertices and edges added for the augmentation. The augmentation consists of two steps. In {\bf Step 1}, we augment~$G$ and~$(\mathcal E,\lambda)$ so that the outer face is an~$st$-face. In {\bf Step 2}, we adopt a modified version of the procedure described by Bertolazzi et al. \cite{DBLP:journals/siamcomp/BertolazziBMT98} to augment any upward plane digraph to a plane~$st$-graph, while maintaining the upward embedding.

\begin{figure}
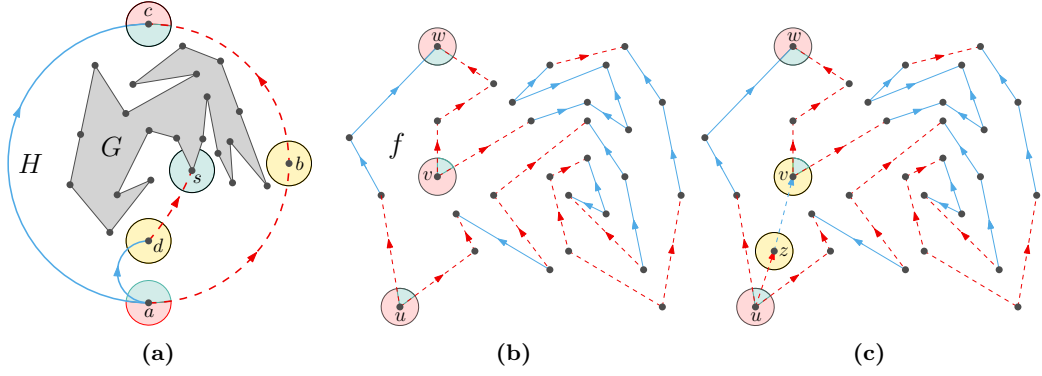

    \centering
    \begin{subfigure}{.32\textwidth}\centering
    \includegraphics[page=6,width=.9\textwidth]{figures/characterization.pdf}
    \subcaption{\label{fig:augmentation-outer-st}}
    \end{subfigure}
    \hfil
    \begin{subfigure}{.32\textwidth}\centering
\includegraphics[page=9,width=1\textwidth]{figures/characterization.pdf}
    \subcaption{\label{fig:bertolazzi-before}}
    \end{subfigure}
    \hfil
    \begin{subfigure}{.32\textwidth}\centering
\includegraphics[page=10,width=1\textwidth]{figures/characterization.pdf}
    \subcaption{\label{fig:bertolazzi-after}}
    \end{subfigure}
    \caption{Illustrations for the proof of \cref{th:characterization}. (a) Upward embedding~$(\mathcal E_H,\lambda_H)$ of the partitioned upward plane digraph~$H$ obtained from~$G$ in {\bf Step 1}. (b)-(c) Augmentation of a face~$f$ with a large angle in an upward embedding~$(\mathcal E,\lambda)$ to obtain the upward embedding~$(\mathcal E',\lambda')$  in~{\bf Step~2}.
    }
\end{figure}

\subparagraph*{Step 1.} We first augment~$G=(V,L \cup R)$ and its upward embedding~$(\mathcal E,\lambda)$ to a partitioned upward plane digraph~$H=(V\cup \{a,b,c,d\},L\cup \{(a,c),(a,d)\},R\cup \{(a,b),(b,c),(d,s)\}$, where~$s$ is any source of~$G$ that is incident to the outer face of~$\mathcal E$ and that has a large angle in the outer face (refer to \cref{fig:augmentation-outer-st}); since~$\lambda$ is upward-consistent, such a source~$s$ exists, as otherwise every switch angle at a vertex $v$ delimited by two edges outgoing from $v$ would be small, and the sum of the values assigned by~$\lambda$ to the angles incident to the outer face could not be~$+2$. We also augment~$(\mathcal E, \lambda)$ to an upward embedding~$(\mathcal E_H, \lambda_H)$ of~$H$ as follows. 

First, the planar embedding~$\mathcal E_H$ of~$H$ has the following properties: 
\begin{enumerate}[label={\bf(\roman*)}]
\item the outer face of~$\mathcal E_H$ is delimited by the cycle~$(a,b,c)$; 
\item the restriction of~$\mathcal E_H$ to~$G$ coincides with~$\mathcal E$; 
\item the clockwise order of the edges incident to~$a$ is~$(a,c),(a,d),(a,b)$; also, the edge~$(d,s)$ is incident to~$s$ in the outer face of~$\mathcal E$. 
\end{enumerate}

Second, the angle assignment~$\lambda_H$ has the following properties: 
\begin{enumerate}[label={\bf(\roman*)}]
\item~$\lambda_H$ assigns the same value as~$\lambda$ to every angle of~$\mathcal E_H$ that is also an angle of~$\mathcal E$; 
\item~$\lambda_H$ assigns~$0$ to all angles incident to~$b$ and~$d$; 
\item~$\lambda_H$ assigns~$1$ to the angles at~$a$ and~$c$ incident to the outer face of~$\mathcal E_H$ and~$-1$ to all other angles at~$a$ and~$c$; and 
\item~$\lambda_H$ assigns~$0$ to the angles at~$s$ incident to the edge~$(d,s)$.  
\end{enumerate}

We first show that~$(\mathcal E_H,\lambda_H)$ is an upward embedding of~$H$. To this end, we start by observing that~$(\mathcal E_H, \lambda_H)$ is 4-modal and in particular it is bimodal. Then, by \cref{th:upward-conditions}, it remains to prove that~$\lambda_H$ is upward-consistent. 

We prove that~$\lambda_H$ satisfies \cref{cond:flat-assignment}. This condition is verified for every angle that is also an angle of~$\mathcal E$, since~$\lambda$ is upward-consistent and~$\lambda_H$ assigns the same value as~$\lambda$ to every angle of~$\mathcal E_H$ that is also an angle of~$\mathcal E$. The angles at~$a$ and~$c$ are all assigned with a value different from~$0$ and indeed none of them is flat. Each angle at~$b$ or~$d$ is assigned the value~$0$ and indeed it is flat. Finally, the angles at~$s$ incident to~$(d,s)$ are both assigned the value~$0$ and each of them is flat, since~$s$ is a source in~$G$ and~$(d,s)$ is incoming into~$s$.

We prove that~$\lambda_H$ satisfies \cref{cond:vertex-assignment}. This condition is verified for every vertex not in~$\{a,b,c,d,s\}$, since~$\lambda$ is upward-consistent and~$\lambda_H$ assigns the same value as~$\lambda$ to every angle of~$\mathcal E_H$ that is also an angle of~$\mathcal E$. The two angles at~$c$ are assigned a~$-1$ and a~$+1$, hence their sum is~$0$, which equals~$2-\deg(c)$. The two angles at each of~$b$ and~$d$ are assigned value~$0$, which is equal to~$2$ minus their degree. The two angles at~$a$ incident to internal faces of~$\mathcal E_H$ are both assigned value~$-1$, whereas the angle at~$a$ incident to the outer face of~$\mathcal E_H$ is assigned value~$+1$. Hence, the sum of these values is~$-1$, which is equal to~$2-\deg(a)$. Finally, the angles at~$s$ incident to~$(d,s)$ are both assigned the value~$0$, and all the other angles are assigned the value~$-1$; the sum of these values is hence equal to~$2-\deg(s)$.  

Finally, we prove that~$\lambda_H$ satisfies \cref{cond:face-assignment}. This condition is verified for every face of~$\mathcal E_H$ that is also a face of~$\mathcal E$, since~$\lambda$ is upward-consistent and~$\lambda_H$ assigns the same value as~$\lambda$ to every angle of~$\mathcal E_H$ that is also an angle of~$\mathcal E$. Also, the angles incident to the outer face of~$\mathcal E_H$ at~$a$ and~$c$ are assigned the value~$+1$, and the angle incident to the outer face of~$\mathcal E_H$ at~$b$ is assigned with the value~$0$, hence the sum of these values is~$2$, as required by \cref{cond:face-assignment}. It remains to discuss the condition for the internal face~$f_{\mathrm{in}}$ of~$\mathcal E_H$ incident to the vertex~$d$. We distinguish three sets of angles incident to~$f_{\mathrm{in}}$. The set~$A_1$ contains the angles of~$f_{\mathrm{in}}$ that are also angles in the outer face~$f_o$ of~$\mathcal E$; by construction, the angles in~$A_1$ are assigned by~$\lambda_H$ the same value as~$\lambda$. Notice that~$A_1$ contains all the angles of~$\mathcal E$ incident to~$f_o$, except for the angle~$\sigma$ at~$s$, which by assumption is assigned value~$+1$ by~$\lambda$. The set~$A_2$ contains the five angles of~$f_{in}$ incident to~$b$,~$d$, and~$s$, each of which is assigned the value~$0$. Finally, the set~$A_3$ contains the three angles of~$f_{\mathrm{in}}$ incident to~$a$ and~$c$, each of which is assigned the value~$-1$. It follows that~$\sum_{a \in A_{\mathcal E_H}(f_{\mathrm{in}})} \lambda_H(a) = \big(\sum_{a \in A_1}\lambda(a)\big) - \lambda(\sigma) + 0\cdot|A_2| - 1 \cdot |A_3|   = 2 - 1 +0 - 3 = -2$.
This concludes the proof that~$(\mathcal E_H,\lambda_H)$ is an upward embedding of~$H$.

In order to conclude the discussion of Step 1, since~$(\mathcal E_H,\lambda_H)$ is 4-modal, it remains to prove that no face of~$\mathcal E_H$ is impossible. Every face of~$\mathcal E_H$ that is also a face of~$\mathcal E$ is not impossible, since~$(\mathcal E,\lambda)$ is a good embedding. By construction, the outer face of~$\mathcal E_H$ is an~$st$-face and it is not impossible since its left path is an edge in~$L$. Finally, consider the face~$f_{\mathrm{in}}$. We distinguish four types of maximal directed paths on the boundary of~$f_{\mathrm{in}}$. 
\begin{itemize}
\item First, the maximal directed paths that comprise the path~$(a,d,s)$ contain at least one left edge, namely~$(a,d)$, and at least one right edge, namely~$(d,s)$, hence they cannot make~$f_{\mathrm{in}}$ impossible. 
\item Second, the maximal directed path~$(a,c)$ has~$f_{\mathrm{in}}$ to its right and it consists of a left edge, hence it cannot make~$f_{\mathrm{in}}$ impossible. 
\item Similarly, the maximal directed path~$(a,b,c)$ has~$f_{\mathrm{in}}$ to its left and it consists of two right edges, hence it cannot make~$f_{\mathrm{in}}$ impossible.  
\item Finally, every remaining maximal directed path incident to~$f_{\mathrm{in}}$ is composed entirely of edges of~$G$. Then such a path does not make~$f_{\mathrm{in}}$ impossible, since the outer face~$f_o$ of~$\mathcal E$ is not impossible. 
\end{itemize}



\subparagraph*{Step 2.} After \textbf{Step 1}, we rename~$H$ and~$(\mathcal E_H,\lambda_H)$ to~$G$ and~$(\mathcal E, \lambda)$, respectively, where the outer face of~$G$ in~$\mathcal E$ is an~$st$-face. We now show that~$G$ and~$(\mathcal E, \lambda)$ can be augmented, by adding vertices and edges, to a partitioned plane~$st$-graph~$G^+$ with a good embedding~$(\mathcal E^+, \lambda^+)$. This concludes the proof (for connected digraphs), as then, by~\cref{th:st-characterization}, we have that~$G^+$ admits an upward book embedding respecting~$(\mathcal E^+, \lambda^+)$, and the restriction of such an upward book embedding to~$G$ is an upward book embedding of~$G$  respecting~$(\mathcal E, \lambda)$. We prove that~$G$ and~$(\mathcal E, \lambda)$ have the claimed augmentation by induction on the number~$S$ of switches of~$G$. 

For the base case, we have~$S=2$, hence~$G$ is a partitioned plane~$st$-graph and thus it suffices to set~$G^+=G$ and~$(\mathcal E^+, \lambda^+)=(\mathcal E, \lambda)$, since~$(\mathcal E, \lambda)$ is a good embedding,~by~assumption.

If~$S > 2$, then there are at least three large angles in~$(\mathcal E, \lambda)$.  Since the outer face is an~$st$-face, it has exactly two large angles and hence there exists an internal face~$f$ of~$\mathcal E$ that contains a large angle.  Then Bertolazzi et al.~\cite{DBLP:journals/algorithmica/BertolazziBLM94} showed that~$f$ contains three switch angles that are consecutive in the clockwise order of the switch angles around~$f$ and that are small, small, and large, respectively (see \cref{fig:bertolazzi-before}). Let~$u$,~$w$, and~$v$ be the vertices the three angles are incident to, respectively, and note that~$v$ is a switch of~$G$.
We assume that~$v$ is a source, the case in which it is a sink is analogous.  

Let~$\tilde G$ be the upward plane digraph obtained by adding the edge~$(u,v)$ to~$G$; also, let~$\tilde{\mathcal E}$ be the planar embedding obtained from~$\mathcal E$ by adding the edge~$(u,v)$ inside~$f$; finally, let~$\tilde \lambda$ be the upward-consistent angle assignment obtained from~$\lambda$ by defining the two angles incident to~$(u,v)$ at~$u$ to be small and the two angles incident to~$(u,v)$ at~$v$ to be flat. Bertolazzi et al.~\cite{DBLP:journals/algorithmica/BertolazziBLM94} showed that~$(\tilde{\mathcal E},\tilde \lambda)$ is indeed an upward embedding of~$\tilde G$. Note that~$\tilde G$ has one less switch than~$G$, as~$v$ is not a source in~$\tilde G$, which would allow induction to be applied. Unfortunately, adding~$(u,v)$ to either of the parts~$L$ or~$R$ of the edge set of~$G$ may result in an impossible face or in a violation of the 4-modality.  

To remedy this, we create a partitioned upward plane digraph~$G'$ with upward embedding~$(\mathcal E', \lambda')$ from~$\tilde G$ and~$(\tilde{\mathcal E},\tilde \lambda)$ by additionally subdividing the edge~$(u,v)$ with a new vertex~$z$ whose incident angles are both flat (see  \cref{fig:bertolazzi-after}); we assign the edge~$(u,z)$ to the same partition ($L$ or~$R$) as the edge that follows it in clockwise order around~$u$ and we assign the edge~$(z,v)$ to the other partition.  Clearly, that~$(\mathcal E', \lambda')$ is an upward embedding of~$G'$ still comes from the result by Bertolazzi et al.; moreover, the choice of the partition for~$(u,z)$ ensures that~$u$ remains 4-modal and, since~$(z,v)$ is the sole incoming edge at~$v$, the 4-modality of~$v$ is also preserved. It remains to prove that~$(\mathcal E',\lambda')$ has no impossible face.  Assume, for the sake of contradiction, that there exists an impossible face~$g$ in~$(\mathcal E',\lambda')$.  Since~$(\mathcal E,\lambda)$ contains no impossible face, it follows that~$g$ must be one of the two faces that are incident to the new vertex~$z$. Let~$p_g$ be a maximal directed path on the boundary of~$g$ that satisfies the conditions in the definition of impossible face. Note that~$p_g$ cannot contain the path~$(u,v,z)$, which contains both an edge in~$L$ and an edge in~$R$, whereas~$p_g$ is either composed entirely of edges in~$L$ or of edges in~$R$.  Hence,~$p_g$ is part of the boundary of~$f$.  Since the rest of the boundary of~$f$ contains at least as many vertices as the rest of the boundary of~$g$ and since the augmentation of~$(\mathcal E,\lambda)$ to~$(\mathcal E',\lambda')$ has only split a small angle (at~$u$) into two small angles and a large angle (at~$v$) into two flat angles, it follows that~$f$ is an impossible face, a contradiction.


Since~$G'$ has one less switch than~$G$, this concludes the induction and hence the proof for the case in which~$G$ is connected. 


\subparagraph*{The disconnected case.} We now discuss the case in which~$G$ is not connected. Let~$G_1,\dots,G_k$ be the connected components of~$G$, for some integer~$k\geq 2$, and let~$(\mathcal E_i,\lambda_i)$ be the upward embedding of~$G_i$ which is the restriction of~$(\mathcal E,\lambda)$ to~$G_i$, for~$i=1,\dots,k$. 

The proof of the necessity of the characterization uses two facts additional to the arguments presented for the connected case. On the one hand, the 4-modality of the upward embedding~$(\mathcal E,\lambda)$ boils down to the 4-modality of the upward embeddings~$(\mathcal E_i,\lambda_i)$, whose necessity we already proved. On the other hand, it might be the case that~$(\mathcal E_i,\lambda_i)$ does not contain any impossible face, for~$i=1,\dots,k$, and yet~$(\mathcal E,\lambda)$ does. This happens if and only if an internal face~$f$ of the planar embedding~$\mathcal E_i$ of a connected component~$G_i$ of~$G$ satisfies the following properties. First, the planar embedding~$\mathcal E$ places a connected component~$G_j$ of~$G$ with~$j\neq i$ in~$f$. Second,~$f$ is an~$st$-face such that its left boundary~$\ell_f$ is composed of edges in~$R$ and its right boundary is a single edge (also in~$R$), or its right boundary~$r_f$ is composed of edges in~$L$ and its left boundary is a single edge (also in~$L$). Note that~$f$ is not an impossible face in~$(\mathcal E_i,\lambda_i)$, but the face~$f_{\mathcal E}$ of~$\mathcal E$ corresponding to~$f$ is impossible in~$(\mathcal E,\lambda)$, as the rest (with respect to~$\ell_f$ or~$r_f$, respectively) of the boundary of~$f_{\mathcal E}$ is not a single edge, given that it comprises the boundary of the outer face of~$\mathcal E_j$. The characterization correctly handles this situation, since in any upward book embedding of~$G_i$ respecting~$(\mathcal E_i,\lambda_i)$, no part of the spine lies inside~$f$, hence the placement of~$G_j$ inside~$f$ demanded by~$(\mathcal E,\lambda)$ is not possible. The necessity of not having any impossible face follows.

The sufficiency of the characterization can be proved as follows. We start by constructing an upward book embedding~$\Gamma_i$ of~$G_i$ respecting~$(\mathcal E_i,\lambda_i)$, for~$i=1,\dots,k$, as described in the connected case. We now insert the upward book embeddings~$\Gamma_i$ into an (initially empty) upward book embedding~$\Gamma$ one by one. In particular,  we insert into~$\Gamma$ an upward book embedding~$\Gamma_i$ of a component~$G_i$ once every component that has to contain~$G_i$ in an internal face is already part of~$\Gamma$. When~$\Gamma_i$ is inserted in~$\Gamma$, the face~$f$ of the current embedding in which it needs to be inserted contains in its interior in~$\Gamma$ a portion of the spine, since the face of~$\mathcal E$ corresponding to~$f$ is not impossible. Then the vertices of~$G_i$ can be placed consecutively in a portion of the spine inside~$f$, thus inserting~$\Gamma_i$ into~$\Gamma$. Eventually, this results in an upward book embedding~$\Gamma$ of~$G$ respecting~$(\mathcal E,\lambda)$. This concludes the proof of the sufficiency and of the characterization.
\end{proof}

\section{Computational Complexity with Variable Embedding} \label{se:complexity}

In this section we study the complexity of testing whether a partitioned digraph admits an upward book embedding. We show that the problem is \NP-complete. Our hardness proof exploits the characterization of \cref{th:characterization} and the \NP-hardness of {\sc Upward Planarity Testing}.

\begin{theorem}\label{th:np-complete}
    It is \NP-complete to decide whether a partitioned planar digraph~$G=(V,L \cup R)$ admits an upward book embedding.
\end{theorem}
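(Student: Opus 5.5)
Membership in \NP{} is immediate: a certificate is the ordering~$\pi$, and checking that it is a topological ordering and that no two edges of~$L$ (resp.~$R$) interleave takes polynomial time. For hardness, I would reduce from {\sc Upward Planarity Testing}, which is \NP-hard (Garg and Tamassia). Given a digraph~$D$, I build a partitioned digraph~$G=(V,L\cup R)$ such that~$G$ has an upward book embedding if and only if~$D$ is upward planar. By \cref{th:characterization}, $G$ has an upward book embedding if and only if~$G$ has some upward embedding that is good, i.e.\ 4-modal and with no impossible face. So the goal is a transformation under which every upward embedding of~$D$ extends to a good embedding of~$G$, and every good embedding of~$G$ restricts to an upward embedding of~$D$.

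For the construction I would replace every edge~$(x,y)$ of~$D$ by a directed path~$(x,m_{xy},y)$ with~$(x,m_{xy})\in L$ and~$(m_{xy},y)\in R$. Subdivision preserves upward planarity in both directions, since the new vertex has flat angles and \cref{th:upward-conditions} carries over. Mixing colours on every original edge also rules out impossible faces: every maximal directed path on a face boundary that contains a whole subdivided edge contains both a left and a right edge. The remaining monochromatic maximal paths are single half-edges, such as~$(x,m_{xy})$ when the boundary turns at~$m_{xy}$. But~$m_{xy}$ is not a switch, so a maximal path ends there only in degenerate situations, which I would check.

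The main obstacle is 4-modality at the original vertices. For the outgoing edges at~$x$, all of which are left edges under this colouring, I must be able to choose the embedding so that around~$x$ the outgoing left edges precede the outgoing right edges and the incoming right edges precede the incoming left edges. With outgoing edges all left and incoming edges all right, this reduces to bimodality, which any upward embedding of~$D$ already has. So the naive colouring gives 4-modality for free. I then have to re-verify impossibility more carefully, especially for a single-edge ``rest of the boundary'' and for the small-angle conditions at path extremes. If some case fails, I would subdivide each edge twice, as~$L,R,L$ or~$R,L,R$ chosen by position, so that no monochromatic path can have long boundaries on the other side. I would also add~$O(1)$ protective vertices if small faces such as triangles cause trouble.

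Finally, I would argue both directions explicitly. If~$D$ is upward planar, I take an upward embedding, transfer it to~$G$, show it is good, and apply \cref{th:characterization}. Conversely, an upward book embedding of~$G$ is an upward planar drawing, and smoothing the subdivision vertices yields an upward planar drawing of~$D$. The reduction is polynomial. If the reduction starts from a bounded-treewidth hard variant, the same construction would also give the W[1]-hardness remark.
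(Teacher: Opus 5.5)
Your proposal is correct and matches the paper's proof. It uses the same single subdivision with $(x,m_{xy})\in L$ and $(m_{xy},y)\in R$, obtains 4-modality from bimodality, excludes impossible faces because every maximal directed boundary path is bichromatic, and smooths the subdivision vertices for the converse. The ``degenerate situations'' you defer cannot occur: $m_{xy}$ has degree exactly two, with one incoming and one outgoing edge, so any maximal directed path on a face boundary that contains one of its edges also contains the other, and the fallback double subdivision and protective vertices are unnecessary.
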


\begin{proof}
    Clearly, the problem is in \NP. In order to prove \NP-hardness, we give a reduction from {\sc Upward Planarity Testing}, which was proved to be \NP-hard by Garg and Tamassia~\cite{DBLP:journals/siamcomp/GargT01}. Given a planar digraph~$G=(V,E)$, we construct a partitioned digraph~$G'=(V',L \cup R)$ as follows. Subdivide each edge~$e\in E$ with a new vertex~$v_e$; these subdivision vertices, together with the vertices in~$V$, form~$V'$. The edge set~$L$ consists of the edges outgoing from vertices in~$V$ (and incoming into vertices in~$V'\setminus V$), and the edge set~$R$ consists of the remaining edges.  Clearly,~$G'$ can be constructed from~$G$ in polynomial time.  It remains to show that~$G'$ admits an upward book embedding if and only if~$G$ admits an upward planar drawing. 

    For the necessity, observe that an upward book embedding~$\Gamma'$ of~$G'$ is an upward planar drawing of~$G'$. Then an upward planar drawing~$\Gamma$ of~$G$ is obtained from~$\Gamma'$  by (i) placing each vertex of~$G$ in~$\Gamma$ as in~$\Gamma'$ and by (ii) drawing in~$\Gamma$ each edge~$e=(u,w)$ of~$G$ as the Jordan arc formed by the union of the drawings of~$(u,v_e)$ and of~$(v_e,w)$  in~$\Gamma'$.
    
    For the sufficiency, suppose that~$G$ admits an upward planar drawing~$\Gamma$. Then an upward planar drawing~$\Gamma'$ of~$G'$ can be constructed from~$\Gamma$ by placing each subdivision vertex~$v_e$ at any internal point of the Jordan arc representing the edge~$e$. Let~$(\mathcal E',\lambda')$ be the upward embedding corresponding to~$\Gamma'$. 
    We prove that~$(\mathcal E',\lambda')$ is 4-modal. Consider any vertex~$v$ of~$G'$. If~$v\notin V$, then~$v$ has one incoming and one outgoing edge in~$G'$, hence it is trivially 4-modal. If~$v\in V$ then, by construction, all the edges outgoing from~$v$, if any, are in~$L$ and are consecutive in the clockwise order of the edges incident to~$v$, due to the bimodality of the planar embedding of~$G$ corresponding to~$\Gamma$. Similarly, all the edges incoming into~$v$, if any, are in~$R$ and are consecutive in the clockwise order of the edges incident to~$v$. It follows that~$v$ is 4-modal.
    Finally,~$(\mathcal E',\lambda')$ has no impossible face, since by construction any maximal directed path  in the boundary of any face contains both an edge in~$L$ and an edge in~$R$.  Hence,~$(\mathcal E',\lambda')$  is a good embedding. By \Cref{th:characterization}, we have that~$G'$ admits an upward book embedding respecting~$(\mathcal E',\lambda')$.
\end{proof}

{\sc Upward Planarity Testing} is known to be W[1]-hard with respect to the treewidth~\cite{DBLP:conf/gd/JansenKKLMS23}. Also, the reduction shown in \cref{th:np-complete} constructs a graph which is a subdivision of the original instance of {\sc Upward Planarity Testing}. Since any two graphs, one of which is a subdivision of the other one, have the same treewidth, we get the following.

\begin{corollary}
It is W[1]-hard with respect to the treewidth to decide whether a partitioned digraph~$G=(V,L \cup R)$ admits an upward book embedding.
\end{corollary}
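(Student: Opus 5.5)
The W[1]-hardness follows by reusing the reduction in the proof of \cref{th:np-complete} as a parameterized reduction from {\sc Upward Planarity Testing} parameterized by treewidth, which is W[1]-hard by~\cite{DBLP:conf/gd/JansenKKLMS23}. Given an instance $G=(V,E)$ with treewidth $t$, I will construct $G'=(V',L\cup R)$ exactly as before: subdivide each edge $e=(u,w)$ with a vertex $v_e$, put $(u,v_e)$ in $L$ and put $(v_e,w)$ in $R$.

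Correctness carries over verbatim from the proof of \cref{th:np-complete}. That proof shows that $G'$ admits an upward book embedding if and only if $G$ is upward planar. Necessity comes from merging the two halves of each subdivided edge. Sufficiency uses the fact that the upward embedding induced by subdividing an upward planar drawing of $G$ is 4-modal and has no impossible face, and then invokes \cref{th:characterization}. The construction takes $O(|V|+|E|)$ time, so it is polynomial.

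It remains to check that the parameter is preserved, namely $\mathrm{tw}(G')\le f(t)$ for some computable $f$. In fact $\mathrm{tw}(G')=\mathrm{tw}(G)$ whenever $t\ge 2$, and $\mathrm{tw}(G')\le 2$ otherwise. I will argue this explicitly rather than merely cite it.

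\begin{itemize}
\item For the lower bound, $G$ is a minor of $G'$, obtained by contracting one edge of each subdivided pair. Treewidth is minor-monotone, so $\mathrm{tw}(G')\ge \mathrm{tw}(G)$.
\item For the upper bound, take a tree decomposition of $G$ of width $t$. For each edge $e=(u,w)$, pick a bag containing both $u$ and $w$, and attach a new leaf bag $\{u,w,v_e\}$ to it. All vertex-occurrence subtrees remain connected, and every edge of $G'$ is covered by some bag. The resulting width is $\max(t,2)$.
\end{itemize}

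The only step needing any care is this treewidth bound, and it is routine. Hence the reduction is a parameterized reduction with linear parameter dependence, and W[1]-hardness transfers to deciding whether a partitioned digraph admits an upward book embedding.
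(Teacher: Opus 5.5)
Your proposal is correct and follows the paper's own argument. It uses the same subdivision reduction from \textsc{Upward Planarity Testing}, which is W[1]-hard parameterized by treewidth, together with the fact that subdividing edges does not raise the treewidth beyond $\max(t,2)$. The only difference is that you verify the treewidth bound explicitly, via leaf bags $\{u,w,v_e\}$ attached to a tree decomposition of $G$, whereas the paper simply asserts that a graph and its subdivision have the same treewidth.
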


Furthermore, by subdividing the edges twice, rather than once, so that edges incident to vertices in~$V$ are in~$L$ and the other edges in~$R$, the reduction gives an instance in which one color induces a matching and the other one a forest of stars. This is in sharp contrast with the fact that the problem is solvable in linear time when both edge parts are matchings~\cite{Akitaya18}.


\section{Test for Graphs with a Fixed Planar Embedding} \label{se:fixed}

In this section we show how to exploit the characterization of~\cref{th:characterization} in order to prove that, for an~$n$-vertex partitioned plane digraph~$G$ with a given planar embedding~$\mathcal E$, it can be tested in~$O(n\log^3 n)$ time whether~$G$ admits an upward book embedding respecting~$\mathcal E$. 

We start by reviewing a tool for testing whether~$G$ admits an upward planar drawing~$\Gamma$ respecting~$\mathcal E$, assuming that~$G$ is connected; we will remove this assumption later. By~\cref{th:upward-conditions}, the bimodality of~$\mathcal E$ is a necessary condition for the existence of~$\Gamma$. Since the bimodality of~$\mathcal E$ can be easily tested in~$O(n)$ time, in the following we assume that~$\mathcal E$ is indeed bimodal. Then, again by~\cref{th:upward-conditions},  the existence of~$\Gamma$ is equivalent to the existence of an upward-consistent angle assignment~$\lambda$ for~$\mathcal E$. In order to test for the existence of~$\lambda$, Bertolazzi et al.~\cite{DBLP:journals/algorithmica/BertolazziBLM94} proposed the following strategy\footnote{Our description of the strategy differs slightly from the one by Bertolazzi et al.~\cite{DBLP:journals/algorithmica/BertolazziBLM94}, as they assume~$G$ to be triconnected (and thus, that each vertex is incident to a face at most once in~$\mathcal E$), while we do not.}. 

A \emph{flow network}~$\mathcal N$ is a directed graph such that each source is associated with a non-negative value, called \emph{supply}, each sink is associated with a non-negative value, called \emph{demand}, and each edge~$a$ is associated with a non-negative value~$c_a$, called \emph{capacity}. Vertices and edges of a flow network are usually called \emph{nodes} and \emph{arcs}, respectively. A \emph{flow} is an assignment of a value~$\phi_a$ to each arc~$a$ of~$\mathcal N$; the value~$\phi_a$ is called the \emph{flow assigned to}~$a$. A flow is \emph{feasible} if: (i) the flow assigned to each arc~$a$ of~$\mathcal N$ is at most its capacity:~$\phi_a\leq c_a$; (ii) the sum of the flows assigned to the arcs outgoing each source~$s$ of~$\mathcal N$ does not exceed the supply of~$s$; (iii) the sum of the flows assigned to the arcs incoming into each sink~$t$ of~$\mathcal N$ does not exceed the demand of~$t$; and (iv) the sum of the flows assigned to the arcs incoming into each non-switch node of~$\mathcal N$ is equal to the sum of the flows assigned to the arcs outgoing from the same node. The \emph{value} of a flow is the sum of the flows assigned to the arcs incoming into the sinks (or, equivalently, outgoing from the sources).

Starting from the plane digraph~$G$ with planar embedding~$\mathcal E$, one can construct a planar flow network~$\mathcal N$, as illustrated in \cref{fig:flow-network}, where:
\begin{itemize}
    \item for each switch~$v$ of~$G$, the network~$\mathcal N$ contains a source~$s_v$ that supplies a single unit of flow;
    \item for each face~$f$ of~$\mathcal E$, the network~$\mathcal N$ contains a sink~$t_f$ that demands a number of units of flow equal to~$n_f/2 -1$ or~$n_f/2 +1$, depending on whether~$f$ is an internal face or the outer face of~$\mathcal E$, respectively (where~$n_f$ is the number of switch angles incident to~$f$);
    \item for each angle~$\alpha$ in~$\mathcal E$ at a switch of~$G$, the network~$\mathcal N$ contains a node~$w_\alpha$; and
    \item the network~$\mathcal N$ contains an arc from each source~$s_v$ to each node~$w_\alpha$ such that the angle~$\alpha$ is incident to~$v$ in~$\mathcal E$ and an arc from each node~$w_\alpha$ to each sink~$t_f$ such that the angle~$\alpha$ is incident to~$f$ in~$\mathcal E$; all such arcs have a capacity of a single unit of flow. 
\end{itemize}

\begin{figure}[tb!]
    \begin{subfigure}
    {.49\textwidth}\centering
\includegraphics[page=1,width=\linewidth]{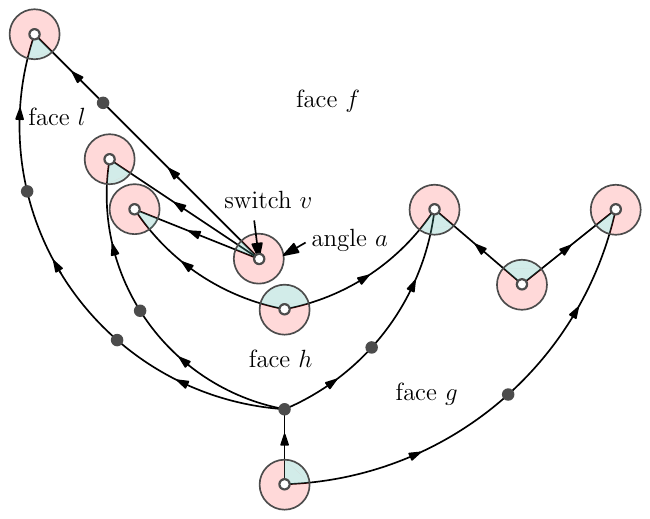}
    \subcaption{}
    \label{fig:instance}
    \end{subfigure}
    \hfil
    \begin{subfigure}
    {.49\textwidth}\centering    
    \includegraphics[page=2,width=\linewidth]{figures/network.pdf}
    \subcaption{}
    \label{fig:network}
    \end{subfigure}
    \caption{\label{fig:flow-network}(a)~An upward planar drawing~$\Gamma$ of an upward plane digraph~$G$; only large and small switch angles at switches of~$G$ are depicted (as  red and green sectors of discs centered at switch vertices, respectively). 
    (b)~The network~$\cal N$ constructed from~$G$ and its planar embedding; for convenience, the edges of~$G$, which are not part of~$\mathcal N$, are drawn as gray dashed curves. Arcs of~$\cal N$ traversed by the flow are red, whereas arcs with no flow are green.}
\end{figure}

It was proved in~\cite{DBLP:journals/algorithmica/BertolazziBLM94} that there exists an upward-consistent angle assignment~$\lambda$ for~$\mathcal E$ if and only if~$\mathcal N$ admits a feasible flow whose value is the sum~$d_T$ of the demands of the sinks in~$\mathcal N$ (or, equivalently, the sum of the supplies of the sources in~$\mathcal N$). More precisely, upward-consistent angle assignments and (integral) feasible flows with value~$d_T$ are in bijection\footnote{This statement assumes that the flow assigned to each arc is integer. This is not a loss of generality, since, in a network with integer supplies, demands, and capacities, a non-integer feasible flow with integer value can always be transformed into an integer feasible flow with the same value.}:
\begin{enumerate}
\item If an angle assignment~$\lambda$ is upward-consistent then one can get a feasible flow for~$\mathcal N$ with value~$d_T$ by assigning flow~$1$ to each arc incident to a node~$w_\alpha$ such that the angle assigned to~$\alpha$ by~$\lambda$ is large, and by assigning flow~$0$ to all other arcs.
\item If a feasible flow for~$\mathcal N$ has value~$d_T$, then one can get an upward-consistent angle assignment~$\lambda$ by assigning a large angle to each switch angle~$\alpha$ such that the arcs incident to node~$w_\alpha$ are assigned one unit of flow, and by assigning a small angle to all other switch angles.
\end{enumerate}



Testing whether~$G$ admits an upward planar drawing~$\Gamma$ respecting~$\mathcal E$ then becomes equivalent to testing whether~$\mathcal N$ admits a flow whose value is~$d_T$. An algorithm solving this problem in~$O(n\log^3 n)$ time is known \cite{DBLP:journals/siamcomp/BorradaileKMNW17}, which gives the running time of the best known upward planarity testing algorithm with fixed planar embedding.
    
In a nutshell, our idea is to modify~$\mathcal N$ so that there exists an upward-consistent angle assignment~$\lambda$ for~$\mathcal E$ such that~$(\mathcal E,\lambda)$ is a good embedding (i.e., such that~$G$ admits an upward book embedding respecting~$(\mathcal E,\lambda)$, by \cref{th:characterization}) if and only if~$\mathcal N$ admits a feasible flow whose value is~$d_{T}$. As in the result by Bertolazzi et al.~\cite{DBLP:journals/algorithmica/BertolazziBLM94}, the correspondence is actually stronger, as we show in the proof of \cref{th:planar-embedding} that, if $G$ is connected, the integral feasible flows with value~$d_T$ for the modified network are in bijection with the upward-consistent angle assignments~$\lambda$ for~$\mathcal E$ such that~$(\mathcal E,\lambda)$ is a good embedding.
We show an algorithm, called {\sc $\mathcal N$-modifier}, that performs a sequence of modifications to~$\mathcal N$. Along the way, {\sc $\mathcal N$-modifier} might stop and conclude that~$G$ admits no upward book embedding respecting~$\mathcal E$.

\smallskip
\noindent\textsc{The {\sc $\mathcal N$-modifier} algorithm.} We start by describing the modifications to~$\mathcal N$ that {\sc $\mathcal N$-modifier} performs in order to ensure 4-modality. Consider any vertex~$v$ of~$G$. 
\begin{itemize}
\item If~$v$ is not a switch, then its 4-modality does not depend on the angle assignment. Then {\sc $\mathcal N$-modifier} checks whether in~$\mathcal E$, in clockwise order around~$v$, we have all the outgoing left edges, then all the outgoing right edges, then all the incoming right edges, and finally all the incoming left edges, where one of the former two sets and/or one of the latter two sets might be empty. If the test is negative, {\sc $\mathcal N$-modifier} concludes that~$G$ admits no upward book embedding respecting~$\mathcal E$. 
Otherwise, the processing of~$v$ is concluded.
\item 
\begin{figure}[htb]
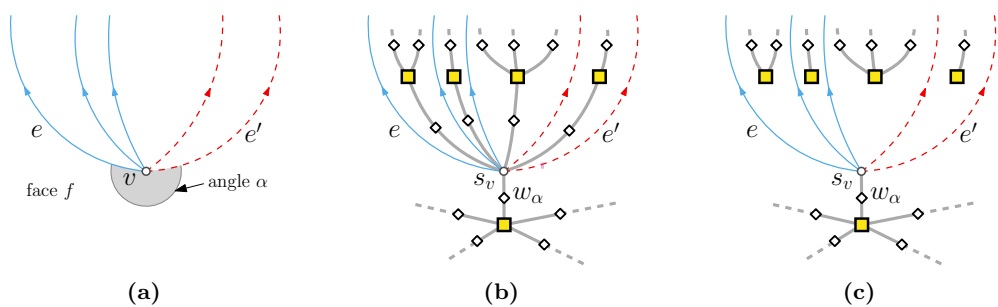

    \begin{subfigure}
    {.32\textwidth}\centering
\includegraphics[page=5,width=.8\linewidth]{figures/network.pdf}
    \subcaption{}
    \label{fig:4modality-vertex}
    \end{subfigure}
    \hfil
    \begin{subfigure}
    {.32\textwidth}\centering
\includegraphics[page=3,width=.8\linewidth]{figures/network.pdf}
    \subcaption{}
    \label{fig:4modality-before}
    \end{subfigure}
    \hfil
    \begin{subfigure}
    {.32\textwidth}\centering    
    \includegraphics[page=4,width=.8\linewidth]{figures/network.pdf}
    \subcaption{}
    \label{fig:4modality-after}
    \end{subfigure}
    \caption{\label{fig:flow-network-4-modality}Modification of~$\mathcal N$ to ensure~$4$-modality. (a) A vertex~$v$ that has both outgoing left edges and outgoing right edges.
    (b) The part of~$\mathcal N$ close to~$v$ (the edges of~$G$ do not belong to~$\mathcal N$, but they are shown to maintain a visual reference with (a)).
    (c) Modification of~$\mathcal N$ that removes all neighbors of~$s_v$ different from~$w_\alpha$.}
\end{figure}
If~$v$ is a switch, assume it is a source; if~$v$ is a sink its processing is analogous. Then {\sc $\mathcal N$-modifier} checks whether in~$\mathcal E$, the outgoing left edges and thus also the outgoing right edges are consecutive, where one of the two sets might be empty. If the test is negative, {\sc $\mathcal N$-modifier} concludes that~$G$ admits no upward book embedding respecting~$\mathcal E$. 
If the test is positive and~$v$ does not have any outgoing left edges or any outgoing right edges, the processing of~$v$ is concluded. If the test is positive, and~$v$ has both outgoing left edges and outgoing right edges, we modify~$\mathcal N$ as follows; refer to \cref{fig:flow-network-4-modality}. Let~$e$ be the left edge outgoing from~$v$ such that the edge~$e'$ preceding~$e$ in clockwise order around~$v$ is a right edge, let~$\alpha$ be the angle~$(e,e')$, and let~$f$ be the face to the left of~$e$. For each angle~$\beta\neq \alpha$ incident to~$v$ in~$\mathcal E$, the algorithm {\sc $\mathcal N$-modifier} removes~$w_\beta$ and its incident arcs from~$\mathcal N$. This modification corresponds to forcing the arc~$(s_v,w_\alpha)$ to be assigned one unit of flow, hence making~$\alpha$ large. 
\end{itemize}

We next describe the modifications {\sc $\mathcal N$-modifier} applies to~$\mathcal N$ in order to ensure that the upward embeddings corresponding to the feasible flows of~$\mathcal N$ with value~$d_T$ do not contain any impossible face. Consider any face~$f$ of~$\mathcal E$. Also, consider any maximal directed path~$\ell_f$ in the boundary of~$f$
that has~$f$ to its right (the treatment of the maximal directed paths that have~$f$ to their left is analogous). If~$\ell_f$ contains a left edge or if the rest of the boundary of~$f$ is a single edge, then the processing of~$\ell_f$ is concluded.  Otherwise, let~$u$ and~$v$ be the extremes of~$\ell_f$, let~$e^u_f$ and~$e^v_f$ be the edges of~$\ell_f$ incident to~$u$ and~$v$, let~$\alpha^u_f$ be the angle in~$f$ incident to~$u$ and to the right of~$e^u_f$, and let~$\alpha^v_f$ be the  angle in~$f$ incident to~$v$ and to the right of~$e^v_f$. By \cref{th:characterization}, the angle assignment needs to ensure that at least one of the angles~$\alpha^u_f$ and~$\alpha^v_f$ is~large. Since~$\ell_f$ is a maximal directed path, both~$\alpha^u_f$ and~$\alpha^v_f$ are switch angles. 
We say that~$\alpha^u_f$~(resp.\ $\alpha^v_f$) is \emph{enlargeable} if the node~$w_{\alpha^u_f}$ (resp.\ the node~$w_{\alpha^v_f}$) is in~$\mathcal N$. Note that, even if~$u$ (resp.~$v$) is a switch of~$G$, it might be that~$w_{\alpha^u_f}$ (resp.~$w_{\alpha^v_f}$) is not in~$\mathcal N$, by the effect of some previous modification  
of~$\mathcal N$. We thus distinguish three cases.
\begin{itemize}
\item If neither~$\alpha^u_f$ nor~$\alpha^v_f$ is enlargeable, then {\sc $\mathcal N$-modifier} concludes that~$G$ admits no upward book embedding respecting~$\mathcal E$. 
\item If exactly one of~$\alpha^u_f$ and~$\alpha^v_f$ is enlargeable, say~$\alpha^u_f$ is enlargeable and~$\alpha^v_f$ is not, then {\sc $\mathcal N$-modifier} modifies~$\mathcal N$ as follows: For each angle~$\beta\neq \alpha^u_f$ incident to~$u$ in~$\mathcal E$, the algorithm {\sc $\mathcal N$-modifier} removes~$w_\beta$ and its incident arcs from~$\mathcal N$. This modification corresponds to forcing the arc~$(s_u,w_{\alpha^u_f})$ to be assigned one unit of flow, hence making~$\alpha^u_f$~large. 
\item Finally, consider the situation in which both~$\alpha^u_f$ and~$\alpha^v_f$ are enlargeable; refer to \cref{fig:modification-avoid-impossible-faces}.

If the degree of $s_u$ or $s_v$ in~$\mathcal N$ is one, that is, if there exists no angle~$\beta\neq \alpha^u_f$ incident to~$u$ such that the corresponding node~$w_\beta$ is in~$\mathcal N$, or there exists no angle~$\beta\neq \alpha^v_f$ incident to~$v$ such that the corresponding node~$w_\beta$ is in~$\mathcal N$, then the processing of~$\ell_f$ is concluded. Indeed, it is already guaranteed that one of~$\alpha^u_f$ and~$\alpha^v_f$ will be a large angle. 

\begin{figure}[htb!]
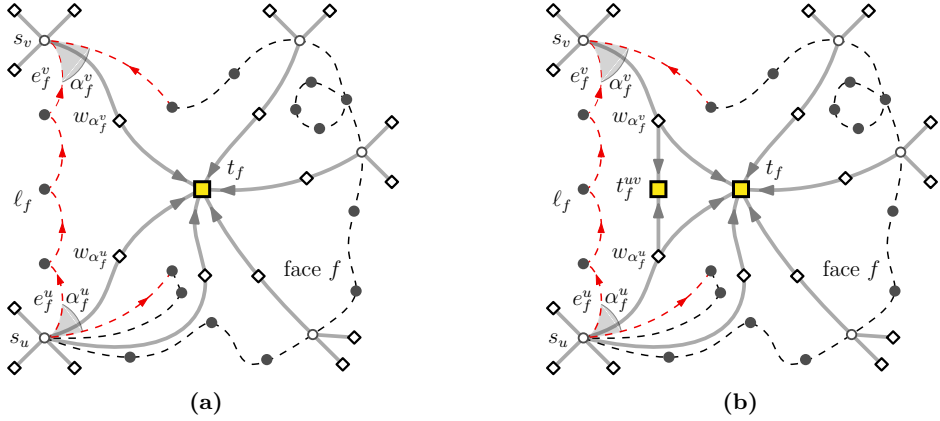

\centering
  \begin{subfigure}{.49\textwidth}\centering
\includegraphics[page=6,scale=.7]{figures/network.pdf}
    \label{fig:before-modification-avoid-impossible-faces}
    \subcaption{}
    \end{subfigure}
    \hfil
  \begin{subfigure}{.49\textwidth}\centering
\includegraphics[page=7,scale=.7]{figures/network.pdf}
    \subcaption{}
    \label{fig:after-modification-avoid-impossible-faces}
    \end{subfigure}
\caption{\label{fig:modification-avoid-impossible-faces}Modification to~$\mathcal N$ to avoid impossible faces determined by a maximal directed path~$\ell_f$ composed of right edges that has the face~$f$ on its right, when both angles~$\alpha^u_f$ and~$\alpha^v_f$ at the extremes~$u$ and~$v$ of~$\ell_f$, respectively, are enlargeable and the degree of both $s_u$ and $s_v$ in~$\mathcal N$ is greater than one. The part of~$\mathcal N$ associated with~$f$ before (a) and after (b) the modification.}
\end{figure}

Otherwise, {\sc $\mathcal N$-modifier} adds to~$\mathcal N$ a sink~$t^{uv}_f$ with demand~$1$, as well as arcs~$(w_{\alpha^u_f},t^{uv}_f)$ and~$(w_{\alpha^v_f},t^{uv}_f)$, each with capacity~$1$, and decreases by~$1$ the demand of~$t_f$. This modification to~$\mathcal N$ corresponds to forcing one of~$(w_{\alpha^u_f},t^{uv}_f)$ and~$(w_{\alpha^v_f},t^{uv}_f)$ (and consequently one of~$(s_u,w_{\alpha^u_f})$ and~$(s_v,w_{\alpha^v_f})$) to be assigned one unit of flow, hence making large the angle~$\alpha^u_f$ or the angle~$\alpha^v_f$, respectively. Note that~$\mathcal N$ remains a planar flow network. Also, if~$(w_{\alpha^u_f},t^{uv}_f)$ (resp.\ $(w_{\alpha^v_f},t^{uv}_f)$) is not assigned one unit of flow, then~$(w_{\alpha^u_f},t_f)$ (resp.~$(w_{\alpha^v_f},t_f)$) might be assigned one unit of flow, thus making large both~$\alpha^u_f$ and~$\alpha^v_f$. 
\end{itemize}

We are now ready to state our result for plane digraphs.

\begin{theorem} \label{th:planar-embedding}
Let~$G=(V,L \cup R)$ be an~$n$-vertex partitioned plane digraph with planar embedding~$\mathcal E$. It is possible to test in~$O(n\log^3 n)$ time whether~$G$ admits an upward book embedding respecting~$\mathcal E$. 
\end{theorem}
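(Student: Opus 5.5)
We first treat the case in which $G$ is connected, and reduce the disconnected case to it at the end. By \cref{th:characterization}, $G$ admits an upward book embedding respecting $\mathcal E$ if and only if there is an angle assignment $\lambda$ such that $(\mathcal E,\lambda)$ is a good embedding. By \cref{th:upward-conditions}, this means that $\mathcal E$ is bimodal and $\lambda$ is upward-consistent, 4-modal, and creates no impossible face. The algorithm proceeds in four steps.

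First, we test bimodality in $O(n)$ time. Second, we build the network $\mathcal N$ of Bertolazzi et al. Third, we run {\sc $\mathcal N$-modifier}. Fourth, we test whether the modified network admits a feasible flow of value $d_T$, using the planar multiple-source multiple-sink maximum-flow algorithm of~\cite{DBLP:journals/siamcomp/BorradaileKMNW17} in $O(n\log^3 n)$ time. To apply that algorithm, we must check that the modified network is still planar and has $O(n)$ size.

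\begin{itemize}
\item Planarity holds because node removals trivially preserve it. Each new sink $t^{uv}_f$ can be drawn inside $f$, next to $t_f$, adjacent to the two nodes $w_{\alpha^u_f}$ and $w_{\alpha^v_f}$.
\item We need these insertions to be mutually non-crossing within a face. The maximal directed paths of $f$ are consecutive along the boundary of $f$, and each path contributes one sink adjacent to the two angles at its extremes. These form a cyclic sequence of ``consecutive pairs'', so their arcs can be drawn without crossings.
\item The number of added sinks is at most the number of maximal paths, which is $O(n)$.
\item The running time of {\sc $\mathcal N$-modifier} is linear, provided each angle and path is processed in amortized constant time. A vertex forced to a single angle stays forced, and we store the current degree of each $s_v$.
\end{itemize}

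The core of the proof is the correspondence claim: integral feasible flows of value $d_T$ in the final network correspond to upward-consistent $\lambda$ with $(\mathcal E,\lambda)$ good. We prove it in two directions.

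\begin{itemize}
\item \textbf{Flow to good embedding.} Every unit leaving a source $s_v$ passes through exactly one node $w_\alpha$, and we declare $\alpha$ large. Total demand is preserved, since each new sink takes one unit away from $t_f$. Hence every node $w_\alpha$ still routes its unit to either $t_f$ or $t^{uv}_f$, where $f$ is the face containing $\alpha$. This gives face sums equal to $\mp2$ exactly as in Bertolazzi et al., so $\lambda$ is upward-consistent. 4-modality at non-switches was checked directly. At a switch, either one of the two edge sets is empty, in which case the large angle must be the one separating them cyclically and 4-modality is automatic; or the large angle was forced to $\alpha$. Each dangerous path $\ell_f$ has at least one large extreme angle, for one of three reasons: it was forced, $t^{uv}_f$ must be saturated, or the degree-one situation makes the angle forced. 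So no face is impossible.
\item \textbf{Good embedding to flow.} Conversely, a good $\lambda$ satisfies all forcings. These are necessary: 4-modality at a source with both edge types determines the large angle uniquely, and a rejection or forcing triggered by an impossible face is justified by \cref{th:characterization}. For each $t^{uv}_f$, we route one unit through an extreme angle that is large, and send all other large angles of $f$ to $t_f$.
\end{itemize}

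The delicate point, which we expect to be the main obstacle, is the case where one angle $w_\alpha$ is an extreme of two dangerous paths of the same face, or of paths in different roles. The unit at $w_\alpha$ can saturate only one sink, so a single large angle might have to satisfy two sinks. We plan to argue that two consecutive maximal paths sharing the switch angle $\alpha$ have $f$ on opposite sides. One of them must consist of left edges and the other of right edges. But 4-modality at a vertex with a small-angle-free, single-sided configuration rules out both being dangerous simultaneously, so at most one sink per angle is needed. If this fails, we would instead give the sinks capacity-adjusted demands along chains of consecutive paths. This is still planar and still solvable by the same algorithm.

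Finally, the disconnected case. The extra condition is that a component placed inside an internal $st$-face $f$ with a single-edge side makes $f$ impossible. That condition depends only on $\mathcal E$: it concerns which faces are $st$-faces and have a single-edge side, and internal faces of a component containing another component. Hence we handle it by an independent check per component, combined with the unique large-angle structure at the nesting. Concretely, we treat each component separately, with its outer-face demand fixed, and add the nesting faces' $st$-requirements as constraints. This amounts to merging $t_f$ nodes of the nested faces, which keeps the network planar. The total running time remains $O(n\log^3 n)$.
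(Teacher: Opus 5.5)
For connected $G$ your plan is the paper's proof. You use the same \textsc{$\mathcal N$-modifier} network, the algorithm of Borradaile et al., and the same correspondence between flows of value $d_T$ and good angle assignments. The paper proves that correspondence by induction over $\mathcal N_0,\dots,\mathcal N_k$; your direct argument works equally well. Your explicit argument that the added sinks preserve planarity is a useful addition, since the paper only asserts it.

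The ``delicate point'' you isolate is exactly the paper's property that every $w_\alpha$ has at most one arc to an added sink. Your idea is right, but pin it down as follows. Let $\alpha=(e_1,e_2)$ be an angle at a source $u$ in $f$ that is an extreme angle of two candidate paths. The path through $e_1$ has $f$ to its right, so it must lie in $R$; the path through $e_2$ has $f$ to its left, so it must lie in $L$. Hence $\alpha$ is the clockwise (right, left) angle at a source with both outgoing left and right edges; the case of a sink is symmetric. This is precisely the angle forced large in the 4-modality phase. So $s_u$ already has degree one when the two paths are processed, and the degree-one rule attaches no sink to $w_\alpha$. The ``capacity-adjusted demands'' fallback is therefore unnecessary.

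The genuine gap is in the disconnected case.

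\textbf{The condition is misstated.} A component inside an internal $st$-face with a single-edge side does not necessarily make that face impossible. Take the triangle $s\to x\to t$, $s\to t$ with $(s,x)\in L$ and $(x,t),(s,t)\in R$. Its internal face can host another component, because the spine segment between $s$ and $x$ lies in that face, and the resulting embedding is good. The face becomes impossible only if, in addition, the multi-edge side is entirely in $R$ (when it is the left side) or entirely in $L$ (when it is the right side). The single edge is then forced into the same part by 4-modality at the face's source. Since the components are individually good, this is equivalent to the boundary of the hosting face being monochromatic.

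\textbf{Your implementation does not test it.} Neither ``adding the nesting faces' $st$-requirements as constraints'' nor ``merging $t_f$ nodes'' checks this condition. Both switch angles of an internal $st$-face are small under every upward-consistent assignment, so the condition is independent of $\lambda$ and is not a flow constraint at all. As in the paper, run the connected-case test on each component. Then, in $O(n)$ time, reject if and only if some component lies in an internal face of another component whose boundary does not contain edges of both $L$ and $R$. Correctness follows from the disconnected part of the proof of \cref{th:characterization}.
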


\begin{proof}
We start by considering instances such that~$G$ is connected. The core of the proof consists of proving the aforementioned bijection between the feasible flows with value~$d_T$ of the network constructed by the algorithm {\sc $\mathcal N$-modifier} and the upward-consistent angle assignments~$\lambda$ such that~$(\mathcal E,\lambda)$ is a good embedding (below we state such a bijection precisely). Then \cref{th:characterization} implies that the existence of an upward book embedding of~$G$ respecting~$\mathcal E$ is equivalent to the fact that {\sc $\mathcal N$-modifier} both 1) did not conclude that~$G$ admits no upward book embedding respecting~$\mathcal E$ and 2) constructed a planar flow network~$\mathcal N$ that has a feasible flow with value~$d_T$. Since {\sc $\mathcal N$-modifier} can be easily implemented to run in~$O(n)$ time and since the algorithm by Borradaile et al.~\cite{DBLP:journals/siamcomp/BorradaileKMNW17} to test whether~$\mathcal N$ admits a feasible flow with the required value runs in~$O(n \log^3 n)$ time, the theorem follows.

{\bf Notation.} We introduce some notation. Suppose first that {\sc $\mathcal N$-modifier} did not conclude that~$G$ admits no upward book embedding respecting~$\mathcal E$. Let~$\mathcal N_0$ be the flow network constructed by Bertolazzi et al.~\cite{DBLP:journals/algorithmica/BertolazziBLM94}. Recall that, in a first phase, {\sc $\mathcal N$-modifier} considers the vertices of~$G$ in some order, performs some checks, and possibly modifies the flow network so to ensure the 4-modality of the vertices. For~$i=1,\dots,n$, denote by~$\mathcal N_i$ the flow network constructed by {\sc $\mathcal N$-modifier} after considering the~$i$-th vertex of~$G$ so to ensure its 4-modality. If {\sc $\mathcal N$-modifier} did not modify the flow network when considering the~$i$-th vertex of~$G$, then~$\mathcal N_i$ is the same network as~$\mathcal N_{i-1}$. In a second phase, {\sc $\mathcal N$-modifier} considers the maximal directed paths on the boundary of the faces (let~$p$ be the number of such paths), performs some checks, and possibly modifies the flow network so to ensure the absence of impossible faces. For~$i \in \{n+1,\dots,n+p\}$, denote by~$\mathcal N_{i}$ the flow network constructed by {\sc $\mathcal N$-modifier} after considering the~$(i-n)$-th maximal directed path on the boundary of some face of~$\mathcal E$. If {\sc $\mathcal N$-modifier} did not modify the flow network when considering such a path, then~$\mathcal N_{i}$ is the same network as~$\mathcal N_{i-1}$. If the algorithm did conclude that~$G$ admits no upward book embedding respecting~$\mathcal E$, then the notation for the flow networks constructed by {\sc $\mathcal N$-modifier} is restricted to the networks constructed before the termination.  Let~$\mathcal N_k$ be the last constructed network ($k=n+p$ if {\sc $\mathcal N$-modifier} did not conclude that~$G$ admits no upward book embedding respecting~$\mathcal E$, and~$k<n+p$ otherwise). For any~$i\in \{0,1,\dots,k\}$, let~$\mathcal V_i$ be the set of the first~$\min\{i,n\}$ vertices processed by {\sc $\mathcal N$-modifier} and let~$\mathcal P_i$ be the set of the first~$\max\{i-n,0\}$ maximal directed paths processed by~$\mathcal N$-modifier.

{\bf Structure.} We make some observations on the structure of the networks~$\mathcal N_0,\mathcal N_1,\dots,\mathcal N_k$. 
\begin{itemize}
	\item {\em Demand preservation.} The sum of the demands of the sinks and the sum of the demands of the sources is the same value, which we denote by~$d_T$, in all the networks~$\mathcal N_0,\mathcal N_1,\dots,\mathcal N_k$. 
	\item {\em Outgoing arcs for angle nodes.} Any node~$w_\alpha$ corresponding to a switch angle~$\alpha$ in a network~$\mathcal N_i$, for some~$i\in \{0,\dots,k\}$, has an outgoing arc to the sink~$t_f$ corresponding to the face~$f$ of $\mathcal E$ angle~$\alpha$ is incident to. Also,~$w_\alpha$ has at most one more outgoing arc; such an arc, if it exists, is directed towards a sink {\em associated with~$t_f$} that is introduced when processing a path in the boundary of~$f$.
\end{itemize}

Concerning property {\em demand preservation}, note that whenever {\sc $\mathcal N$-modifier} modifies the sinks and their demands, it decreases the demand of a sink~$t_f$ by~$1$ and introduces a sink~$t^{uv}_f$ associated with~$t_f$ with demand~$1$. Thus the sum of the demands of the sinks remains the same. Also, the algorithm never modifies the sources and their supplies.

In order to prove property {\em outgoing arcs for angle nodes}, consider a node~$w_\alpha$ corresponding to a switch angle~$\alpha$ in a network~$\mathcal N_i$, for some~$i\in \{0,\dots,k\}$. 
Let~$v$ and~$f$ be  the vertex of~$G$ and the face of~$\mathcal E$ incident to the angle~$\alpha$, respectively. 
First,~$w_\alpha$ has an outgoing arc to the sink~$t_f$ corresponding to~$f$. Indeed, if {\sc $\mathcal N$-modifier} removed the arc~$(w_\alpha,t_f)$ from a network~$\mathcal N_j$, for some~$j<i$, then it also removed the vertex~$w_\alpha$, which contradicts the assumption that~$w_\alpha$ belongs to~$\mathcal N_i$. 
Second, we prove that~$w_\alpha$ has at most one more outgoing arc and that such an arc, if it exists, is directed towards a sink associated with~$t_f$. Recall that~$v$ is a switch in~$G$. 
Assume it is a sink, the case in which it is a source is analogous. The algorithm {\sc $\mathcal N$-modifier} inserts an arc outgoing~$w_\alpha$ in a network~$\mathcal N_j$ with~$j<i$ only if:
\begin{itemize}
	\item  there exists a maximal directed path~$\ell_f$ in the boundary of~$f$ that ends at~$v$, is entirely composed of right edges, has~$f$ to its right, and the angle in~$f$ to the right of the last edge of~$\ell_f$ is~$\alpha$; or
	\item  there exists a maximal directed path~$r_f$ in the boundary of~$f$ that ends at~$v$, is entirely composed of left edges, has~$f$ to its left, and the angle in~$f$ to the left of the last edge of~$r_f$ is~$\alpha$.
\end{itemize} 
It remains to observe that if both the above paths~$\ell_f$ and~$r_f$ exist, then~$\alpha$ is delimited by a right and a left edge; refer to \cref{fig:both_paths_and_4-modality}. Hence, in order to ensure the 4-modality of~$v$, the algorithm {\sc $\mathcal N$-modifier} would have removed from the network all the nodes corresponding to angles incident to~$v$, except for~$\alpha$. Thus, it would not have introduced any new sink associated with~$t_f$ when processing paths~$\ell_f$ and~$r_f$.

\begin{figure}[h!]
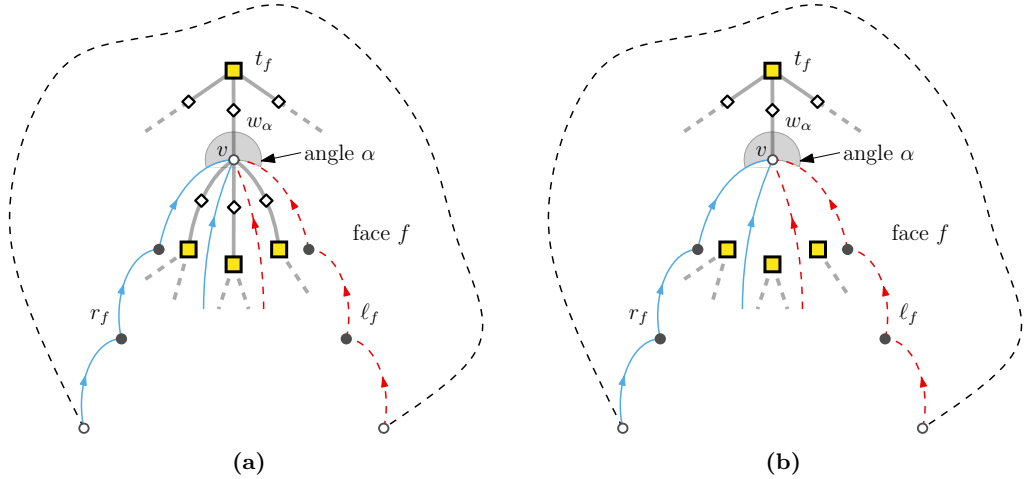

\centering
  \begin{subfigure}{.49\textwidth}\centering
\includegraphics[page=9,scale=.7]{figures/network.pdf}
    \label{fig:both_paths_and_4-modality_before}
    \subcaption{}
    \end{subfigure}
    \hfil
  \begin{subfigure}{.49\textwidth}\centering
\includegraphics[page=8,scale=.7]{figures/network.pdf}
    \subcaption{}
    \label{fig:both_paths_and_4-modality_after}
    \end{subfigure}
\caption{\label{fig:both_paths_and_4-modality} Illustration for the proof of property  {\em outgoing arcs for angle nodes} in the case in which there exists paths $\ell_f$ and $r_f$ that have a common face $f$ to their right and left, respectively, and are directed toward a sink $v$ of $G$. The network $\mathcal N$ before (a) and after (b) applying the transformation that ensures the $4$-modality of $v$.}
\end{figure}


{\bf Correspondence.} Consider an upward-consistent angle assignment~$\lambda$ for~$\mathcal E$ and consider a network~$\mathcal N_i$ with~$i\in \{0,1,\dots,k\}$. We define a corresponding flow~$\mathcal F_\lambda(\mathcal N_i)$ as follows. For each angle~$\alpha$ of~$\mathcal E$ that is assigned a large angle by~$\lambda$, consider the node~$w_{\alpha}$ (if it is in~$\mathcal N_i$). Then~$\mathcal F_\lambda(\mathcal N_i)$ assigns flow~$1$ to the unique arc incoming into~$w_{\alpha}$. By property {\em outgoing arcs for angle nodes},~$w_{\alpha}$ has either one or two outgoing arcs. If~$w_{\alpha}$ has one outgoing arc, then~$\mathcal F_\lambda(\mathcal N_i)$ assigns flow~$1$ to it. If~$w_{\alpha}$ has two outgoing arcs, then one of them is directed to the sink~$t_f$ corresponding to the face~$f$ angle~$\alpha$ is incident to, and one of them is directed to a sink~$t^{uv}_f$ associated with~$t_f$. If an arc incoming into~$t^{uv}_f$ has already been assigned flow~$1$, then~$\mathcal F_\lambda(\mathcal N_i)$ assigns flow~$1$ to the arc~$(w_{\alpha},t_f)$, otherwise it assigns flow~$1$ to the arc~$(w_{\alpha},t^{uv}_f)$. After processing all angles of~$\mathcal E$ that are assigned a large angle by~$\lambda$,~$\mathcal F_\lambda(\mathcal N_i)$ assigns flow~$0$ to all arcs of~$\mathcal N_i$ that are not assigned flow~$1$. Since all arcs of~$\mathcal N_i$ have capacity~$1$, obviously~$\mathcal F_\lambda(\mathcal N_i)$ is feasible. However, we will need to prove that the value of~$\mathcal F_\lambda(\mathcal N_i)$ is~$d_T$.

Conversely, consider a feasible flow~$\mathcal F$ for a network~$\mathcal N_i$ with~$i\in \{0,1,\dots,k\}$. We define a corresponding angle-assignment~$\lambda_\mathcal F$ for~$\mathcal E$ as follows. We have that~$\lambda_\mathcal F$ assigns~$0$ to every flat angle of~$\mathcal E$. Also, it assigns~$1$ to a switch angle~$\alpha$ at a vertex~$v$ if and only if~$v$ is a switch in~$G$, the node~$w_{\alpha}$ is in~$\mathcal N_i$, and the arc~$(s_v,w_{\alpha})$ is assigned one unit of flow by~$\mathcal F$. Finally, it assigns~$-1$ to every switch angle~$\alpha$ that is not assigned flow~$1$. 

\smallskip\noindent {\bf Bijection.} The bijection is formalized as follows. For any~$i\in \{0,1,\dots,k\}$, we have that:
\begin{enumerate}
\item If an angle assignment~$\lambda$ for~$\mathcal E$ is upward-consistent and the upward embedding~$(\mathcal E,\lambda)$ is such that the vertices in~$\mathcal V_i$ are 4-modal and the paths in~$\mathcal P_i$ do not cause an impossible face, then {\sc $\mathcal N$-modifier} does not conclude that~$G$ admits no upward book embedding respecting~$\mathcal E$ when processing these vertices and paths. Also, the flow~$\mathcal F_{\lambda}(\mathcal N_i)$ has value~$d_T$.
\item If {\sc $\mathcal N$-modifier} does not conclude that~$G$ admits no upward book embedding respecting~$\mathcal E$ when processing the vertices in~$\mathcal V_i$ and the paths in~$\mathcal P_i$, and if a feasible flow~$\mathcal F$ for~$\mathcal N_i$ has value~$d_T$, then~$(\mathcal E,\lambda_{\mathcal F})$ is an upward embedding in which the vertices in~$\mathcal V_i$ are 4-modal and the paths in~$\mathcal P_i$ do not cause an impossible face.  
\end{enumerate}

When~$i=k$, the bijection provides the desired correspondence between the feasible flows with value~$d_T$ of~$\mathcal N_k$ and the upward-consistent angle assignments~$\lambda$ such that~$(\mathcal E,\lambda)$ is 4-modal and does not contain any impossible face, since~$\mathcal V_k$ contains all the vertices of~$G$ and~$\mathcal P_k$ all the maximal directed paths on the boundary of any face.


In the base case of the induction, we have~$i=0$ and the statement follows by the results of Bertolazzi et al.~\cite{DBLP:journals/algorithmica/BertolazziBLM94}, since~$\mathcal N_0$ is the network they construct and the sets~$\mathcal V_0$ and~$\mathcal P_0$ are empty, hence the additional constraints of our bijection are vacuously satisfied. Suppose now that the bijection holds true for some value~$i-1\in \{0,1,\dots,k-1\}$, we prove that it holds true for~$i$, as well.

{\bf First implication.} Suppose that an angle assignment~$\lambda$ for~$\mathcal E$ is upward-consistent and the upward embedding~$(\mathcal E,\lambda)$ is such that the vertices in~$\mathcal V_i$ are 4-modal and the paths in~$\mathcal P_i$ do not cause an impossible face. By induction, {\sc $\mathcal N$-modifier} did not conclude that~$G$ admits no upward book embedding respecting~$\mathcal E$ when processing the vertices in~$\mathcal V_{i-1}$ and the paths in~$\mathcal P_{i-1}$. Also, the flow~$\mathcal F_{\lambda}(\mathcal N_{i-1})$ has value~$d_T$. We have to prove that the same statements hold true with~$i$ in place of~$i-1$.

Suppose first that~$i\leq n$ (which implies that~$\mathcal P_i=\emptyset$). Let~$v$ be the~$i$-th vertex processed by~$\mathcal N$-modifier:~$\mathcal V_i=\mathcal V_{i-1}\cup \{v\}$. 
We first show that {\sc $\mathcal N$-modifier} did not conclude that~$G$ admits no upward book embedding respecting~$\mathcal E$ when processing~$v$. Such a conclusion is reached by the algorithm in one of two cases, namely if: (i)~$v$ is not a switch of~$G$ and in~$\mathcal E$, in clockwise order around~$v$, we do not have outgoing left edges, outgoing right edges, incoming right edges, and incoming left edges; or if (ii)~$v$ is a switch of~$G$ and in~$\mathcal E$ the left edges (and thus also the right edges) are not consecutive. 
However, the fact that~$v$ is 4-modal in~$(\mathcal E,\lambda)$ ensures that these cases do not happen. 
It follows that {\sc $\mathcal N$-modifier} did not conclude that~$G$ admits no upward book embedding respecting~$\mathcal E$ when processing~$v$ and constructed a flow network~$\mathcal N_i$. 

In order to prove that~$\mathcal F_{\lambda}(\mathcal N_{i})$ satisfies the required properties, we distinguish two cases. 
\begin{itemize}
\item In the first case,~$\mathcal N_i$ coincides with~$\mathcal N_{i-1}$. 
This implies that~$\mathcal F_{\lambda}(\mathcal N_{i})$ coincides with~$\mathcal F_{\lambda}(\mathcal N_{i-1})$, hence it has value~$d_T$.
\item In the second case,~$\mathcal N_i$ is different from~$\mathcal N_{i-1}$. This happens if~$v$ is a switch in~$G$ and has both left and right edges incident to it. Assume that~$v$ is a source, the case in which it is a sink being analogous. Let~$e$ be the left edge outgoing from~$v$ such that the edge~$e'$ preceding~$e$ in clockwise order around~$v$ is a right edge, and let~$\alpha$ be the angle~$(e,e')$. The 4-modality of~$v$ forces~$\alpha$ to be a large angle in~$(\mathcal E,\lambda)$. Hence,~$\mathcal F_{\lambda}(\mathcal N_{i-1})$ assigns flow~$1$ to the arcs incident to~$w_\alpha$ (note that there is a single arc outgoing from~$w_\alpha$, by the assumption~$i\leq n$ and by property {\em outgoing arcs for angle nodes}). Since the supply of~$s_v$ is~$1$, we have that~$\mathcal F_{\lambda}(\mathcal N_{i-1})$ assigns flow~$0$ to the arcs incident to~$w_\beta$, for every angle~$\beta\neq \alpha$ incident to~$v$. Since~$\mathcal N_i$ coincides with~$\mathcal N_{i-1}$, apart from the fact that, for every angle~$\beta\neq \alpha$ incident to~$v$, the node~$w_\beta$ and the arcs incident to it are not present in~$\mathcal N_i$, it follows that~$\mathcal F_{\lambda}(\mathcal N_{i})$ coincides with the flow for~$\mathcal N_i$ which is obtained from~$\mathcal F_{\lambda}(\mathcal N_{i-1})$ by neglecting the arcs~incident to~$w_\beta$, for every angle~$\beta\neq \alpha$ incident to~$v$. Since such arcs are assigned flow~$0$ by~$\mathcal F_{\lambda}(\mathcal N_{i-1})$, we have that~$\mathcal F_{\lambda}(\mathcal N_{i})$ has value~$d_T$. 
\end{itemize}

Suppose next that~$n+1\leq i\leq k$ (which implies that~$\mathcal V_i$ is the entire vertex set of~$G$). Consider the~$(i-n)$-th maximal directed path processed by~$\mathcal N$-modifier. Suppose that it is in the boundary of a face~$f$, with~$f$ to its right, the case in which the path has~$f$ to its left being analogous. Denote by~$\ell_f$ such a path. 

If~$\ell_f$ contains a left edge or the rest of the boundary of~$f$ is a single edge, then~$\mathcal N_i$ coincides with~$\mathcal N_{i-1}$. It follows that~$\mathcal F_{\lambda}(\mathcal N_{i})$ coincides with~$\mathcal F_{\lambda}(\mathcal N_{i-1})$, hence it has value~$d_T$. 

If~$\ell_f$ only consists of right edges and the rest of the boundary of~$f$ is not a single edge, then let~$u$ and~$v$ be the extremes of~$\ell_f$, let~$e^u_f$ and~$e^v_f$ be the edges of~$\ell_f$ incident to~$u$ and~$v$, respectively, let~$\alpha^u_f$ be the angle in~$f$ incident to~$u$ and to the right of~$e^u_f$, and let~$\alpha^v_f$ be the  angle in~$f$ incident to~$v$ and to the right of~$e^v_f$. Since~$\ell_f$ does not cause~$f$ to be an impossible face, it follows that one of~$\alpha^u_f$ and~$\alpha^v_f$ is large in~$(\mathcal E,\lambda)$. 

Suppose that~$\alpha^u_f$ is large in~$(\mathcal E,\lambda)$, the case in which~$\alpha^v_f$ is large in~$(\mathcal E,\lambda)$ is analogous. This implies that~$u$ is a switch of~$G$, hence~$s_u$ is a source in~$\mathcal N_{i-1}$ and~$\mathcal N_{i}$. Also, the node~$w_{\alpha^u_f}$ belongs to~$\mathcal N_{i-1}$ and the arc~$(s_u,w_{\alpha^u_f})$ is assigned one unit of flow by~$\mathcal F_{\lambda}(\mathcal N_{i-1})$.
It follows that~$\alpha^u_f$ is enlargeable in~$\mathcal N_{i-1}$ and hence the algorithm {\sc $\mathcal N$-modifier} did not conclude that~$G$ admits no upward book embedding respecting~$\mathcal E$ when processing~$\ell_f$. 
Since the supply of~$s_u$ is~$1$, we have that~$\mathcal F_{\lambda}(\mathcal N_{i-1})$ assigns flow~$0$ to the arcs incident to~$w_\beta$, for every angle~$\beta\neq \alpha^u_f$ incident to~$u$. We now distinguish three cases.

\begin{itemize}
\item Suppose first that~$\mathcal N_i$ coincides with~$\mathcal N_{i-1}$. This happens if there is no angle~$\beta\neq \alpha^u_f$ incident to~$u$ in~$\mathcal E$ such that~$w_\beta$ is in~$\mathcal N_{i-1}$, or if~$\alpha^v_f$ is also enlargeable in~$\mathcal N_{i-1}$ and there is no angle~$\beta\neq \alpha^v_f$ incident to~$v$ in~$\mathcal E$ such that~$w_\beta$ is in~$\mathcal N_{i-1}$. In this case~$\mathcal F_{\lambda}(\mathcal N_{i})$ coincides with~$\mathcal F_{\lambda}(\mathcal N_{i-1})$, hence it has value~$d_T$.
\item Suppose next that~$\alpha^v_f$ is not enlargeable in~$\mathcal N_{i-1}$ and there is an angle~$\beta\neq \alpha^u_f$ incident to~$u$ in~$\mathcal E$ such that~$w_\beta$ is in~$\mathcal N_{i-1}$. By definition,~$w_{\alpha^v_f}$ does not belong to~$\mathcal N_{i-1}$. Then the network~$\mathcal N_i$ coincides with~$\mathcal N_{i-1}$, apart from the fact that, for every angle~$\beta\neq \alpha^u_f$ incident to~$u$ in~$\mathcal E$, the arcs incident to~$w_\beta$ are not present in~$\mathcal N_i$. It follows that~$\mathcal F_{\lambda}(\mathcal N_{i})$ coincides with the flow for~$\mathcal N_i$ which is obtained from~$\mathcal F_{\lambda}(\mathcal N_{i-1})$ by neglecting the arcs~incident to~$w_\beta$, for every angle~$\beta\neq \alpha$ incident to~$u$. Since such arcs are assigned flow~$0$ by~$\mathcal F_{\lambda}(\mathcal N_{i-1})$, we have that~$\mathcal F_{\lambda}(\mathcal N_{i})$ has value~$d_T$. 
\item Suppose finally that there exists an angle~$\beta\neq \alpha^u_f$ incident to~$u$ such that the corresponding node~$w_\beta$ is in~$\mathcal N_{i-1}$, that~$\alpha^v_f$ is enlargeable in~$\mathcal N_{i-1}$, and that there exists an angle~$\beta'\neq \alpha^v_f$ incident to~$v$ such that the corresponding node~$w_{\beta'}$ is in~$\mathcal N_{i-1}$. Then~$\mathcal N_i$ is obtained from~$\mathcal N_{i-1}$ by inserting a sink~$t^{uv}_f$ with demand~$1$, as well as arcs~$(w_{\alpha^u_f},t^{uv}_f)$ and~$(w_{\alpha^v_f},t^{uv}_f)$, each with capacity~$1$, and by decreasing by~$1$ the demand of~$t_f$. Note that~$w_{\alpha^u_f}$ has a single outgoing arc~$(w_{\alpha^u_f},t_f)$ in~$\mathcal N_{i-1}$, as if it had at least two outgoing arcs in~$\mathcal N_{i-1}$ it would have at least three outgoing arcs in~$\mathcal N_i$, which is not possible by property {\em outgoing arcs for angle nodes}. It follows that~$(w_{\alpha^u_f},t_f)$ (in addition to~$(s_u,w_{\alpha^u_f})$) is assigned one unit of flow by~$\mathcal F_{\lambda}(\mathcal N_{i-1})$. Analogously,~$w_{\alpha^v_f}$ has a single outgoing arc~$(w_{\alpha^v_f},t_f)$ in~$\mathcal N_{i-1}$ and, if~$\alpha^v_f$ is large, then~$(s_v,w_{\alpha^v_f})$ and~$(w_{\alpha^v_f},t_f)$ are assigned one unit of flow by~$\mathcal F_{\lambda}(\mathcal N_{i-1})$. Now~$\mathcal F_{\lambda}(\mathcal N_{i})$ coincides with~$\mathcal F_{\lambda}(\mathcal N_{i-1})$, except that a unit of flow that is assigned to~$(w_{\alpha^u_f},t_f)$ or~$(w_{\alpha^v_f},t_f)$ by~$\mathcal F_{\lambda}(\mathcal N_{i-1})$ is assigned to~$(w_{\alpha^u_f},t^{uv}_f)$ or~$(w_{\alpha^v_f},t^{uv}_f)$, respectively, by~$\mathcal F_{\lambda}(\mathcal N_{i})$. Thus~$\mathcal F_{\lambda}(\mathcal N_{i})$ has value~$d_T$ since~$\mathcal F_{\lambda}(\mathcal N_{i-1})$ has value~$d_T$.   
\end{itemize}

This completes the induction and hence the proof of the forward implication.


\smallskip
\noindent {\bf Second implication.} Suppose that {\sc $\mathcal N$-modifier} did not conclude that~$G$ admits no upward book embedding respecting~$\mathcal E$ when processing the vertices in~$\mathcal V_i$ and the paths in~$\mathcal P_i$. Suppose also that a feasible flow~$\mathcal F_i$ for~$\mathcal N_i$ has value~$d_T$. From~$\mathcal F_i$, we construct a flow~$\mathcal F_{i-1}$ for~$\mathcal N_{i-1}$ as follows. 

\begin{itemize}
	\item If~$\mathcal N_i$ coincides with~$\mathcal N_{i-1}$, then~$\mathcal F_{i-1}$ coincides with~$\mathcal F_i$.
	\item Suppose next that~$\mathcal N_i$ is constructed from~$\mathcal N_{i-1}$ by removing, for a certain angle~$\alpha$ of~$\mathcal E$ incident to a vertex~$v$, the node~$w_\beta$ and the arcs incident to it, for every angle~$\beta\neq \alpha$ incident to~$v$. Then~$\mathcal F_{i-1}$ is obtained from~$\mathcal F_i$ by assigning flow~$0$ to the arcs that belong to~$\mathcal N_{i-1}$ and not to~$\mathcal N_i$.
	\item Suppose finally that 	$\mathcal N_i$ is constructed from~$\mathcal N_{i-1}$ by adding a sink~$t^{uv}_f$ with demand~$1$, by adding two arcs~$(w_{\alpha^u_f},t^{uv}_f)$ and~$(w_{\alpha^v_f},t^{uv}_f)$, each with capacity~$1$, and by  decreasing by~$1$ the demand of a sink~$t_f$. Then~$\mathcal F_{i-1}$ is obtained from~$\mathcal F_i$ by neglecting the node~$t^{uv}_f$ and its incident arcs, and by assigning flow~$1$ to the arc~$(w_{\alpha^u_f},t_f)$ or~$(w_{\alpha^v_f},t_f)$, depending on which of~$(w_{\alpha^u_f},t^{uv}_f)$ and~$(w_{\alpha^v_f},t^{uv}_f)$ is assigned with flow~$1$, respectively.   
\end{itemize}

In all cases,~$\mathcal F_{i-1}$ is feasible and has value~$d_T$, given that the same is true for~$\mathcal F_i$. This is obvious for the first two cases, while for the third case one needs to observe that: (i) one of the arcs~$(w_{\alpha^u_f},t^{uv}_f)$ and~$(w_{\alpha^v_f},t^{uv}_f)$ is assigned with flow~$1$, given that~$t^{uv}_f$ has demand~$1$ and~$\mathcal F_i$ has value~$d_T$; (ii) the arcs~$(w_{\alpha^u_f},t_f)$ or~$(w_{\alpha^v_f},t_f)$ belong to~$\mathcal N_{i-1}$, as the modification only occurs when~$\alpha^u_f$ and~$\alpha^v_f$ are both enlargeable; and (iii) if~$(w_{\alpha^u_f},t^{uv}_f)$ is assigned flow~$1$ by~$\mathcal F_i$, then~$(w_{\alpha^u_f},t_f)$ is assigned flow~$0$ by~$\mathcal F_i$, given that~$s_u$ supplies~$1$ unit of flow; similarly, if~$(w_{\alpha^v_f},t^{uv}_f)$ is assigned flow~$1$ by~$\mathcal F_i$, then~$(w_{\alpha^v_f},t_f)$ is assigned flow~$0$ by~$\mathcal F_i$. 

By induction,~$(\mathcal E,\lambda_{\mathcal F_{i-1}})$ is an upward embedding in which the vertices in~$\mathcal V_{i-1}$ are 4-modal and the paths in~$\mathcal P_{i-1}$ do not cause an impossible face. We have to prove that the same statements hold true with~$i$ in place of~$i-1$. 

Since~$\mathcal F_{i-1}$ assigns flow~$1$ to an arc~$(s_u,w_\alpha)$ outgoing from a source~$s_u$ of~$\mathcal N_{i-1}$ if and only if the same arc in~$\mathcal N_i$ is assigned flow~$1$ by~$\mathcal F_i$, we have that~$\lambda_{\mathcal F_i}$ coincides with~$\lambda_{\mathcal F_{i-1}}$. This implies that~$\lambda_{\mathcal F_{i}}$ is upward-consistent, that each vertex~$u\in \mathcal V_{i-1}$ is 4-modal in~$(\mathcal E,\lambda_{\mathcal F_i})$, and that each path in~$\mathcal P_{i-1}$ does not cause an impossible face in~$(\mathcal E,\lambda_{\mathcal F_i})$. Thus it only remains to prove that the~$i$-th vertex processed by {\sc $\mathcal N$-modifier} is 4-modal in~$(\mathcal E,\lambda_{\mathcal F_i})$, if~$i\leq n$, or that the~$(i-n)$-th maximal directed path processed by {\sc $\mathcal N$-modifier} does not cause an impossible face, if~$n+1\leq i\leq k$.

Suppose first that~$i\leq n$. Let~$v$ be the~$i$-th vertex processed by~$\mathcal N$-modifier. We distinguish two cases.

\begin{itemize}
	\item If~$\mathcal N_i$ coincides with~$\mathcal N_{i-1}$, then either: (i)~$v$ is not a switch in~$G$, and in~$\mathcal E$, in clockwise order around~$v$, we have outgoing left edges, outgoing right edges, incoming right edges, and incoming left edges (where one of the former two sets and/or one of the latter two sets might be empty); or (ii)~$v$ is a switch in~$G$ and all its incident edges are left edges or they all are right edges. In both situations,~$v$ is 4-modal in~$(\mathcal E,\lambda_{\mathcal F_i})$, regardless of the large-angle assignment~$\lambda_{\mathcal F_{i}}$.  
	\item If~$\mathcal N_i$ is different from~$\mathcal N_{i-1}$, then~$v$ is a switch; assume it is a source, the case in which it is a sink is analogous. Also,~$v$ has both outgoing left edges and outgoing right edges, where the former (and thus also the latter) appear consecutively around~$v$. Let~$e$ be the left edge outgoing from~$v$ such that the edge~$e'$ preceding~$e$ in clockwise order around~$v$ is a right edge, let~$\alpha$ be the angle~$(e,e')$, and let~$f$ be the face to the left of~$e$. Then~$\mathcal N_i$ is obtained by removing~$w_\beta$ and its incident arcs from~$\mathcal N_{i-1}$, for each angle~$\beta\neq \alpha$ incident to~$v$. Since~$w_\alpha$ is the only neighbor of~$s_v$ in~$\mathcal N_i$ and since~$\mathcal F_i$ has value~$d_T$, it follows that the angle~$\alpha$ is large in~$(\mathcal E,\lambda_{\mathcal F_i})$ (and all other angles incident to~$v$ are small), hence~$v$ is 4-modal. 
\end{itemize}

Suppose next that~$n+1\leq i\leq k$. Consider the~$(i-n)$-th maximal directed path processed by~$\mathcal N$-modifier. Suppose that it is in the boundary of a face~$f$, with~$f$ to its right, the case in which the path has~$f$ to its left being analogous. Denote by~$\ell_f$ such a path. Let~$u$ and~$v$ be the extremes of~$\ell_f$, let~$e^u_f$ and~$e^v_f$ be the edges of~$\ell_f$ incident to~$u$ and~$v$, respectively, let~$\alpha^u_f$ be the angle in~$f$ incident to~$u$ and to the right of~$e^u_f$, and let~$\alpha^v_f$ be the  angle in~$f$ incident to~$v$ and to the right of~$e^v_f$. We distinguish three cases.

\begin{itemize}
	\item If~$\mathcal N_i$ coincides with~$\mathcal N_{i-1}$, then one of the following is true: (i)~$\ell_f$ contains a left edge; (ii) the rest of the boundary of~$f$ is a single edge; (iii) exactly one of~$\alpha^u_f$ and~$\alpha^v_f$ is enlargeable, say~$\alpha^u_f$ is enlargeable and~$\alpha^v_f$ is not, and for each angle~$\beta\neq \alpha^u_f$ incident to~$u$ in~$\mathcal E$, node~$w_\beta$ is not in~$\mathcal N_{i-1}$; or (iv) both~$\alpha^u_f$ and~$\alpha^v_f$ are enlargeable, and there is no angle~$\beta\neq \alpha^u_f$ incident to~$u$ in~$\mathcal E$ such that~$w_\beta$ is in~$\mathcal N_{i-1}$ or there is no angle~$\beta\neq \alpha^v_f$ incident to~$v$ in~$\mathcal E$ such that~$w_\beta$ is in~$\mathcal N_{i-1}$. 
    
    Cases (i) and (ii) directly imply that~$\ell_f$ does not cause~$f$ to be an impossible face in~$(\mathcal E,\lambda_{\mathcal F_i})$. In Case (iii), since~$\mathcal F_i$ has value~$d_T$ and~$(s_u,w_{\alpha^u_f})$ is the only edge outgoing from~$s_u$ in~$\mathcal N_i$, we have that~$\mathcal F_i$ assigns flow~$1$ to~$(s_u,w_{\alpha^u_f})$, hence~$\lambda_{\mathcal F_i}$ assigns a large angle to~$\alpha^u_f$; it follows that~$\ell_f$ does not cause~$f$ to be an impossible face in~$(\mathcal E,\lambda_{\mathcal F_i})$. Similarly, in case (iv), since~$\mathcal F_i$ has value~$d_T$ and since~$(s_u,w_{\alpha^u_f})$ is the only edge outgoing from~$s_u$ or~$(s_v,w_{\alpha^v_f})$ is the only edge outgoing from~$s_v$, we have that~$\mathcal F_i$ assigns flow~$1$ to~$(s_u,w_{\alpha^u_f})$ or~$(s_v,w_{\alpha^v_f})$, hence~$\lambda_{\mathcal F_i}$ assigns a large angle to~$\alpha^u_f$ or~$\alpha^v_f$; it follows that~$\ell_f$ does not cause~$f$ to be an impossible face in~$(\mathcal E,\lambda_{\mathcal F_i})$.
	\item If exactly one of~$\alpha^u_f$ and~$\alpha^v_f$ is enlargeable, say~$\alpha^u_f$ is enlargeable and~$\alpha^v_f$ is not, and if~$\mathcal N_{i-1}$ contains at least one node~$w_\beta$ such that~$\beta\neq \alpha^u_f$ is an angle incident to~$u$ in~$\mathcal E$, then~$\mathcal N_i$ is obtained by removing, for each angle~$\beta\neq \alpha^u_f$ in~$\mathcal E$ incident to~$u$, node~$w_\beta$ and its incident arcs from~$\mathcal N_{i-1}$. As in the previous case, since~$\mathcal F_i$ has value~$d_T$ and~$(s_u,w_{\alpha^u_f})$ is the only edge outgoing from~$s_u$ in~$\mathcal N_i$, we have that~$\mathcal F_i$ assigns flow~$1$ to~$(s_u,w_{\alpha^u_f})$, hence~$\lambda_{\mathcal F_i}$ assigns a large angle to~$\alpha^u_f$; it follows that~$\ell_f$ does not cause~$f$ to be an impossible face in~$(\mathcal E,\lambda_{\mathcal F_i})$.
	\item Finally, suppose that both~$\alpha^u_f$ and~$\alpha^v_f$ are enlargeable, that there exists an angle~$\beta\neq \alpha^u_f$ incident to~$u$ in~$\mathcal E$ such that~$w_\beta$ is in~$\mathcal N_{i-1}$, and that there exists an angle~$\beta\neq \alpha^v_f$ incident to~$v$ in~$\mathcal E$ such that the corresponding node~$w_\beta$ is in~$\mathcal N_{i-1}$. Then~$\mathcal N_i$ is obtained by adding to~$\mathcal N_{i-1}$ a sink~$t^{uv}_f$ with demand~$1$, as well as arcs~$(w_{\alpha^u_f},t^{uv}_f)$ and~$(w_{\alpha^v_f},t^{uv}_f)$, each with capacity~$1$, and by decreasing by~$1$ the demand of~$t_f$. Since~$\mathcal F_i$ has value~$d_T$, one of~$(w_{\alpha^u_f},t^{uv}_f)$ and~$(w_{\alpha^v_f},t^{uv}_f)$ is assigned flow~$1$ by~$\mathcal F_i$, thus one of~$(s_u,w_{\alpha^u_f})$ and~$(s_v,w_{\alpha^v_f})$ is also assigned flow~$1$, hence~$\lambda_{\mathcal F_i}$ assigns a large angle to~$\alpha^u_f$ or~$\alpha^v_f$; it follows that~$\ell_f$ does not cause~$f$ to be an impossible face in~$(\mathcal E,\lambda_{\mathcal F_i})$.
\end{itemize}
This completes the proof of the second implication and hence of the theorem in case the graph is connected.

If~$G$ is disconnected, we can apply the same technique to test in~$O(n \log^3 n)$ time whether all connected components of~$G$ admit an upward book embedding that respects their given planar embeddings.  If the test fails, then~$G$ does not admit an upward book embedding.  Otherwise, as argued at the end of the proof of \Cref{th:characterization}, an upward book embedding exists if and only if for each component~$G_j$ that is embedded in an internal face~$f$ of a component~$G_i$, we have that the boundary of~$f$ in~$G_i$ contains both edges from~$L$ and~$R$.  This can be tested in total~$O(n)$ time.
\end{proof}

\section{Test for Biconnected Partitioned Directed Partial~$2$-Trees}

\newcommand{\upmu}{$(\mathcal E_\mu,\lambda_\mu)$\xspace}
\newcommand{\true}{1\xspace}
\newcommand{\false}{0\xspace}

In this section, we show a cubic-time algorithm to test whether a biconnected partitioned directed partial~$2$-tree admits an upward book embedding in two pages. Many of the ideas presented in this section build on tools introduced in~\cite{DBLP:conf/compgeom/ChaplickGFGRS22,DBLP:conf/gd/ChaplickGFGRS22} to test efficiently whether a directed partial~$2$-tree admits an upward embedding. Note that, in the absence of a characterization such as the one in~\cref{th:characterization}, it would be prohibitive to lift such tools to work for~our~problem.

Let~$G$ be an~$n$-vertex biconnected partitioned directed partial~$2$-tree and let~$e^*$ be an edge of~$G$. We describe a test that determines, in~$O(n^2)$ time, whether~$G$ admits a good embedding in which~$e^*$ lies on the outer face. Repeating this test for all~$O(n)$ choices of~$e^*$ yields an~$O(n^3)$-time algorithm to decide whether~$G$ admits a good embedding. By \cref{th:characterization}, this is equivalent to testing whether~$G$ admits an upward book embedding. 


Let~$T$ be an SPQ-tree of the underlying graph of~$G$, rooted at the Q-node~$\rho^*$ corresponding to~$e^*$. Let~$\mu$ be a node of~$T$ with poles~$u$ and~$v$; recall that~$G_{\mu}$ denotes the pertinent graph of~$\mu$. A \emph{$uv$-external upward embedding} \upmu of~$G_{\mu}$ is an upward embedding of~$G_{\mu}$ in which~$u$ and~$v$ are incident to the outer face. The requirement that~$e^*$ is incident to the outer face of any upward embedding~$(\mathcal E,\lambda)$ of~$G$ implies that, for each node~$\mu$ of~$T$ with poles~$u$ and~$v$, the restriction of~$(\mathcal E,\lambda)$ to~$G_{\mu}$ is a~$uv$-external upward embedding~\upmu of~$G_{\mu}$. 
A \emph{$uv$-external good embedding} is a~$uv$-external upward embedding that is a good embedding. Every upward embedding \upmu of~$G_\mu$ that might be extended to a good embedding of~$G$ in which~$e^*$ is incident to the outer face is itself a good embedding. Indeed, it is obvious that the 4-modality of \upmu and the absence of impossible {\em internal} faces are necessary conditions for the extensibility of \upmu to a good embedding~$(\mathcal E,\lambda)$ of~$G$ in which~$e^*$ is incident to the outer face. The following is less immediate, given that, differently from the internal faces, the outer face~$f_{\mu}$ of~$\mathcal E_\mu$ is not a face of~$\mathcal E$.   

\begin{lemma} \label{le:outer-good}
Let~$(\mathcal E,\lambda)$ be a good embedding of~$G$ in which~$e^*$ is incident to the outer face. Let \upmu be the restriction of~$(\mathcal E,\lambda)$ to the vertices and edges of~$G_\mu$. Then the outer face $f_\mu$ of~$\mathcal E_\mu$ is not an impossible face.   
\end{lemma}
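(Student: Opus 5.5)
The plan is to avoid any direct combinatorial analysis of $f_\mu$ and instead go through actual book embeddings, using both directions of \cref{th:characterization}. Since $(\mathcal E,\lambda)$ is a good embedding of $G$, the sufficiency part of \cref{th:characterization} yields an upward book embedding $\Gamma$ of $G$ that respects $(\mathcal E,\lambda)$. Let $\Gamma_\mu$ be the restriction of $\Gamma$ to the vertices and edges of $G_\mu$, that is, we keep the relative order along the spine of the vertices of $G_\mu$ and the page of each edge of $G_\mu$. Conditions (i) and (ii) in the definition of upward book embedding are inherited by subgraphs: a topological order of $G$ restricts to a topological order of $G_\mu$, and two edges of $G_\mu$ on the same page do not interleave in $\Gamma_\mu$ because they do not interleave in $\Gamma$. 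Hence $\Gamma_\mu$ is an upward book embedding of the partitioned digraph $G_\mu$.

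The second step is to check that $\Gamma_\mu$ respects \upmu, where \upmu is the restriction of $(\mathcal E,\lambda)$. For the planar embedding this is immediate. Deleting edges and vertices from the planar drawing associated with $\Gamma$ gives exactly the rotation system and outer face obtained by restricting $\mathcal E$. Here $G_\mu$ is connected, since it is either biconnected or the union of two connected pieces sharing a cut-vertex, so no relative-position issues arise.

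For the angle assignment, I will make precise what ``restriction'' means. An angle of $\mathcal E_\mu$ at a vertex $x$ is the union of a sequence of consecutive angles of $\mathcal E$ at $x$. Its value is the one read off geometrically in any upward planar drawing of $G$ respecting $(\mathcal E,\lambda)$ with the extra edges erased. By \cref{th:upward-conditions} and the definition of $\lambda_\Gamma$, this value does not depend on the chosen drawing: it is determined by whether the merged sector contains the large angle at $x$, and by the directions of its two bounding edges. Consequently, $\lambda_{\Gamma_\mu}$ coincides with $\lambda_\mu$. This is the step I expect to require the most care. Specifically, I must argue that the merged angle is flat, small, or large in $\Gamma_\mu$ exactly as prescribed by \upmu. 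I would handle it by noting that in an upward drawing the geometric angle of a merged sector is the sum of the geometric angles of its constituent sectors, so its classification (below or above $\pi$, or flat) is fixed by $\lambda$ on those sectors together with the orientations of the bounding edges.

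Finally, I apply the necessity part of \cref{th:characterization} to $G_\mu$ and $\Gamma_\mu$. Since $G_\mu$ admits an upward book embedding respecting \upmu, the embedding \upmu is a good embedding. In particular, no face of $\mathcal E_\mu$ is impossible, and this includes its outer face $f_\mu$, which proves the lemma. Note that the argument uses neither the assumption that $e^*$ lies on the outer face nor the SPQ-tree structure, beyond the connectivity of $G_\mu$.
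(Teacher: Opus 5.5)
Your proposal is correct and follows the paper's proof exactly: realize $(\mathcal E,\lambda)$ by an upward book embedding $\Gamma$ via the sufficiency part of \cref{th:characterization}, restrict $\Gamma$ to $G_\mu$, and apply the necessity part to conclude that $f_\mu$ is not impossible (you are also right that the hypothesis on $e^*$ plays no role). The paper simply asserts that the restriction respects $(\mathcal E_\mu,\lambda_\mu)$; if you keep your extra justification, your test ``contains the large angle at $x$'' is wrong when $x$ is not a switch of $G$, so state the criterion instead as whether the merged sector at $x$ contains the downward (resp.\ upward) direction, i.e., whether it contains an edge of $G$ of the opposite orientation or the large angle of $\lambda$ at $x$.
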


\begin{proof}
By \cref{th:characterization}, there exists an upward book embedding~$\Gamma$ that respects~$(\mathcal E, \lambda)$ with~$e^*$ on the outer face. Let~$\Gamma_\mu$ be obtained by removing from~$\Gamma$ all vertices and edges that do not belong to~$G_\mu$. Then~$\Gamma_\mu$ is an upward book embedding of~$G_\mu$ that respects~\upmu. Therefore, \cref{th:characterization} implies that the outer face~$f_\mu$ of~$\mathcal E_\mu$ is not impossible.
\end{proof}

For a~$uv$-external good embedding~\upmu of~$G_\mu$, the possible ``shapes'' of the cycle bounding the outer face~$f_{\mu}$ of~$\mathcal E_{\mu}$ can be characterized by the notion of a \emph{shape descriptor}. A shape descriptor is an~$8$-tuple~$\shapeDesc{\tau_l}{\tau_r}{\lambda^u}{\lambda^v}{\rho_l^u}{\rho_r^u}{\rho_l^v}{\rho_r^v}$, defined as follows.  Let the \emph{left outer path}~$P_l$ (resp.\ the \emph{right outer path}~$P_r$) of~$\mathcal E_{\mu}$ be the path obtained by traversing the boundary of~$f_{\mu}$ from~$u$ to~$v$ in clockwise (resp.\ counterclockwise) direction. The \emph{left-turn-number}~$\tau_l$ of~$\mathcal E_{\mu}$ is the sum of the labels assigned by~$\lambda_\mu$ to the angles at the vertices of~$P_l$ (excluding~$u$ and~$v$) in~$f_{\mu}$; the \emph{right-turn-number}~$\tau_r$ of~$\mathcal E_{\mu}$ is defined analogously for~$P_r$.  The values~$\lambda^u$ and~$\lambda^v$ are the labels of the angles at~$u$ and~$v$ in~$f_{\mu}$, respectively. Finally,~$\rho_l^u$ is set to~$\texttt{in}$ or~$\texttt{out}$ depending on whether the edge of~$P_l$ incident to~$u$ is incoming or outgoing at~$u$, respectively; the values~$\rho_r^u$,~$\rho_l^v$, and~$\rho_r^v$ are defined analogously. The values of a shape descriptor are not independent of each other~\cite{DBLP:conf/compgeom/ChaplickGFGRS22,DBLP:conf/gd/ChaplickGFGRS22}. In fact, the values of~$\tau_l$,~$\lambda^u$,~$\lambda^v$, and~$\rho_l^u$ suffice to determine the other four values of the shape descriptor.    
 
We enrich the information provided by a shape descriptor with a second tuple that contains information concerning whether \upmu might be extended to a good embedding of~$G$. The \emph{pbe descriptor} (short for partitioned book embedding descriptor) is a~$10$-tuple~$\shapeDescPUBE{p_l^u}{p_r^u}{p_l^v}{p_r^v}{\chi_l}{\chi_r}{\alpha_l^u}{\alpha_r^u}{\alpha_l^v}{\alpha_r^v}$, which is defined as follows. 
\begin{itemize}
\item The labels~$p_l^u$,~$p_r^u$,~$p_l^v$, and~$p_r^v$ have values~$L$ or~$R$, depending on whether the edge incident to~$u$ in~$P_l$, the edge incident to~$u$ in~$P_r$, the edge incident to~$v$ in~$P_l$, and the edge incident to~$v$ in~$P_r$ belong to~$L$ or~$R$, respectively. 
\item The label~$\chi_l$ is \true if~$P_l$ is a directed path from~$u$ to~$v$ and all its edges belong to~$L$ or if~$P_l$ is a directed path from~$v$ to~$u$ and all its edges belong to~$R$, it is \false otherwise. Similarly,~$\chi_r$ is \true if~$P_r$ is a directed path from~$u$ to~$v$ and all its edges belong to~$R$ or if~$P_r$ is a directed path from~$v$ to~$u$ and all its edges belong to~$L$, it is \false otherwise.
\item The label~$\alpha_l^u$ is \true if~$P_l$ contains a directed path~$P_l^{uw}$ from~$u$ to a vertex~$w\notin \{u,v\}$ such that all the edges of~$P_l^{uw}$ belong to~$L$ and~$\lambda_{\mu}$ assigns a small angle at~$w$ in~$f_{\mu}$, or if~$P_l$ contains a directed path~$P_l^{wu}$ from a vertex~$w\notin \{u,v\}$ to~$u$ such that all the edges of~$P_l^{wu}$ belong to~$R$ and~$\lambda_{\mu}$ assigns a small angle at~$w$ in~$f_{\mu}$; the label~$\alpha_l^u$ is \false otherwise. The labels~$\alpha_r^u$,~$\alpha_l^v$, and~$\alpha_r^v$ are defined analogously, with respect to~$P_r$ rather than~$P_l$ and/or with respect to~$v$ rather than~$u$.
\end{itemize}

The values of the pbe descriptor of a~$uv$-external good embedding \upmu of~$G_\mu$ might depend on each other and on the values of the shape descriptor of \upmu. For example, if~$p_l^u=L$ and~$\chi_l=\true$, then~$p_l^v=L$,~$\rho_l^u=\texttt{out}$, and~$\rho_l^v=\texttt{in}$. 

We call \emph{descriptor pair} of~\upmu the pair~$(\sigma, \omega)$ where~$\sigma$ and~$\omega$ are the shape and pbe descriptors of~\upmu, respectively. The information in a descriptor pair fully describes how a~$uv$-external good embedding of~$G_\mu$ interfaces with the rest of~$G$ for the construction of a good embedding of~$G$ with~$e^*$ on the outer face. That is, consider a good embedding~$(\mathcal E,\lambda)$ of~$G$ in which~$e^*$ is incident to the outer face, let~\upmu be the~$uv$-external good embedding of~$G_\mu$ in~$(\mathcal E,\lambda)$, and let~$(\sigma, \omega)$ be the descriptor pair of~\upmu. Replacing~\upmu with {\em any} other~$uv$-external good embedding of~$G_\mu$ with descriptor pair~$(\sigma, \omega)$ still results in a good embedding of~$G$ in which~$e^*$ is incident to the outer face. Even more, consider a~$uv$-external good embedding~\upmu of~$G_\mu$ with descriptor pair~$(\sigma, \omega)$, let~$\nu$ be a child of~$\mu$ with poles~$u'$ and~$v'$, and let~$(\sigma', \omega')$ be the descriptor pair of the~$u'v'$-external good embedding~$(\mathcal E_\nu,\lambda_\nu)$ of~$G_\nu$ in~\upmu. Replacing~$(\mathcal E_\nu,\lambda_\nu)$  with {\em any} other~$u'v'$-external good embedding of~$G_\nu$ with descriptor pair~$(\sigma', \omega')$ results in a~$uv$-external good embedding of~$G_\mu$ {\em with descriptor pair~$(\sigma, \omega)$}. This allows us, for a node~$\mu$ of~$T$, to only keep track of the descriptor pairs~$(\sigma, \omega)$ that are ``realizable'' by~$G_\mu$, rather than of the actual~$uv$-external good embeddings of~$G_\mu$. Thus, in the following, we show how to compute the \emph{feasible set~$\mathcal F_{\mu}$ of~$\mu$}. This is the set that contains all the descriptor pairs~$(\sigma, \omega)$ such that~$G_{\mu}$ admits a~$uv$-external good embedding with descriptor pair~$(\sigma, \omega)$. We have the following lemma.

\begin{lemma} \label{le:shape-data-structure}
The set~$\mathcal F_{\mu}$ has size~$O(|V(G_\mu)|)$ and can be stored in~$O(|V(G_\mu)|)$~space, so that a query on whether a descriptor pair belongs to~$\mathcal F_{\mu}$ can be answered in~$O(1)$ time.
\end{lemma}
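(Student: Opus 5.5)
The argument is a counting one: only a bounded number of components of a descriptor pair can vary, and only one of them ranges over an unbounded domain. Recall that a shape descriptor is determined by $\tau_l$, $\lambda^u$, $\lambda^v$, and $\rho_l^u$, as shown in~\cite{DBLP:conf/compgeom/ChaplickGFGRS22,DBLP:conf/gd/ChaplickGFGRS22}. Each of $\lambda^u,\lambda^v$ lies in $\{-1,0,1\}$ and $\rho_l^u\in\{\texttt{in},\texttt{out}\}$, so these give $O(1)$ combinations. The left-turn-number $\tau_l$ is a sum of values in $\{-1,0,1\}$ over the internal vertices of $P_l$. Hence $|\tau_l|\le |V(G_\mu)|$, and $\tau_l$ takes $O(|V(G_\mu)|)$ integer values. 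The pbe descriptor has ten components, each binary ($L/R$ or $\true/\false$), giving at most $2^{10}$ combinations. Multiplying, the number of candidate descriptor pairs is $O(1)\cdot O(|V(G_\mu)|)\cdot 2^{10}=O(|V(G_\mu)|)$. Since $\mathcal F_\mu$ is a subset of the candidates, this bounds $|\mathcal F_\mu|$.

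For the data structure, I would use a table indexed by the pair $(\tau_l, \kappa)$. Here $\kappa$ encodes the remaining constant-size information $(\lambda^u,\lambda^v,\rho_l^u,\omega)$ as an integer in a fixed range $[0,c)$. Concretely, the structure is an array of $(2|V(G_\mu)|+1)\cdot c$ bits, with $\tau_l$ shifted by $|V(G_\mu)|$ to obtain a nonnegative index; entry $(\tau_l,\kappa)$ is set to $1$ exactly when the corresponding pair is in $\mathcal F_\mu$. The array uses $O(|V(G_\mu)|)$ space.

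A query proceeds as follows. Given a descriptor pair $(\sigma,\omega)$, first check in $O(1)$ time that its remaining shape values $\tau_r,\rho_r^u,\rho_l^v,\rho_r^v$ agree with those determined by $(\tau_l,\lambda^u,\lambda^v,\rho_l^u)$; if they do not, answer negatively. Otherwise, if $|\tau_l|>|V(G_\mu)|$, also answer negatively. Otherwise compute the index and read the corresponding bit. Every step is constant time.

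The only point requiring care is justifying that $\tau_l$ is linearly bounded and that the other four shape values are functions of $(\tau_l,\lambda^u,\lambda^v,\rho_l^u)$. The bound follows directly from the definition, since each angle contributes at most $1$ in absolute value. The functional dependence is quoted from the cited works, where it follows from~\cref{cond:face-assignment} applied to the outer face $f_\mu$ together with the orientations of the edges at the poles. No finer dependency among the pbe labels is needed for the lemma.
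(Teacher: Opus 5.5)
Your proof is correct and takes essentially the same approach as the paper. The paper cites the analogous bound for shape descriptors from the earlier work and observes that the pbe descriptor takes only $O(1)$ values, so size, space, and query time grow by a constant factor. You unpack that cited argument explicitly: $\tau_l$ ranges over $O(|V(G_\mu)|)$ values, every other component over $O(1)$, and lookup uses a shifted table.
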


\begin{proof}
An analogous lemma was proved in \cite{DBLP:conf/compgeom/ChaplickGFGRS22,DBLP:conf/gd/ChaplickGFGRS22} for a feasible set containing shape descriptors, rather than descriptor pairs. However, a pbe descriptor can only assume~$O(1)$ many distinct values, hence the size of~$\mathcal F_{\mu}$, as well as the time and space for storing it, and the query time, only change by a multiplicative constant.     
\end{proof}



Our algorithm traverses the SPQ-tree~$T$ of~$G$ bottom-up and computes, for each node~$\mu$ of~$T$, its feasible set~$\mathcal F_\mu$, provided that the feasible sets of its children have already been computed. Eventually, the test concludes that~$G$ admits a good embedding with~$e^*$ on the outer face if and only if the feasible set of the root~$\rho^*$ of~$T$ is non-empty. We now describe how the algorithm deals with each node~$\mu$ of~$T$, based on its type.

\subsection{Q-node} \label{sub:q}

Non-root Q-nodes have a unique upward embedding, from which we can derive the following.

\begin{lemma}\label{lem:Q_node_general}
	Let~$\mu$ be a non-root Q-node of~$T$. The feasible set~$\mathcal{F}_\mu$ of~$\mu$ can be computed in~$O(1)$ time.
\end{lemma}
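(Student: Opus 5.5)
The plan is to enumerate directly the $uv$-external good embeddings of the single-edge graph $G_\mu$ and read off their descriptor pairs. Let $G_\mu$ consist of the edge $e=(u,v)$; the case $e=(v,u)$ is symmetric, with the roles of \texttt{in} and \texttt{out} swapped. The graph $G_\mu$ has exactly one planar embedding $\mathcal E_\mu$, with a single face $f_\mu$ that is the outer face. Its boundary traversed clockwise from $u$ to $v$ is the edge $e$, and counterclockwise from $u$ to $v$ it is again $e$. So $P_l=P_r=e$.

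At $u$ there is a single angle, which is a switch angle since $u$ is a source of $G_\mu$. The same holds at $v$, which is a sink. Condition \ref{cond:vertex-assignment} gives $\lambda(a)=2-1=1$ for the unique angle at each endpoint, so $\lambda^u=\lambda^v=1$. Condition \ref{cond:face-assignment} for the outer face is then satisfied with sum $2$. Hence $\lambda_\mu$ is forced, and by \cref{th:upward-conditions} the pair $(\mathcal E_\mu,\lambda_\mu)$ is an upward embedding. The shape descriptor is therefore determined:
\begin{itemize}
\item $\tau_l=\tau_r=0$, since $P_l$ and $P_r$ have no internal vertices;
\item $\lambda^u=\lambda^v=1$;
\item $\rho_l^u=\rho_r^u=\texttt{out}$ and $\rho_l^v=\rho_r^v=\texttt{in}$.
\end{itemize}

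Next I would verify goodness. 4-modality is immediate, since each endpoint is a switch with one incident edge. For impossible faces, the only maximal directed path on the boundary of $f_\mu$ is $e$, and the rest of the boundary is the single edge $e$ itself. So neither condition in the definition of impossible face can hold, and the embedding is good.

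Finally I would compute the pbe descriptor from the page of $e$. If $e\in L$, then $p_l^u=p_r^u=p_l^v=p_r^v=L$. In this case $\chi_l=1$, since $P_l$ is directed from $u$ to $v$ and consists of left edges, and $\chi_r=0$. If $e\in R$, all four $p$-labels are $R$, with $\chi_l=0$ and $\chi_r=1$. All four $\alpha$ labels are $0$ in both cases, because they require a vertex $w\notin\{u,v\}$ on $P_l$ or $P_r$, and none exists. Thus $\mathcal F_\mu$ consists of exactly one descriptor pair, and it is computed by a constant number of checks in $O(1)$ time.

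The only point needing care, which I expect to be the main obstacle such as it is, is the reversed orientation $e=(v,u)$. There the definitions of $\chi_l$ and $\chi_r$ flip: $\chi_l=1$ if and only if $e\in R$, and $\chi_r=1$ if and only if $e\in L$. Likewise the $\rho$ values swap to $\rho_l^u=\rho_r^u=\texttt{in}$ and $\rho_l^v=\rho_r^v=\texttt{out}$.
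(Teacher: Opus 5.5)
Your proposal is correct and follows the paper's route. The single-edge pertinent graph has a unique upward embedding, so $\mathcal F_\mu$ is one descriptor pair, and your shape and pbe descriptors agree with the paper's in all four orientation/page cases. You are slightly more thorough than the paper, which cites prior work for the shape descriptor and leaves goodness implicit: you derive the angle labels from Conditions C1--C3 and explicitly check 4-modality and the absence of impossible faces.
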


\begin{proof}
Since~$G_\mu$ has a unique upward embedding~$\mathcal{E}_\mu$, it has a unique descriptor pair. Indeed, as noted in \cite{DBLP:conf/compgeom/ChaplickGFGRS22,DBLP:conf/gd/ChaplickGFGRS22}, the shape descriptor of~$\mathcal{E}_\mu$ is either~$\shapeDesc{0}{0}{1}{1}{\texttt{out}}{\texttt{out}}{\texttt{in}}{\texttt{in}}$ if the edge~$(u,v)$ of~$G$ corresponding to~$\mu$ is directed from~$u$ to~$v$ or~$\shapeDesc{0}{0}{1}{1}{\texttt{in}}{\texttt{in}}{\texttt{out}}{\texttt{out}}$ otherwise. Also, the pbe descriptor of~$\mathcal{E}_\mu$ is either~$\shapeDescPUBE{L}{L}{L}{L}{\true}{\false}{\false}{\false}{\false}{\false}$ if the edge~$(u,v)$ is directed from~$u$ to~$v$ and belongs to~$L$, or~$\shapeDescPUBE{R}{R}{R}{R}{\false}{\true}{\false}{\false}{\false}{\false}$ if the edge~$(u,v)$ is directed from~$u$ to~$v$ and belongs to~$R$, or~$\shapeDescPUBE{L}{L}{L}{L}{\false}{\true}{\false}{\false}{\false}{\false}$ if the edge~$(u,v)$ is directed from~$v$ to~$u$ and belongs to~$L$, or~$\shapeDescPUBE{R}{R}{R}{R}{\true}{\false}{\false}{\false}{\false}{\false}$ if the edge~$(u,v)$ is directed from~$v$ to~$u$ and belongs to~$R$. 
\end{proof}

\subsection{S-node} \label{sub:s}

For S-nodes, our algorithm works as follows. Let~$\nu_1$ and~$\nu_2$ be the children of an S-node~$\mu$ in~$T$, let~$n_1$ and~$n_2$ be the number of vertices of~$G_{\nu_1}$ and~$G_{\nu_2}$, and let~$w$ be the unique vertex shared by~$G_{\nu_1}$ and~$G_{\nu_2}$. We combine every descriptor pair~$(\sigma_1,\omega_1)$ in~$\mathcal{F}_{\nu_1}$ with every descriptor pair~$(\sigma_2,\omega_2)$ in~$\mathcal{F}_{\nu_2}$; for every such combination, the algorithm assigns the two angles at~$w$ in the outer face with every possible label in~$\{-1,0,1\}$. Whenever the combination and the assignment result in a descriptor pair~$(\sigma,\omega)$ of a good embedding of~$G_\mu$, the algorithm adds~$(\sigma,\omega)$ to~$\mathcal{F}_\mu$. In order to test whether the combination of descriptor pairs, together with the assignment of labels to the angles at~$w$, results in a descriptor pair~$(\sigma,\omega)$ of a good embedding of~$G_\mu$, we check whether the properties of \cref{th:upward-conditions,th:characterization} are satisfied. This can be done in~$O(1)$ time, as in the following.

\begin{lemma} \label{le:s-node-check}
For every descriptor pair~$(\sigma_1,\omega_1)$ in~$\mathcal{F}_{\nu_1}$, every descriptor pair~$(\sigma_2,\omega_2)$ in~$\mathcal{F}_{\nu_2}$, and every pair of values~$\beta_w,\gamma_w$ in~$\{-1,0,1\}$, it is possible to check in~$O(1)$ time whether there exists a~$uv$-external good embedding \upmu of~$G_\mu$ in which the upward embedding~$(\mathcal E_{\nu_i},\lambda_{\nu_i})$ of~$G_{\nu_i}$ has descriptor pair~$(\sigma_i,\omega_i)$, for~$i=1,2$, and in which the angles at~$w$ in the outer face of~$\mathcal E_{\mu}$ to the left of the left outer path of~$\mathcal E_{\mu}$ and to the right of the right outer path of~$\mathcal E_{\mu}$  are~$\beta_w$ and~$\gamma_w$, respectively. Also, in the positive case, the descriptor pair~$(\sigma,\omega)$ of \upmu can be computed in~$O(1)$ time.
\end{lemma}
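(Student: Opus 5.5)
The plan is to make precise the composition at the S-node. The outer face of $\mathcal E_\mu$ is obtained by gluing the outer faces of $\mathcal E_{\nu_1}$ and $\mathcal E_{\nu_2}$ at $w$. Since $G_{\nu_1}$ and $G_{\nu_2}$ share only $w$, there is no additional internal face. The internal faces of $\mathcal E_\mu$ are therefore exactly those of $\mathcal E_{\nu_1}$ and $\mathcal E_{\nu_2}$, and these are not impossible because the children's embeddings are good. Also, since $\mu$ has poles $u,v$ while the children have poles $u,w$ and $w,v$, the left outer path of $\mathcal E_\mu$ is $P_l^{(1)}P_l^{(2)}$ and the right one is $P_r^{(1)}P_r^{(2)}$. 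So the embedding at $w$ is determined, up to the two outer angles at $w$: the edges of $G_{\nu_1}$ and $G_{\nu_2}$ occupy two contiguous blocks around $w$, separated by the angle $\beta_w$ (left side) and the angle $\gamma_w$ (right side).

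The check proceeds by verifying the conditions of \cref{th:upward-conditions} and \cref{th:characterization} locally at $w$ and on the new outer face $f_\mu$.

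First, upward-consistency at $w$. The angle at $w$ in $f_{\nu_1}$ is split into $\beta_w$ and the angle $\lambda^{v}$ of $\sigma_1$, which now lies inside $G_{\nu_2}$'s side. More precisely, I will use the known S-node rules of the cited upward-planarity algorithm. $\beta_w$ must be $0$ iff the two edges bounding it are one incoming and one outgoing at $w$, which is read off from $\rho^v_l$ of $\sigma_1$ and $\rho^u_l$ of $\sigma_2$; the same holds for $\gamma_w$ with $\rho_r$. The vertex condition \ref{cond:vertex-assignment} at $w$ becomes $\lambda^v(\sigma_1)+\lambda^u(\sigma_2)+\beta_w+\gamma_w$ matching the contribution of the outer-face angles in the children (each child's outer angle at $w$ equals $\lambda^v_1$ or $\lambda^u_2$, and in the union these two angles are replaced by $\beta_w,\gamma_w$ with sum condition $\beta_w+\gamma_w = \lambda^v_1+\lambda^u_2-2$). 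The face condition for $f_\mu$ is then inherited, yielding $\tau_l=\tau_l^{(1)}+\beta_w+\tau_l^{(2)}$ and $\tau_r$ analogously, with $\lambda^u,\lambda^v$ taken from $\sigma_1,\sigma_2$. Bimodality and 4-modality at $w$ depend only on the in/out labels $\rho$ and the page labels $p^v_l,p^v_r$ of $\omega_1$ and $p^u_l,p^u_r$ of $\omega_2$, together with which of $\beta_w,\gamma_w$ is large if $w$ is a switch. Since each child is already 4-modal at $w$ restricted to its block, this is a finite table lookup.

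Second, the new descriptor $\omega$. The labels $p$ are copied. $\chi_l$ is the AND of the $\chi_l$ values of both children, together with the requirement that the directions agree through $w$ ($\beta_w=0$); similarly for $\chi_r$. For $\alpha_l^u$, I take $\alpha_l^u$ of $\omega_1$, or, if $\chi_l^{(1)}=1$ with direction away from $u$ continuing flat through $w$ ($\beta_w=0$), then $\alpha_l^u$ of $\omega_2$ restricted to monotone continuation. Separately, $w$ itself is a witness if $\chi_l^{(1)}=1$ and $\beta_w=-1$. The other three $\alpha$ labels are symmetric.

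Third, and the main obstacle: showing that $f_\mu$ is not impossible and that this can be decided from the descriptors alone. A maximal monochromatic directed path on $f_\mu$ of the bad type either lies inside one child's outer path, or passes through $w$ with flat angle $\beta_w$ or $\gamma_w$. In the first case it is already excluded by \cref{le:outer-good}-style goodness of the child, unless its endpoint is $w$ and the angle there changed. The information needed is whether such a path ends at $w$ with small angle and starts elsewhere with small angle. This is exactly what the $\alpha$ labels encode: $\alpha^v$ of child 1 and $\alpha^u$ of child 2, combined with $\chi$ for paths spanning a whole child outer path. There is a further subtlety in the condition ``rest of the boundary is not a single edge'': $f_\mu$ always has at least two edges besides any path, except in degenerate cases that are detectable because a child's outer path is a single edge, which is recognizable from $\chi$ and $\tau$. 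I would first try to prove that a violating path exists iff one of a constant list of combinations occurs, such as $\alpha^v_l(\omega_1)=1$ together with $\beta_w=-1$ and a matching page, or $\alpha^v_l(\omega_1)=1$ with $\chi_l(\omega_2)=1$, flat through $w$, and the far angle at $v$ small. Since every condition involves $O(1)$ labels, the check and the computation of $(\sigma,\omega)$ take $O(1)$ time.
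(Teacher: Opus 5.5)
Your framework is the paper's. You glue the two child embeddings at $w$, note that only $w$ and the new outer face $f_\mu$ can fail, check C1 from the in/out labels at $w$, and check C2 as $\beta_w+\gamma_w=\lambda^w_1+\lambda^w_2-2$, from which C3 follows. You then test 4-modality at $w$ by comparing the two blocks of edge types, and update $\tau_l,\tau_r,\chi,\alpha$ by the paper's rules. The paper additionally requires the two $p^w_l$ labels to agree, but when $\beta_w=0$ this is forced by how $\chi$ and $\alpha$ tie direction to page, so your rules are equivalent.

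The gap is in the step you call the main obstacle, which is the real content of the lemma. You never produce the list of label combinations that make $f_\mu$ impossible. Moreover, the information you claim suffices is not enough: the $\alpha$ labels at $w$ of the two children, $\chi$ for a whole outer path, and a small angle at the far pole.

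What is missing are bad paths that run through a far pole with a \textit{flat} outer angle.
\begin{itemize}
\item The portion of a bad path leaving $w$ along $P^{(2)}_l$ can cover all of $P^{(2)}_l$ ($\chi_l(\omega_2)=1$) and pass through $v$ with $\lambda^v(\sigma_2)=0$.
\item It can then stop at a small angle at an internal vertex of $P^{(2)}_r$, which is recorded by $\alpha^v_r(\omega_2)=1$.
\item The same happens at $u$ via $\lambda^u(\sigma_1)=0$ and $\alpha^u_r(\omega_1)$, and on the right side via $\alpha^v_l(\omega_2)$ and $\alpha^u_l(\omega_1)$, possibly at both far poles at once.
\end{itemize}
For instance, suppose $P^{(2)}_l$ is a single left edge $(w,v)$, $\lambda^v(\sigma_2)=0$, and the edge of $P^{(2)}_r$ at $v$ is a left edge $(v,y)$ with a small outer angle at $y$. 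Then $\beta_w=-1$ makes $w\to v\to y$ a bad path and $f_\mu$ impossible, even though $\alpha^w_l(\omega_2)=0$ and $\lambda^v(\sigma_2)\neq -1$. These are the last fourteen configurations in the paper's list. A check built from your description would wrongly accept such descriptor pairs.

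A systematic way to get a complete list: inside each child, the portion of the path starting at $w$ ends in exactly one of three ways.
\begin{itemize}
\item At an internal small angle of the same outer path.
\item At the far pole with label $-1$, after covering a whole outer path.
\item Flat through the far pole, and then at a small angle on the opposite outer path.
\end{itemize}
It cannot go further without closing a directed cycle. Combining the two children through a flat angle at $w$ gives $9$ cases. Taking one child alone, ending at a small angle at $w$, gives $6$ more. That is $15$ per side and the paper's $30$ in total.

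Your concern about ``the rest of the boundary being a single edge'' is unnecessary. The boundary walk of $f_\mu$ visits the cut vertex $w$ twice, so it is never a directed path plus one edge.
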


\begin{proof}
Assume that~$\nu_1$ is the child with poles~$u$ and~$w$, while~$\nu_2$ is the child with poles~$w$ and~$v$. We need to check whether the combination of the descriptor pairs~$(\sigma_1,\omega_1)$ and~$(\sigma_2,\omega_2)$, together with the assignment of labels~$\beta_w,\gamma_w$ to the angles at~$w$ results in an embedding~\upmu of~$G_\mu$ such that: 
\begin{enumerate}
    \item \upmu is a~$uv$-external upward embedding, in particular we need to check whether~$\lambda_\mu$ is an upward-consistent assignment, see Properties  C1--C3 and~\cref{th:upward-conditions};
    \item \upmu is 4-modal; and
    \item \upmu does not contain any impossible face. 
\end{enumerate}
Since~$(\sigma_1,\omega_1)$ and~$(\sigma_2,\omega_2)$ correspond to a~$uw$-external good embedding of~$G_{\nu_1}$ and to a~$wv$-external good embedding of~$G_{\nu_2}$, respectively, a fail in the above checks can only occur in the elements of~\upmu that are created by joining~$(\sigma_1,\omega_1)$ and~$(\sigma_2,\omega_2)$; thus, we need to check whether the angles incident to~$w$ satisfy the conditions of an upward-consistent assignment and whether~$w$ is 4-modal, and we need to check whether the outer face of~\upmu satisfies the conditions of an upward-consistent assignment and whether it is an impossible face or not. These properties can be checked in~$O(1)$ time, as will described below. 

Let~$(\mathcal E_{\nu_1},\lambda_{\nu_1})$ be a~$uw$-external good embedding of~$G_{\nu_1}$ whose descriptor pair is~$(\sigma_1,\omega_1)$ and let~$(\mathcal E_{\nu_2},\lambda_{\nu_2})$ be a~$wv$-external good embedding of~$G_{\nu_2}$ whose descriptor pair is~$(\sigma_2,\omega_2)$. Denote by~$\tau^1_l, \tau^1_r, \lambda^u_1, \lambda^w_1$ the first four labels in~$\sigma_1$ and by~$\tau^2_l, \tau^2_r, \lambda^w_2, \lambda^v_2$ the first four labels in~$\sigma_2$. Also, let \upmu be obtained from~$(\mathcal E_{\nu_1},\lambda_{\nu_1})$ and~$(\mathcal E_{\nu_2},\lambda_{\nu_2})$ as follows. Each of~$G_{\nu_1}$ and~$G_{\nu_2}$ maintains its embedding,~$\mathcal E_{\nu_1}$ and~$\mathcal E_{\nu_2}$ respectively, in~$\mathcal E_\mu$ and is in the outer face of the other. This completely defines~$\mathcal E_\mu$; note that~$u$ and~$v$ are incident to the outer face of~$\mathcal E_\mu$. Every angle, except for the angles incident to~$w$ in the outer face~$f_\mu$ of~$\mathcal E_\mu$, is also an angle in~$\mathcal E_{\nu_1}$ or~$\mathcal E_{\nu_2}$, and then it maintains the same label it is assigned by~$\lambda_{\nu_1}$ or~$\lambda_{\nu_2}$, respectively. Finally, the angle at~$w$ in~$f_\mu$ to the left of the left outer path of~$\mathcal E_{\mu}$ is assigned label~$\beta_w$ and the angle at~$w$ in~$f_\mu$ to the right of the right outer path of~$\mathcal E_{\mu}$ is assigned label~$\gamma_w$. Then the descriptor pair~$(\sigma,\omega)$ of \upmu can be computed in~$O(1)$ time as follows. 

\begin{itemize}
\item Concerning the shape descriptor~$\sigma$, the left-turn-number~$\tau_l$ of \upmu (first label in~$\sigma$) is equal to~$\tau^1_l+\beta_w+\tau^2_l$. Similarly, the right-turn-number~$\tau_r$ of~\upmu (second label in~$\sigma$) is equal to~$\tau^1_r+\gamma_w+\tau^2_r$. The label~$\lambda^u$ for the angle at~$u$ in~$f_\mu$ (third label in~$\sigma$) is equal to~$\lambda^u_1$, while the label~$\lambda^v$ for the angle at~$v$ in~$f_\mu$ (fourth label in~$\sigma$) is equal to~$\lambda^2_v$. The labels~$\rho^u_l$ and~$\rho^u_r$ (fifth and sixth labels in~$\sigma$) have the same values as the corresponding labels in~$\sigma_1$, and the labels~$\rho^v_l$ and~$\rho^v_r$ (seventh and eighth labels in~$\sigma$) have the same values as the corresponding labels in~$\sigma_2$.    
\item Concerning the pbe descriptor~$\omega$, the labels~$p_l^u$ and~$p_r^u$ (first and second labels in~$\omega$) have the same values as the corresponding labels in~$\omega_1$, while the labels~$p_l^v$ and~$p_r^v$ (third and fourth labels in~$\omega$) have the same values as the corresponding labels in~$\omega_2$. The label~$\chi_l$ (fifth label in~$\omega$) has value~$1$ if and only if all the following hold true: (i) the corresponding labels in~$\omega_1$ and~$\omega_2$ both have value~$1$; (ii)~$\beta_w=0$; and (ii) the labels~$p_l^w$ in~$\omega_1$ and~$\omega_2$ both have value~$L$ or both have value~$R$. The value of the label~$\chi_r$ (sixth label in~$\omega$) is computed similarly.  The label~$\alpha^u_l$ (seventh value in~$\omega$) has value~$1$ if and only if at least one of the following holds true: (i) the corresponding label in~$\omega_1$ has value~$1$; (ii) the label~$\chi_l$ in~$\omega_1$ has value~$1$ and~$\beta_w=-1$; or (iii) the label~$\chi_l$ in~$\omega_1$ has value~$1$,~$\beta_w=0$, the label~$\alpha^w_l$ in~$\omega_2$  has value~$1$, and the labels~$p_l^w$ in~$\omega_1$ and~$\omega_2$ both have value~$L$ or both have value~$R$. The values of the labels~$\alpha_r^u$,~$\alpha_l^v$, and~$\alpha_r^v$ (eighth, ninth, and tenth labels in~$\omega$) are computed similarly. 
\end{itemize}

We now show how it can be tested in~$O(1)$ time whether~$(\sigma,\omega)$ actually corresponds to a~$uv$-external good embedding \upmu of~$G_\mu$. 
\begin{enumerate}
\item We first deal with the properties that make~\upmu an upward embedding. We do not address here the bimodality of~\upmu, as later we will show how to check its 4-modality, which implies its bimodality. 

Property C1 is satisfied by~\upmu for all angles different from~$\beta_w$ and~$\gamma_w$ since it is satisfied by~$(\mathcal E_{\nu_1},\lambda_{\nu_1})$ and~$(\mathcal E_{\nu_2},\lambda_{\nu_2})$. Thus, we check whether~$\beta_w=0$ if and only if the label~$\rho^w_l$ in~$\sigma_1$ is equal to~$\texttt{in}$ and the  label~$\rho^w_l$ in~$\sigma_2$ is equal to~$\texttt{out}$, or vice versa, and similar for~$\gamma_w$. 
    
Property C2 is satisfied by~\upmu for all vertices different from~$w$ since it is satisfied by~$(\mathcal E_{\nu_1},\lambda_{\nu_1})$ and~$(\mathcal E_{\nu_2},\lambda_{\nu_2})$. Thus, we check whether~$\beta_w+\gamma_w=\lambda^w_1+\lambda^w_2-2$. Indeed, By Property C2 for~$(\mathcal E_{\nu_1},\lambda_{\nu_1})$, the label~$\lambda^w_1$ is equal to~$2-\deg_1(w)-\sum \lambda_{\nu_1}(a)$, where~$\deg_1(w)$ is the degree of~$w$ in~$G_{\nu_1}$ and the sum is over all the angles~$a$ at~$w$ in internal faces of~$\mathcal E_{\nu_1}$. A similar equation holds true for~$(\mathcal E_{\nu_2},\lambda_{\nu_2})$, and this gives us that~$\lambda^w_1+\lambda^w_2=4-\deg(w)-\sum \lambda_\mu(a)$, where~$\deg(w)$ is the degree of~$w$ in~$G_\mu$ and the sum is over all the angles~$a$ at~$w$ in internal faces of~$\mathcal E_\mu$. Also, Property C2 is satisfied by \upmu for~$w$ if and only if~$\beta_w+\gamma_w=2-\deg(w)-\sum \lambda_\mu(a)$. Combining the last two equations, we get that Property C2 is satisfied by \upmu for~$w$ if and only if~$\beta_w+\gamma_w=\lambda^w_1+\lambda^w_2-2$.    

Property C3 does not require any further check. Indeed, it is satisfied by~\upmu for all the internal faces of~$\mathcal E_\mu$ since it is satisfied by~$(\mathcal E_{\nu_1},\lambda_{\nu_1})$ and~$(\mathcal E_{\nu_2},\lambda_{\nu_2})$. By Property~C3 for~$(\mathcal E_{\nu_1},\lambda_{\nu_1})$, we have~$\tau^1_l+\tau^1_r+\lambda^u_1+\lambda^w_1=2$. Analogously,~$\tau^2_l+\tau^2_r+\lambda^w_2+\lambda^v_2=2$. Thus,~$\tau^1_l+\tau^1_r+\lambda^u_1+\lambda^w_1+\tau^2_l+\tau^2_r+\lambda^w_2+\lambda^v_2=4$. Property C3 is satisfied by \upmu for~$f_\mu$ if and only if~$\tau_l+\tau_r+\lambda^u+\lambda^v=2$. Since~$\tau_l=\tau^1_l+\tau^2_l+\beta_w$,~$\tau_r=\tau^1_r+\tau^2_r+\gamma_w$,~$\lambda^u=\lambda^u_1$, and~$\lambda^v=\lambda^v_2$, by the previous equations we get that Property C3 is satisfied by \upmu for~$f_\mu$ if and only if~$\beta_w+\gamma_w=\lambda^w_1+\lambda^w_2-2$, which is the same equation that was checked for Property~C2. 


\item The 4-modality of \upmu can be checked as follows. First, every vertex of~$G_\mu$ different from~$w$ is 4-modal in \upmu since it is 4-modal in~$(\mathcal E_{\nu_1},\lambda_{\nu_1})$ or~$(\mathcal E_{\nu_2},\lambda_{\nu_2})$. Now, consider the circular sequence~$[LO, RO, RI, LI]$, where~$L$,~$R$,~$O$, and~$I$ stand for left, right, outgoing, and incoming, respectively. The labels~$\rho^w_l$,~$\rho^w_r$, and~$\lambda^w_1$ in~$\sigma_1$, together with the labels~$p^w_l$ and~$p^w_r$ in~$\omega_1$, define a linear sequence~$\pi_1$ from the circular sequence~$[LO, RO, RI, LI]$ that is ``used'' by the edges and faces incident to~$w$ in~$(\mathcal E_{\nu_1},\lambda_{\nu_1})$. For example, if~$\rho^w_r=\texttt{in}$ and~$\rho^w_l=\texttt{out}$ in~$\sigma_1$, and~$p^w_r=R$ and~$p^w_l=L$ in~$\omega_1$, then the first edge incident to~$w$ in~$\mathcal E_{\nu_1}$, in the clockwise order of the edges incident to~$w$ that starts at the outer face of~$\mathcal E_{\nu_1}$, is red incoming, while the last edge is left outgoing, thus~$(\mathcal E_{\nu_1},\lambda_{\nu_1})$ uses the linear sequence~$\pi_1=[RI, LI, LO]$. The label~$\lambda^w_1$ is used in order to compute~$\pi_1$ only when~$\rho^w_r=\rho^w_l$ in~$\sigma_1$ and~$p^w_l=p^w_r$ in~$\omega_1$. Say, for example, that~$\rho^w_r=\rho^w_l=\texttt{in}$  and~$p^w_l=p^w_r=L$; then if~$\lambda^w_1=1$ in~$\sigma_1$ we have~$\pi_1=[LI]$, while if~$\lambda^w_1=-1$ we have~$\pi_1=[LI, LO, RO, RI, LI]$. A linear portion~$\pi_2$ of the circular sequence~$[LO, RO, RI, LI]$ that is used by the edges incident to~$w$ in~$(\mathcal E_{\nu_2},\lambda_{\nu_2})$ is defined similarly.  We then need to perform the following check. If~$\pi_1$ consists of just one element, we check that such an element is not an internal element of~$\pi_2$. Symmetrically, if~$\pi_2$ consists of just one element, we check that such an element is not an internal element of~$\pi_1$. Finally, if both~$\pi_1$ and~$\pi_2$ have more than one element, then we check whether the first element of~$\pi_1$ does not belong to~$\pi_2$ or is the last element of~$\pi_2$, and whether the first element of~$\pi_2$ does not belong to~$\pi_1$ or is the last element of~$\pi_1$. Then~$w$ is 4-modal in \upmu if and only if the check is successful. 


\item Finally, the absence of impossible faces in \upmu can be checked as follows. First, every face of~$\mathcal E_\mu$ different from~$f_{\mu}$ is not impossible in \upmu since it is not impossible in~$(\mathcal E_{\nu_1},\lambda_{\nu_1})$ and~$(\mathcal E_{\nu_2},\lambda_{\nu_2})$.  In order to test whether~$f_{\mu}$ is impossible, we need to check whether a directed path on the boundary of~$f_{\mu}$, composed of left edges with~$f_{\mu}$ to its left or composed of right edges with~$f_{\mu}$ to its right, and with two small angles at its end-vertices, arises by joining~$\mathcal E_{\nu_1}$ and~$\mathcal E_{\nu_2}$. Since the outer faces of~$\mathcal E_{\nu_1}$ and~$\mathcal E_{\nu_2}$ are not impossible, the directed path typically consists of a directed path in the outer face of~$\mathcal E_{\nu_1}$ and of a directed path in the outer face of~$\mathcal E_{\nu_2}$, joined at~$w$; an exception to this situation is the one in which the directed path is entirely in the outer face of~$\mathcal E_{\nu_1}$ or~$\mathcal E_{\nu_2}$, and it is the small angle at one of its end-vertices (in this case, this is necessarily~$w$) that is created by joining~$\mathcal E_{\nu_1}$ and~$\mathcal E_{\nu_2}$. The part of the directed path that is in the outer face of~$\mathcal E_{\nu_1}$ and that starts at~$w$, if any, might belong to the left or to the right outer path of~$\mathcal E_{\nu_1}$; also, the small angle incident to such a path might occur in the left outer path of~$\mathcal E_{\nu_1}$, in the right outer path of~$\mathcal E_{\nu_1}$, or at~$u$. Similar options are possible for the part of the directed path that is in the outer face of~$\mathcal E_{\nu_2}$. All these properties determine which labels have to be checked in order to test whether~$G_\mu$ contains a directed path that causes~$f_\mu$ to be an impossible face. Formally, we check whether:

\begin{figure}[htb]
    \centering
    \begin{subfigure}{.15\textwidth}\centering
    \includegraphics[page=1]{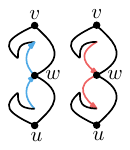}
    \subcaption{\label{fig:check1}}
    \end{subfigure}
    \hfil
    \begin{subfigure}{.15\textwidth}\centering
    \includegraphics[page=2]{figures/Check.pdf}
    \subcaption{\label{fig:check2}}
    \end{subfigure}
    \hfil
    \begin{subfigure}{.15\textwidth}\centering
    \includegraphics[page=3]{figures/Check.pdf}
    \subcaption{\label{fig:check3}}
    \end{subfigure}
    \hfil
    \begin{subfigure}{.15\textwidth}\centering
    \includegraphics[page=4]{figures/Check.pdf}
    \subcaption{\label{fig:check4}}
    \end{subfigure}
    \hfil
    \begin{subfigure}{.15\textwidth}\centering
    \includegraphics[page=5]{figures/Check.pdf}
    \subcaption{\label{fig:check5}}
    \end{subfigure}
    \hfil
    \begin{subfigure}{.15\textwidth}\centering
    \includegraphics[page=6]{figures/Check.pdf}
    \subcaption{\label{fig:check6}}
    \end{subfigure}
    \hfil
    \begin{subfigure}{.15\textwidth}\centering
    \includegraphics[page=7]{figures/Check.pdf}
    \subcaption{\label{fig:check7}}
    \end{subfigure}
    \hfil
    \begin{subfigure}{.15\textwidth}\centering
    \includegraphics[page=8]{figures/Check.pdf}
    \subcaption{\label{fig:check8}}
    \end{subfigure}
    \hfil
    \begin{subfigure}{.15\textwidth}\centering
    \includegraphics[page=9]{figures/Check.pdf}
    \subcaption{\label{fig:check9}}
    \end{subfigure}
    \hfil
    \begin{subfigure}{.15\textwidth}\centering
    \includegraphics[page=10]{figures/Check.pdf}
    \subcaption{\label{fig:check10}}
    \end{subfigure}
    \hfil
    \begin{subfigure}{.15\textwidth}\centering
    \includegraphics[page=11]{figures/Check.pdf}
    \subcaption{\label{fig:check11}}
    \end{subfigure}
    \hfil
    \begin{subfigure}{.15\textwidth}\centering
    \includegraphics[page=12]{figures/Check.pdf}
    \subcaption{\label{fig:check12}}
    \end{subfigure}
    \hfil
    \begin{subfigure}{.15\textwidth}\centering
    \includegraphics[page=13]{figures/Check.pdf}
    \subcaption{\label{fig:check13}}
    \end{subfigure}
    \hfil
    \begin{subfigure}{.15\textwidth}\centering
    \includegraphics[page=14]{figures/Check.pdf}
    \subcaption{\label{fig:check14}}
    \end{subfigure}
    \hfil
    \begin{subfigure}{.15\textwidth}\centering
    \includegraphics[page=15]{figures/Check.pdf}
    \subcaption{\label{fig:check15}}
    \end{subfigure}
    \hfil
    \begin{subfigure}{.15\textwidth}\centering
    \includegraphics[page=16]{figures/Check.pdf}
    \subcaption{\label{fig:check16}}
    \end{subfigure}
    \hfil
    \begin{subfigure}{.15\textwidth}\centering
    \includegraphics[page=17]{figures/Check.pdf}
    \subcaption{\label{fig:check17}}
    \end{subfigure}
    \hfil
    \begin{subfigure}{.15\textwidth}\centering
    \includegraphics[page=18]{figures/Check.pdf}
    \subcaption{\label{fig:check18}}
    \end{subfigure}
    \caption{Configurations that make~$f_\mu$ impossible (first part). Blue and red edges represent directed paths composed only of left and right edges, respectively. Only the vertices~$u$,~$v$, and~$w$ are explicitly shown, while the closed curves represent the boundaries of~$\mathcal E_{\nu_1}$ (below) and~$\mathcal E_{\nu_2}$ (above).}\label{fig:check}
\end{figure}

\begin{figure}[htb]
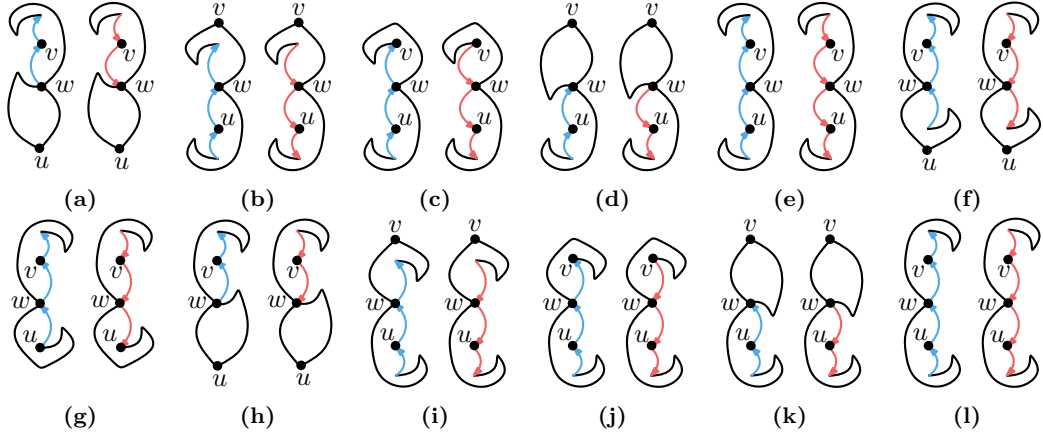

    \centering
    \begin{subfigure}{.15\textwidth}\centering
    \includegraphics[page=19]{figures/Check.pdf}
    \subcaption{\label{fig:check19}}
    \end{subfigure}
    \hfil
    \begin{subfigure}{.15\textwidth}\centering
    \includegraphics[page=20]{figures/Check.pdf}
    \subcaption{\label{fig:check20}}
    \end{subfigure}
    \hfil
    \begin{subfigure}{.15\textwidth}\centering
    \includegraphics[page=21]{figures/Check.pdf}
    \subcaption{\label{fig:check21}}
    \end{subfigure}
    \hfil
    \begin{subfigure}{.15\textwidth}\centering
    \includegraphics[page=22]{figures/Check.pdf}
    \subcaption{\label{fig:check22}}
    \end{subfigure}
    \hfil
    \begin{subfigure}{.15\textwidth}\centering
    \includegraphics[page=23]{figures/Check.pdf}
    \subcaption{\label{fig:check23}}
    \end{subfigure}
    \hfil
    \begin{subfigure}{.15\textwidth}\centering
    \includegraphics[page=24]{figures/Check.pdf}
    \subcaption{\label{fig:check24}}
    \end{subfigure}
    \hfil
    \begin{subfigure}{.15\textwidth}\centering
    \includegraphics[page=25]{figures/Check.pdf}
    \subcaption{\label{fig:check25}}
    \end{subfigure}
    \hfil
    \begin{subfigure}{.15\textwidth}\centering
    \includegraphics[page=26]{figures/Check.pdf}
    \subcaption{\label{fig:check26}}
    \end{subfigure}
    \hfil
    \begin{subfigure}{.15\textwidth}\centering
    \includegraphics[page=27]{figures/Check.pdf}
    \subcaption{\label{fig:check27}}
    \end{subfigure}
    \hfil
    \begin{subfigure}{.15\textwidth}\centering
    \includegraphics[page=28]{figures/Check.pdf}
    \subcaption{\label{fig:check28}}
    \end{subfigure}
    \hfil
    \begin{subfigure}{.15\textwidth}\centering
    \includegraphics[page=29]{figures/Check.pdf}
    \subcaption{\label{fig:check29}}
    \end{subfigure}
    \hfil
    \begin{subfigure}{.15\textwidth}\centering
    \includegraphics[page=30]{figures/Check.pdf}
    \subcaption{\label{fig:check30}}
    \end{subfigure}
    
    \caption{Configurations that make~$f_\mu$ impossible (second part).}\label{fig:check-second}
\end{figure}

\begin{itemize}
    \item The label~$\alpha^w_l$ in~$\omega_1$ is~$1$, the label~$\alpha^w_l$ in~$\omega_2$ is~$1$,~$\beta_w=0$, and the labels~$p_l^w$ in~$\omega_1$ and~$\omega_2$ both have value~$L$ or both have value~$R$ (see \cref{fig:check1}).
    \item The label~$\alpha^w_l$ in~$\omega_1$ is~$1$ and~$\beta_w=-1$ (see \cref{fig:check2}).
    \item The label~$\alpha^w_l$ in~$\omega_2$ is~$1$ and~$\beta_w=-1$ (see \cref{fig:check3}).
    \item The label~$\alpha^w_r$ in~$\omega_1$ is~$1$, the label~$\alpha^w_r$ in~$\omega_2$ is~$1$,~$\gamma_w=0$, and the labels~$p_r^w$ in~$\omega_1$ and~$\omega_2$ both have value~$L$ or both have value~$R$ (see \cref{fig:check4}).    
    \item The label~$\alpha^w_r$ in~$\omega_1$ is~$1$ and~$\gamma_w=-1$ (see \cref{fig:check5}).
    \item The label~$\alpha^w_r$ in~$\omega_2$ is~$1$ and~$\gamma_w=-1$ (see \cref{fig:check6}).
    \item The label~$\alpha^w_l$ in~$\omega_1$ is~$1$,~$\beta_w=0$, the label~$\chi_l$ in~$\omega_2$ is~$1$, the label~$\lambda^v$ in~$\sigma_2$ is~$-1$, and the labels~$p_l^w$ in~$\omega_1$ and~$\omega_2$ both have value~$L$ or both have value~$R$ (see \cref{fig:check7}).
    \item The label~$\alpha^w_l$ in~$\omega_2$ is~$1$,~$\beta_w=0$, the label~$\chi_l$ in~$\omega_1$ is~$1$, the label~$\lambda^u$ in~$\sigma_1$ is~$-1$, and the labels~$p_l^w$ in~$\omega_1$ and~$\omega_2$ both have value~$L$ or both have value~$R$ (see \cref{fig:check8}).    
    \item The label~$\chi_l$ in~$\omega_1$ is~$1$, the label~$\lambda^u$ in~$\sigma_1$ is~$-1$,~$\beta_w=0$, the label~$\chi_l$ in~$\omega_2$ is~$1$, the label~$\lambda^v$ in~$\sigma_2$ is~$-1$, and the labels~$p_l^w$ in~$\omega_1$ and~$\omega_2$ both have value~$L$ or both have value~$R$ (see \cref{fig:check9}).
    \item The label~$\alpha^w_r$ in~$\omega_1$ is~$1$,~$\gamma_w=0$, the label~$\chi_r$ in~$\omega_2$ is~$1$, the label~$\lambda^v$ in~$\sigma_2$ is~$-1$, and the labels~$p_r^w$ in~$\omega_1$ and~$\omega_2$ both have value~$L$ or both have value~$R$ (see \cref{fig:check10}).
    \item The label~$\alpha^w_r$ in~$\omega_2$ is~$1$,~$\gamma_w=0$, the label~$\chi_r$ in~$\omega_1$ is~$1$, the label~$\lambda^u$ in~$\sigma_1$ is~$-1$, and the labels~$p_r^w$ in~$\omega_1$ and~$\omega_2$ both have value~$L$ or both have value~$R$ (see \cref{fig:check11}).
    \item The label~$\chi_r$ in~$\omega_1$ is~$1$, the label~$\lambda^u$ in~$\sigma_1$ is~$-1$,~$\gamma_w=0$, the label~$\chi_r$ in~$\omega_2$ is~$1$, the label~$\lambda^v$ in~$\sigma_2$ is~$-1$, and the labels~$p_r^w$ in~$\omega_1$ and~$\omega_2$ both have value~$L$ or both have value~$R$ (see \cref{fig:check12}).
    \item The label~$\chi_l$ in~$\omega_1$ is~$1$, the label~$\lambda^u$ in~$\sigma_1$ is~$-1$, and~$\beta_w=-1$ (see \cref{fig:check13}). 
    \item The label~$\chi_l$ in~$\omega_2$ is~$1$, the label~$\lambda^v$ in~$\sigma_2$ is~$-1$, and~$\beta_w=-1$ (see \cref{fig:check14}). 
    \item The label~$\chi_r$ in~$\omega_1$ is~$1$, the label~$\lambda^u$ in~$\sigma_1$ is~$-1$, and~$\gamma_w=-1$ (see \cref{fig:check15}). 
    \item The label~$\chi_r$ in~$\omega_2$ is~$1$, the label~$\lambda^v$ in~$\sigma_2$ is~$-1$, and~$\gamma_w=-1$ (see \cref{fig:check16}). 
    \item The label~$\alpha^w_l$ in~$\omega_1$ is~$1$,~$\beta_w=0$, the label~$\chi_l$ in~$\omega_2$ is~$1$, the label~$\lambda^v$ in~$\sigma_2$ is~$0$, the label~$\alpha^v_r$ in~$\omega_2$ is~$1$, and the labels~$p_l^w$ in~$\omega_1$ and~$p_l^w$ and~$p_r^v$ in~$\omega_2$ all have value~$L$ or all have value~$R$ (see \cref{fig:check17}).
    \item The label~$\chi_l$ in~$\omega_1$ is~$1$, the label~$\lambda^u$ in~$\sigma_1$ is~$-1$,~$\beta_w=0$, the label~$\chi_l$ in~$\omega_2$ is~$1$, the label~$\lambda^v$ in~$\sigma_2$ is~$0$, the label~$\alpha^v_r$ in~$\omega_2$ is~$1$, and the labels~$p_l^w$ in~$\omega_1$ and~$p_l^w$ and~$p_r^v$ in~$\omega_2$ all have value~$L$ or all have value~$R$ (see \cref{fig:check18}).
   \item~$\beta_w=-1$, the label~$\chi_l$ in~$\omega_2$ is~$1$, the label~$\lambda^v$ in~$\sigma_2$ is~$0$, the label~$\alpha^v_r$ in~$\omega_2$ is~$1$, and the labels~$p_l^w$ and~$p_r^v$ in~$\omega_2$ both have value~$L$ or both have value~$R$ (see \cref{fig:check19}).
   \item The label~$\alpha^w_l$ in~$\omega_2$ is~$1$,~$\beta_w=0$, the label~$\chi_l$ in~$\omega_1$ is~$1$, the label~$\lambda^u$ in~$\sigma_1$ is~$0$, the label~$\alpha^u_r$ in~$\omega_1$ is~$1$, and the labels~$p_l^w$ in~$\omega_2$ and~$p_l^w$ and~$p_r^u$ in~$\omega_1$ all have value~$L$ or all have value~$R$ (see \cref{fig:check20}).
    \item The label~$\chi_l$ in~$\omega_2$ is~$1$, the label~$\lambda^v$ in~$\sigma_2$ is~$-1$,~$\beta_w=0$, the label~$\chi_l$ in~$\omega_1$ is~$1$, the label~$\lambda^u$ in~$\sigma_1$ is~$0$, the label~$\alpha^u_r$ in~$\omega_1$ is~$1$, and the labels~$p_l^w$ in~$\omega_2$ and~$p_l^w$ and~$p_r^u$ in~$\omega_1$ all have value~$L$ or all have value~$R$ (see \cref{fig:check21}).   
   \item~$\beta_w=-1$, the label~$\chi_l$ in~$\omega_1$ is~$1$, the label~$\lambda^u$ in~$\sigma_1$ is~$0$, the label~$\alpha^u_r$ in~$\omega_1$ is~$1$, and the labels~$p_l^w$ and~$p_r^u$ in~$\omega_1$ both have value~$L$ or both have value~$R$ (see \cref{fig:check22}).
   \item The label~$\chi_l$ in~$\omega_1$ is~$1$, the label~$\lambda^u$ in~$\sigma_1$ is~$0$, the label~$\alpha^u_r$ in~$\omega_1$ is~$1$,~$\beta_w=0$, the label~$\chi_l$ in~$\omega_2$ is~$1$, the label~$\lambda^v$ in~$\sigma_2$ is~$0$, the label~$\alpha^v_r$ in~$\omega_2$ is~$1$, and the labels~$p_l^w$ and~$p_r^u$ in~$\omega_1$ and~$p_l^w$ and~$p_r^v$ in~$\omega_2$ all have value~$L$ or all have value~$R$ (see \cref{fig:check23}).
    \item The label~$\alpha^w_r$ in~$\omega_1$ is~$1$,~$\gamma_w=0$, the label~$\chi_r$ in~$\omega_2$ is~$1$, the label~$\lambda^v$ in~$\sigma_2$ is~$0$, the label~$\alpha^v_l$ in~$\omega_2$ is~$1$, and the labels~$p_r^w$ in~$\omega_1$ and~$p_r^w$ and~$p_l^v$ in~$\omega_2$ all have value~$L$ or all have value~$R$ (see \cref{fig:check24}).
    \item The label~$\chi_r$ in~$\omega_1$ is~$1$, the label~$\lambda^u$ in~$\sigma_1$ is~$-1$,~$\gamma_w=0$, the label~$\chi_r$ in~$\omega_2$ is~$1$, the label~$\lambda^v$ in~$\sigma_2$ is~$0$, the label~$\alpha^v_l$ in~$\omega_2$ is~$1$, and the labels~$p_r^w$ in~$\omega_1$ and~$p_r^w$ and~$p_l^v$ in~$\omega_2$ all have value~$L$ or all have value~$R$ (see \cref{fig:check25}).
   \item~$\gamma_w=-1$, the label~$\chi_r$ in~$\omega_2$ is~$1$, the label~$\lambda^v$ in~$\sigma_2$ is~$0$, the label~$\alpha^v_l$ in~$\omega_2$ is~$1$, and the labels~$p_r^w$ and~$p_l^v$ in~$\omega_2$ both have value~$L$ or both have value~$R$ (see \cref{fig:check26}).
   \item The label~$\alpha^w_r$ in~$\omega_2$ is~$1$,~$\gamma_w=0$, the label~$\chi_r$ in~$\omega_1$ is~$1$, the label~$\lambda^u$ in~$\sigma_1$ is~$0$, the label~$\alpha^u_l$ in~$\omega_1$ is~$1$, and the labels~$p_r^w$ in~$\omega_2$ and~$p_r^w$ and~$p_l^u$ in~$\omega_1$ all have value~$L$ or all have value~$R$ (see \cref{fig:check27}).
    \item The label~$\chi_r$ in~$\omega_2$ is~$1$, the label~$\lambda^v$ in~$\sigma_2$ is~$-1$,~$\gamma_w=0$, the label~$\chi_r$ in~$\omega_1$ is~$1$, the label~$\lambda^u$ in~$\sigma_1$ is~$0$, the label~$\alpha^u_l$ in~$\omega_1$ is~$1$, and the labels~$p_r^w$ in~$\omega_2$ and~$p_r^w$ and~$p_l^u$ in~$\omega_1$ all have value~$L$ or all have value~$R$ (see \cref{fig:check28}).   
   \item~$\gamma_w=-1$, the label~$\chi_r$ in~$\omega_1$ is~$1$, the label~$\lambda^u$ in~$\sigma_1$ is~$0$, the label~$\alpha^u_l$ in~$\omega_1$ is~$1$, and the labels~$p_r^w$ and~$p_l^u$ in~$\omega_1$ both have value~$L$ or both have value~$R$ (see \cref{fig:check29}).
   \item The label~$\chi_r$ in~$\omega_1$ is~$1$, the label~$\lambda^u$ in~$\sigma_1$ is~$0$, the label~$\alpha^u_l$ in~$\omega_1$ is~$1$,~$\gamma_w=0$, the label~$\chi_r$ in~$\omega_2$ is~$1$, the label~$\lambda^v$ in~$\sigma_2$ is~$0$, the label~$\alpha^v_l$ in~$\omega_2$ is~$1$, and the labels~$p_r^w$ and~$p_l^u$ in~$\omega_1$ and~$p_r^w$ and~$p_l^v$ in~$\omega_2$ all have value~$L$ or all have value~$R$ (see \cref{fig:check30}).   
\end{itemize}
If all the above checks fail, then~$f_{\mu}$ is not impossible and hence we conclude that~$(\sigma,\omega)$ corresponds to a~$uv$-external good embedding \upmu.  
\end{enumerate}
This concludes the proof of the lemma.
\end{proof} 

\cref{le:s-node-check} allows us to compute the feasible set~$\mathcal{F}_\mu$ of~$\mu$ in time~$O(|\mathcal{F}_{\nu_1}|\cdot|\mathcal{F}_{\nu_2}|)$. The following lemma, which generalizes a similar statement appearing in the extended version of \cite{DBLP:conf/gd/ChaplickGFGRS22}, proves that this sums up to~$O(m^2)$ time over all S-nodes of~$T$, where~$m$ is the number of edges of~$G$, and thus to~$O(n^2)$ time. A function~$f:\mathbb{N}^+\rightarrow \mathbb{R}_{\geq 0}$ is \emph{super-additive} if~$f(\sum_i x_i)\geq \sum_i f(x_i)$. For a node $\mu$ of $T$ whose pertinent graph $G_\mu$ has $n_{\mu}$ vertices and $m_\mu$ edges, we have $n_\mu \in O(m_\mu)$ and, by \cref{le:shape-data-structure}, we have~$|\mathcal{F}_{\mu}|\in O(n_\mu)$, thus~$|\mathcal{F}_{\mu}|\in O(m_\mu)$. Hence, there exist  positive constants~$p$ and~$q$ with $p\geq q$ such that, for any node $\mu$ of $T$ whose pertinent graph $G_\mu$ has $m_\mu$ edges, we have $|\mathcal{F}_{\mu}|\leq f(m_\mu):=p\cdot m_\mu -q$. Note that the function $f(m_\mu):=p\cdot m_\mu -q$ is indeed super-additive, given that $q>0$. Also, we have that~$f(1)\geq 0$, given that $p\geq q$. We have the following.

\begin{lemma}\label{le:the-most-important-lemma-in-history}
Let~$f:\mathbb{N}^+\rightarrow \mathbb{R}_{\geq 0}$ be a super-additive function such that~$f(1)\geq 0$. Then~$\sum_\mu \left ( f(m_{\nu_1})\cdot f(m_{\nu_2})\right ) \in O\big( (f(m))^2 \big)$, where the sum is taken over all S-nodes~$\mu$ of~$T$, and~$\nu_1$ and~$\nu_2$ are the children of~$\mu$, where~$G_{\nu_1}$ and~$G_{\nu_2}$ have~$m_{\nu_1}$ and~$m_{\nu_2}$ edges, respectively. 
\end{lemma}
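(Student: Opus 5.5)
The plan is to prove the stronger statement that for every node $\mu$ of $T$, the sum $\Phi(\mu)$ of $f(m_{\nu_1})\cdot f(m_{\nu_2})$ over all S-nodes in the subtree of $T$ rooted at $\mu$ satisfies $\Phi(\mu)\le f(m_\mu)^2$. The argument is a bottom-up induction on $T$. Applying the claim at the child $\sigma^*$ of the root, with $m_{\sigma^*}=m-1\le m$, gives the lemma. To pass from $f(m-1)$ to $f(m)$ I would use monotonicity: super-additivity and non-negativity give $f(a+b)\ge f(a)+f(b)\ge f(a)$, so $f$ is non-decreasing on $\mathbb{N}^+$.

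For a Q-node leaf the sum is empty, so $\Phi=0\le f(1)^2$ holds trivially; here we use that $f\geq 0$.

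For a P-node $\mu$ with children $\nu_1,\dots,\nu_k$, the pertinent graphs of the children partition the edges of $G_\mu$, so $\sum_i m_{\nu_i}=m_\mu$. Induction then gives
\[
\Phi(\mu)=\sum_i\Phi(\nu_i)\le\sum_i f(m_{\nu_i})^2\le\Big(\sum_i f(m_{\nu_i})\Big)^2\le f(m_\mu)^2 .
\]
The middle inequality uses non-negativity of $f$, and the last uses super-additivity together with non-negativity (squaring preserves order on non-negative reals).

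For an S-node $\mu$ with children $\nu_1,\nu_2$, again $m_{\nu_1}+m_{\nu_2}=m_\mu$, since the two pertinent graphs share only the vertex $w$ and no edges. Writing $a=f(m_{\nu_1})$ and $b=f(m_{\nu_2})$, induction gives
\[
\Phi(\mu)=ab+\Phi(\nu_1)+\Phi(\nu_2)\le ab+a^2+b^2\le(a+b)^2\le f(m_\mu)^2,
\]
where $ab\ge0$ is used. This is where super-additivity is essential: it absorbs the cross term, so the constant stays $1$ and does not blow up with the depth of $T$.

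I expect no real obstacle. The only point needing care is confirming that the children's edge counts sum exactly to the parent's in every node type, including P-nodes containing the edge $(u,v)$ as a Q-child and the root's child $G-e^*$. This follows directly from the SPQ-tree definition, because the pertinent graphs of the children are edge-disjoint and cover $G_\mu$.
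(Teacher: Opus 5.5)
Your proof is correct and is essentially the paper's argument: a bottom-up induction on $T$ that bounds the S-node sum over each subtree by a multiple of $f(m_\tau)^2$. As in the paper, you absorb the S-node cross term into $(a+b)^2$ and handle P-nodes with $\sum_i a_i^2\le(\sum_i a_i)^2$ plus super-additivity.

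The only differences are minor. You get the constant $1$ where the paper uses the unnecessary constant $4$. You also treat the root explicitly via the monotonicity of $f$, whereas the paper simply applies its bound at the root, which is a Q-node with one child and is not literally covered by its three cases.
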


\begin{proof}
For a node~$\tau$ of~$T$ (not necessarily an S-node), let $m_\tau$ be the number of edges in $G_\tau$, and let~$S_\tau:=\sum_\mu \left ( f(m_{\nu_1})\cdot f(m_{\nu_2})\right )$, where the sum is taken over all S-nodes~$\mu$ in the subtree of~$T$ rooted at~$\tau$. We prove that~$S_\tau\leq 4\big (f(m_{\tau})\big )^2$. The proof proceeds bottom-up on~$T$. 
\begin{itemize}
\item If~$\tau$ is a leaf, then~$S_\tau =0$ since the subtree of~$T$ rooted at~$\tau$ contains no S-node. Also,~$4\big (f(m_{\tau})\big)^2\geq 0$, given that~$m_{\tau}=1$ and~$f(1)\geq 0$. Hence, the inequality holds. 
\item If~$\tau$ is an S-node with children~$\tau_1$ and~$\tau_2$, where $G_{\tau_1}$ and $G_{\tau_2}$ have $m_{\tau_1}$ and $m_{\tau_2}$ edges, respectively, then~$S_\tau=f(m_{\tau_1})f(m_{\tau_2})+S_{\tau_1}+S_{\tau_2}\leq 4 \big(f(m_{\tau_1})\big)^2+f(m_{\tau_1})f(m_{\tau_2})+4\big(f(m_{\tau_2})\big)^2
\leq 4\big(f(m_{\tau_1})\big)^2+8f(m_{\tau_1})f(m_{\tau_2})+4\big(f(m_{\tau_2})\big)^2
=4(f(m_{\tau_1})+f(m_{\tau_2}))^2\leq 4\big(f(m_{\tau})\big)^2$
 where the last inequality exploits the fact that~$m_{\tau_1}+m_{\tau_2}= m_{\tau}$ and that~$f$ is a super-additive function.
\item Finally, if~$\tau$ is a P-node with~$k\geq 2$ children~$\tau_1$,~$\tau_2$, \dots~$\tau_k$, we have that~$S_{\tau}=\sum_{i=1}^{k}S_{\tau_i}\leq \sum_{i=1}^{k}4\big(f(m_{\tau_i})\big)^2 \leq 4\Big(\sum_{i=1}^{k} f(m_{\tau_i}) \Big)^2\leq 4\big(f(m_{\tau})\big)^2$, where the second inequality is the Cauchy-Schwartz inequality, and last inequality again exploits the fact that~$\sum_{i=1}^{k}m_{\tau_i}=m_\tau$ and that~$f$ is a super-additive function.
 \end{itemize}
The upper bound on~$S_\tau$, applied to the root of~$T$, proves the statement of the lemma.
\end{proof}

We thus get the following.

\begin{lemma}\label{lem:S_node_general}
Let~$\mu$ be an S-node of~$T$ with children~$\nu_1$ and~$\nu_2$, and let~$n_1$ and~$n_2$ be the number of nodes of~$G_{\nu_1}$ and~$G_{\nu_2}$. Given the feasible sets~$\mathcal{F}_{\nu_1}$ and~$\mathcal{F}_{\nu_2}$ of~$\nu_1$ and~$\nu_2$, respectively, the feasible set~$\mathcal{F}_\mu$ of~$\mu$ can be computed in 
$O(n_1 n_2)$ time. This sums up to~$O(n^2)$ time over all S-nodes of~$T$. 
\end{lemma}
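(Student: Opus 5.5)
The plan combines \cref{le:s-node-check} for the per-node cost with \cref{le:the-most-important-lemma-in-history} for the global bound.

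\textbf{Single S-node.} We enumerate every triple $(\sigma_1,\omega_1)\in\mathcal F_{\nu_1}$, $(\sigma_2,\omega_2)\in\mathcal F_{\nu_2}$, and $(\beta_w,\gamma_w)\in\{-1,0,1\}^2$. For each triple we apply \cref{le:s-node-check}, which runs in $O(1)$ time. In the positive case it outputs the descriptor pair $(\sigma,\omega)$, which we insert into $\mathcal F_\mu$. To make insertions $O(1)$ and avoid duplicates, we store $\mathcal F_\mu$ in the data structure of \cref{le:shape-data-structure}. We would check that it supports constant-time insertion, for instance as a table indexed by the $O(1)$ discrete labels and the turn number; otherwise we collect candidates in a list and deduplicate by bucketing. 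Since \cref{le:shape-data-structure} gives $|\mathcal F_{\nu_i}|\in O(n_i)$, the total time is $9\cdot O(n_1)\cdot O(n_2)=O(n_1n_2)$.

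\textbf{Correctness.} Every $uv$-external good embedding of $G_\mu$ restricts to a $uw$-external good embedding of $G_{\nu_1}$ and a $wv$-external good embedding of $G_{\nu_2}$. Their descriptor pairs therefore lie in $\mathcal F_{\nu_1}$ and $\mathcal F_{\nu_2}$, and the two outer angles at $w$ take some values in $\{-1,0,1\}$. The good-embedding property of the restriction should follow from the same argument as \cref{le:outer-good}: restrict an upward book embedding. Hence that triple is enumerated and \cref{le:s-node-check} accepts it. Conversely, every accepted triple is realized by an embedding, by the replacement argument stated before \cref{le:shape-data-structure}.

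\textbf{Global bound.} With $f(m)=p\,m-q$ as defined before \cref{le:the-most-important-lemma-in-history}, we have $|\mathcal F_{\nu_i}|\le f(m_{\nu_i})$. So the work at $\mu$ is $O(f(m_{\nu_1})f(m_{\nu_2}))$. Summing over all S-nodes and applying \cref{le:the-most-important-lemma-in-history} gives $O(f(m)^2)=O(m^2)$. This is $O(n^2)$ because $m\in O(n)$ for partial $2$-trees.

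The only delicate point is ensuring that the per-node cost is truly proportional to $|\mathcal F_{\nu_1}|\cdot|\mathcal F_{\nu_2}|$, with constant-time set insertion, rather than to $n_1n_2$ plus some size-dependent overhead. That is the step I would verify most carefully.
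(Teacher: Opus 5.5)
Your proposal is correct and follows the paper's argument. The paper likewise enumerates $\mathcal F_{\nu_1}\times\mathcal F_{\nu_2}$ together with the nine label choices at $w$, uses the $O(1)$-time check of \cref{le:s-node-check} to get $O(|\mathcal F_{\nu_1}|\cdot|\mathcal F_{\nu_2}|)=O(n_1n_2)$ per S-node, and then sums over all S-nodes via the super-additive $f(m)=p\,m-q$ and \cref{le:the-most-important-lemma-in-history}. Your extra concern about duplicate-free insertion, which the paper leaves implicit, is settled as you suggest: an array indexed by $\tau_l\in[-n_\mu+2,n_\mu-2]$ and the $O(1)$ discrete labels costs $O(n_\mu)\subseteq O(n_1n_2)$ to initialize.
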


\subsection{P-node} \label{sub:P}

Suppose next that~$\mu$ is a P-node. Differently from the case of S-nodes, we cannot just combine the descriptor pairs in the feasible sets of the children~$\nu_1,\dots,\nu_k$ of~$\mu$, as~$k$ might be large, and hence the number of combinations might be super-polynomial. Also, even if a descriptor pair were chosen for each child of~$\mu$, the number of permutations of the children of~$\mu$ might be super-polynomial; the choice of the permutation affects the descriptor pair of the resulting~$uv$-external good embedding of~$G_\mu$. Instead, our algorithm considers every possible descriptor pair that {\em might} describe a~$uv$-external good embedding of~$G_\mu$ and tests whether it belongs to~$\mathcal F_{\mu}$ or not. This is formalized as follows. A set~$\mathcal U_n$ of descriptor pairs is \emph{$n$-universal} if it satisfies the following properties.


\begin{itemize}
\item First, for every~$h$-vertex biconnected partitioned directed partial~$2$-tree~$H$ with $h\leq n$, for every two vertices~$u_H$ and~$v_H$ of~$H$, and for every~$u_Hv_H$-external good embedding~$(\mathcal E_H,\lambda_H)$ of~$H$, the descriptor pair of~$(\mathcal E_H,\lambda_H)$ is in~$\mathcal U_n$. 
\item Second, for every descriptor pair~$(\sigma,\omega)$ in~$\mathcal U_n$, there exists a biconnected partitioned directed  partial~$2$-tree~$H$ that contains two vertices~$u_H$ and~$v_H$, and that admits a~$u_Hv_H$-external good embedding~$(\mathcal E_H,\lambda_H)$ with descriptor pair~$(\sigma,\omega)$.  
\end{itemize}

Note that the feasible set~$\mathcal F_\mu$ of~$\mu$ is a subset of~$\mathcal U_n$. We observe the following:

\begin{lemma} \label{le:universal}
An~$n$-universal set~$\mathcal U_n$ of descriptor pairs with~$|\mathcal U_n|\in O(n)$ can be constructed in~$O(n)$ time. 
\end{lemma}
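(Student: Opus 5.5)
Our plan follows the analogous statement for shape descriptors in~\cite{DBLP:conf/compgeom/ChaplickGFGRS22,DBLP:conf/gd/ChaplickGFGRS22}, enlarged to descriptor pairs. First we bound the range of every component of a descriptor pair of a $u_Hv_H$-external good embedding of a graph $H$ with $h\leq n$ vertices. The left-turn-number $\tau_l$ is a sum of at most $h-2$ values in $\{-1,0,1\}$, so $\tau_l\in[-(n-2),n-2]$. The labels $\lambda^u,\lambda^v$ lie in $\{-1,0,1\}$, and each of $\rho^u_l,\rho^u_r,\rho^v_l,\rho^v_r$ is $\texttt{in}$ or $\texttt{out}$. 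As recalled after the definition of shape descriptors, $\tau_l,\lambda^u,\lambda^v,\rho^u_l$ determine the rest of the shape descriptor. Hence there are only $O(n)$ candidate shape descriptors. The pbe descriptor is a tuple of ten binary labels, so it takes at most $2^{10}$ values. We therefore define a superset $\mathcal U'_n$ of $O(n)$ candidate pairs, enumerated in $O(n)$ time. This already gives the first property of $n$-universality.

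Next we prune $\mathcal U'_n$ to the pairs that are actually realizable. Our plan is to precompute, once and independently of $n$, the finite set $\Omega$ of \emph{shape patterns}. A pattern consists of the sign of $\tau_l$ clamped to a bounded window, together with $\lambda^u,\lambda^v,\rho^u_l$ and the pbe descriptor $\omega$. The set $\Omega$ records, for each pattern, whether it is realizable and from which threshold of $|\tau_l|$ onward. The intended claim is a monotonicity or periodicity statement. If $(\sigma,\omega)$ is realizable with $|\tau_l|=t\geq c$ for some constant $c$, then $(\sigma',\omega)$ is realizable with $|\tau_l|=t+1$, provided the other parameters stay consistent with Property~C3 and the number of vertices permits. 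For the second property (realizability) we argue constructively. Given a pair with large $|\tau_l|$, we build $H$ by taking a small gadget that realizes the pattern. We then insert into one outer path a zig-zag chain of alternating sources and sinks, whose outer angles we choose to adjust $\tau_l$ by $\pm1$ per switch. Correspondingly, $\tau_r$ changes through the C3 relation $\tau_l+\tau_r+\lambda^u+\lambda^v=2$. The chain's edges must be given pages that keep every new vertex 4-modal and create no impossible face. We plan to attach each zig-zag switch by a parallel two-edge path whose edges are one in $L$ and one in $R$. This keeps every maximal directed path on the boundary mixed, and so impossible faces are avoided, as in the proof of \cref{th:np-complete}. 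The new switches sit in the interior of the outer path, so none of $p^u_l,\dots,p^v_r,\chi_l,\chi_r,\alpha^u_l,\dots$ changes. The gadget is placed so that the turning is inserted away from the ends whose $\chi$ and $\alpha$ labels matter.

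The main obstacle is the interaction between the pbe labels and the turning number. If $\chi_l=1$, the left outer path is a single directed monochromatic path. Its interior angles are then all flat, so $\tau_l=0$ is forced. Similarly, the $\alpha$-labels constrain the angles near the poles. We would handle this by a finite case analysis over the $2^{10}$ pbe values. For each value we determine the admissible range of $\tau_l$: a single value when $\chi_l=1$, and otherwise an interval unbounded in both directions up to $\pm(n-2)$, minus a bounded exceptional set. The bounded part of each range is decided by brute force over small graphs, which is $O(1)$ work independent of $n$. Once $\Omega$ is fixed, constructing $\mathcal U_n$ amounts to filtering the $O(n)$ candidates of $\mathcal U'_n$ through $\Omega$ in $O(1)$ time each, which gives $|\mathcal U_n|\in O(n)$ and $O(n)$ construction time.
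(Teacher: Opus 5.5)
Your first step is sound: enumerating the $O(n)$ candidates with $\tau_l\in[-n+2,n-2]$ and all pbe tuples already gives the first property of $n$-universality. The gap is in the pruning, which is where all the content of the lemma lies. The second property forces you to keep \emph{only} realizable candidates, so you need an exact characterization.

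``Brute force over small graphs'' can certify that a pattern is realizable. Without a bound on the size of a smallest witness, it can never certify that a pattern is not. Moreover, the structural claims that are supposed to make $\Omega$ finite are false as stated.

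(i) Changing $\tau_l$ by one flips its parity, hence $\rho^v_l$ and $\rho^v_r$, and this can destroy 4-modality at $v$. With $p^v_l=L$, $p^v_r=R$, $\lambda^v=1$, the pole $v$ is 4-modal only if $\rho^v_l=\texttt{in}$. So an entire parity class of $\tau_l$ is excluded rather than a bounded set, and the step-one monotonicity fails.

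(ii) $\chi_r=1$ forces $\tau_r=0$, hence $\tau_l=2-\lambda^u-\lambda^v$. So $\chi_l=0$ does not give an unbounded range.

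(iii) The augmentation breaks upward-consistency. A degree-$2$ switch inserted on $P_l$ with outer label $\pm1$ has inner label $\mp1$, so both the outer-face sum and the sum of the adjacent internal face change. $\tau_r$ does not ``change through C3'' by itself. You must insert compensating switches on $P_r$, and the internal faces stay balanced only if $P_l$ and $P_r$ bound the same internal face.

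The paper avoids all of this with an explicit characterization. It discards a candidate if and only if it violates one of eleven local necessary conditions:
\begin{itemize}
\item label consistency ($\chi_l=1$ forces $\tau_l=0$, $p^u_l=p^v_l$, $\rho^u_l\neq\rho^v_l$, page matching direction, and $\alpha^u_l=\alpha^v_l=0$);
\item 4-modality at the poles;
\item the configurations that make the outer face impossible.
\end{itemize}
For sufficiency, it realizes every surviving pair by a single cycle. Each outer path is built from mixed length-$2$ directed paths. The vertex $w_u$ (resp.\ $w_v$) is a monochromatic switch with small outer angle if and only if $\alpha^u_l=1$ (resp.\ $\alpha^v_l=1$). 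Then $|\tau_l-S|$ further monochromatic switches receive outer labels with the sign of $\tau_l-S$.

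On a cycle every inner label is the negative of the outer one. Hence C3 for the inner face follows from C3 for the outer face, i.e.\ from $\tau_r=2-\tau_l-\lambda^u-\lambda^v$; this is exactly the balancing your augmentation lacks. Parity is handled by orienting the mixed paths.

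To salvage your route, first observe that every realizable pair is realized by the outer cycle of its witness (restrict an upward book embedding, as in \cref{le:outer-good}). On a cycle, realizability reduces to local conditions along the boundary, and you are led to essentially the paper's checks and construction.
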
	

\begin{proof}
We describe how to construct~$\mathcal U_n$. We start by considering each~$4$-tuple of values~$\tau_l\in [-n+2,n-2]$,~$\lambda^u,\lambda^v\in \{-1,0,1\}$, and~$\rho_l^u\in \{\texttt{in},\texttt{out}\}$. For each such~$4$-tuple, we construct a shape descriptor~$\sigma=\shapeDesc{\tau_l}{\tau_r}{\lambda^u}{\lambda^v}{\rho_l^u}{\rho_r^u}{\rho_l^v}{\rho_r^v}$, where~$\tau_r=2-\tau_l-\lambda^u-\lambda^v$, where~$\rho_r^u\in \{\texttt{in},\texttt{out}\}$ has the same value as~$\rho_l^u$ if~$\lambda^u\in \{-1,1\}$ and different value if~$\lambda^u=0$, where~$\rho_l^v\in \{\texttt{in},\texttt{out}\}$ has the same value as~$\rho_l^u$ if~$\tau_l$ is odd and different value if~$\tau_l$ is even, and where~$\rho_r^v\in \{\texttt{in},\texttt{out}\}$ has the same value as~$\rho_l^v$ if~$\lambda^v\in \{-1,1\}$ and different value if~$\lambda^v=0$. For each constructed shape descriptor~$\sigma$, we consider each~$10$-tuple~$\omega$ of values~$p^u_l,p^u_r,p^v_l,p^v_r\in \{L, R\}$, and~$\chi_l,\chi_r,\alpha^u_l,\alpha^u_r,\alpha^v_l,\alpha^v_r\in \{0,1\}$. Out of the~$2^{10}$ possible tuples~$\omega$, we discard those  which satisfy at least one of the following checks:
\begin{enumerate}
\item~$\chi_l=1$ and~$\tau_l\neq 0$ (indeed,~$\chi_l=1$ requires the left outer path~$P_l$ to be a directed path, thus the angles in the outer face at the internal vertices of~$P_l$ all have to be assigned label~$0$, whereas~$\tau_l\neq 0$ requires at least one of such angles to be assigned a label different from~$0$), or~$\chi_r=1$ and~$\tau_r\neq 0$; 
\item~$\chi_l=1$ and~$p^u_l\neq p^v_l$ (indeed,~$\chi_l=1$ requires the left outer path~$P_l$ to be entirely composed of left edges or entirely composed of right edges, whereas~$p^u_l\neq p^v_l$ implies that~$P_l$ contains a left and a right edge), or~$\chi_r=1$ and~$p^u_r\neq p^v_r$;
\item~$\chi_l=1$ and~$\rho^u_l=\rho^v_l$ (indeed,~$\chi_l=1$ requires  the left outer path~$P_l$ to be a directed path from~$u$ to~$v$ or from~$v$ to~$u$, whereas~$\rho^u_l=\rho^v_l$ requires the edges of~$P_l$ incident to~$u$ and~$v$ to be both incoming into~$u$ and~$v$ or both outgoing from~$u$ and~$v$), or~$\chi_r=1$ and~$\rho^u_r=\rho^v_r$;
\item~$\chi_l=1$,~$\rho^u_l=\texttt{out}$, and~$p^u_l=R$ (indeed,~$\chi_l=1$ and~$\rho^u_l=\texttt{out}$ require the left outer path~$P_l$ to be entirely composed of left edges, whereas~$p^u_l=R$ implies that~$P_l$ contains a right edge), or~$\chi_l=1$,~$\rho^u_l=\texttt{in}$, and~$p^u_l=L$, or~$\chi_r=1$,~$\rho^u_r=\texttt{out}$, and~$p^u_l=L$, or~$\chi_r=1$,~$\rho^u_r=\texttt{in}$, and~$p^u_l=R$;
\item~$\chi_l=1$ and~$\alpha^u_l=1$ (indeed,~$\chi_l=1$ requires the internal vertices of the left outer path~$P_l$ to be incident to flat angles in the outer face, whereas~$\alpha^u_l=1$ requires~$P_l$ to contain an internal vertex that is incident to a small angle in the outer face), or~$\chi_l=1$ and~$\alpha^v_l=1$, or~$\chi_r=1$ and~$\alpha^u_r=1$, or~$\chi_r=1$ and~$\alpha^v_r=1$;
\begin{figure}[htb]
    \centering
    \begin{subfigure}{.14\textwidth}\centering
    \includegraphics[page=1]{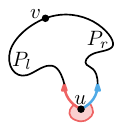}
    \subcaption{\label{fig:Pnodes1}}
    \end{subfigure}
    \hfil
    \begin{subfigure}{.14\textwidth}\centering
    \includegraphics[page=2]{figures/Pnodes.pdf}
    \subcaption{\label{fig:Pnodes2}}
    \end{subfigure}
    \hfil
    \begin{subfigure}{.14\textwidth}\centering
    \includegraphics[page=3]{figures/Pnodes.pdf}
    \subcaption{\label{fig:Pnodes3}}
    \end{subfigure}
    \hfil
    \begin{subfigure}{.14\textwidth}\centering
    \includegraphics[page=4]{figures/Pnodes.pdf}
    \subcaption{\label{fig:Pnodes4}}
    \end{subfigure}
    \hfil
    \begin{subfigure}{.14\textwidth}\centering
    \includegraphics[page=5]{figures/Pnodes.pdf}
    \subcaption{\label{fig:Pnodes5}}
    \end{subfigure}
    \hfil
    \begin{subfigure}{.14\textwidth}\centering
    \includegraphics[page=6]{figures/Pnodes.pdf}
    \subcaption{\label{fig:Pnodes6}}
    \end{subfigure}
    \caption{Forbidden values for the pbe descriptor.}\label{fig:universal}
\end{figure}
\item~$p^u_l=R$,~$p^u_r=L$,~$\rho_l^u=\texttt{out}$, and~$\lambda^u=1$ (indeed, 4-modality, large angle in the outer face, left outer path outgoing with a red edge and right outer path outgoing with a blue edge cannot be all achieved simultaneously, see \cref{fig:Pnodes1}),~$p^u_l=L$,~$p^u_r=R$,~$\rho_l^u=\texttt{in}$, and~$\lambda^u=1$, or~$p^v_r=R$,~$p^v_l=L$,~$\rho_l^v=\texttt{out}$, and~$\lambda^v=1$, or~$p^v_l=R$,~$p^v_r=L$,~$\rho_l^v=\texttt{in}$, and~$\lambda^v=1$;
\item~$\chi_l=1$,~$\lambda^u=-1$, and~$\lambda^v=-1$ (indeed, this results in the outer face to be impossible, see \cref{fig:Pnodes2}), or~$\chi_r=1$,~$\lambda^u=-1$, and~$\lambda^v=-1$;
\item~$\alpha^u_l=1$ and~$\lambda^u=-1$ (indeed, this results in the outer face to be impossible, see \cref{fig:Pnodes3}), or~$\alpha^v_l=1$ and~$\lambda^v=-1$, or~$\alpha^u_r=1$ and~$\lambda^u=-1$, or~$\alpha^v_r=1$ and~$\lambda^v=-1$;
\item~$\alpha^u_l=1$,~$\lambda^u=0$,~$\alpha^u_r=1$, and~$p^u_l=p^u_r$ (indeed, this results in the outer face to be impossible, see \cref{fig:Pnodes4}), or~$\alpha^v_l=1$,~$\lambda^v=0$,~$\alpha^v_r=1$, and~$p^v_l=p^v_r$; 
\item~$\chi_l=1$,~$\lambda^u=0$,~$\lambda^v=-1$,~$\alpha^u_r=1$, and~$p^u_l=p^u_r$ (indeed, this results in the outer face to be impossible, see \cref{fig:Pnodes5}), or~$\chi_l=1$,~$\lambda^v=0$,~$\lambda^u=-1$,~$\alpha^v_r=1$, and~$p^v_l=p^v_r$, or~$\chi_r=1$,~$\lambda^u=0$,~$\lambda^v=-1$,~$\alpha^u_l=1$, and~$p^u_l=p^u_r$, or~$\chi_r=1$,~$\lambda^v=0$,~$\lambda^u=-1$,~$\alpha^v_l=1$, and~$p^v_l=p^v_r$; and
\item~$\chi_l=1$,~$\lambda^u=0$,~$\lambda^v=0$,~$\alpha^u_r=1$,~$\alpha^v_r=1$, and~$p^u_l=p^u_r=p^v_r$ (indeed, this results in the outer face to be impossible, see \cref{fig:Pnodes6}), or~$\chi_r=1$,~$\lambda^u=0$,~$\lambda^v=0$,~$\alpha^u_l=1$,~$\alpha^v_l=1$, and~$p^u_r=p^u_l=p^v_l$.
\end{enumerate}

Each~$10$-tuple that is not discarded is a pbe descriptor~$\omega$ that together with the shape descriptor~$\sigma$ forms a descriptor pair~$(\sigma,\omega)$ that we insert into~$\mathcal U_n$. By construction, we have~$|\mathcal U_n|<(2n-3)\cdot 2^{13}\in O(n)$; also,~$\mathcal U_n$ can clearly be constructed in~$O(n)$ time. We now prove that~$\mathcal U_n$ is indeed~$n$-universal.

\begin{itemize}
    \item First, we prove that, for every~$h$-vertex biconnected partitioned directed partial~$2$-tree~$H$ with $h\leq n$, for every two vertices~$u_H$ and~$v_H$ of~$H$, and for every~$u_Hv_H$-external good embedding~$(\mathcal E_H,\lambda_H)$ of~$H$, the descriptor pair~$(\sigma,\omega)$ of~$(\mathcal E_H,\lambda_H)$ is in~$\mathcal U_n$. By definition, the left-turn-number~$\tau_l$ of~$(\mathcal E_H,\lambda_H)$ is equal to the sum of the labels assigned by~$\lambda_H$ to the angles in the outer face of~$\mathcal E_H$ at the internal vertices of the left outer path~$P_l$ of~$\mathcal E_H$. Since~$H$ has~$h$ vertices, the number of internal vertices of~$P_l$ is at most~$h-2$, hence~$\tau_l\in [-h+2,h-2]$. As mentioned earlier, the values~$\tau_l,\lambda^u,\lambda^v,\rho_l^u$ determine the other values of the labels in~$\sigma$. Since we consider all possible values for~$\lambda^u$,~$\lambda^v$, and~$\rho_l^u$, we indeed generate~$\sigma$. Finally, we generate all~$10$-tuples~$\shapeDescPUBE{p_l^u}{p_r^u}{p_l^v}{p_r^v}{\chi_l}{\chi_r}{\alpha_l^u}{\alpha_r^u}{\alpha_l^v}{\alpha_r^v}$ and only discard tuples that violate the fact that each edge of~$H$ is uniquely oriented, or that each edge of~$H$ is uniquely assigned to a part of the edge set of~$H$, or that~$H$ is acyclic, or that~$(\mathcal E_H,\lambda_H)$ is a~$u_Hv_H$-external good embedding, hence we do not discard~$\omega$, and the descriptor pair~$(\sigma,\omega)$ is indeed added to~$\mathcal U_n$.
    \item Second, we prove that, for every descriptor pair~$(\sigma,\omega)$ in~$\mathcal U_n$, there exists a biconnected partitioned directed partial~$2$-tree~$H$ that contains two vertices~$u$ and~$v$, and that admits a~$uv$-external good embedding~$(\mathcal E_H,\lambda_H)$ with descriptor pair~$(\sigma,\omega)$. 
    
    The underlying graph of our digraph~$H$ is just a cycle, composed of paths~$P_l$ and~$P_r$ connecting~$u$ and~$v$, where~$P_l$ and~$P_r$ are the left and right outer path of~$\mathcal E_H$, respectively. We show how to construct~$P_l$, as the construction of~$P_r$ is analogous. By ``multi-path'' we mean a length-$2$ directed path composed of edges in both parts of the edge set of~$H$. Such a path is assigned label~$0$ by~$\lambda_H$ at the angles incident to its internal vertex.  

    Let~$w_u$ and~$w_v$ be the neighbors of~$u$ and~$v$ in~$P_l$, respectively. We choose the orientation and part for the edge between~$u$ and~$w_u$ according to~$\rho^u_l$ and~$p^u_l$, respectively, and likewise for the edge between~$v$ and~$w_v$, by exploiting~$\rho^v_l$ and~$p^v_l$.
    
    \begin{itemize}
    \item If~$\chi_l=1$, then~$\tau_l=0$, by Check 1, and~$P_l$ is a directed path. We let~$w_u=w_v$ and we let both the angles at~$w_u=w_v$ be assigned label~$0$ by~$\lambda_H$. Note that, by Checks 2, 3, and 4,~$P_l$ is either a directed path~$(u,w_u=w_v,v)$ composed of left edges, or a directed path~$(v,w_u=w_v,u)$ composed of right edges. By Check 5, we have~$\alpha^u_l=\alpha^v_l=0$, hence the definition of~$P_l$ complies with the values of such labels.

\begin{figure}[htb]
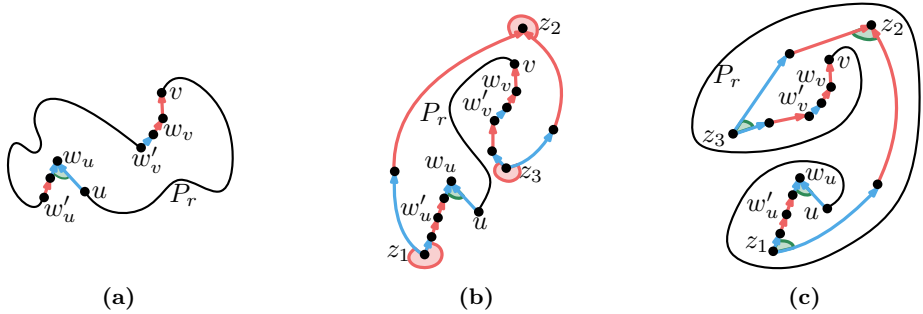

    \centering
    \begin{subfigure}{.3\textwidth}\centering
    \includegraphics[page=7]{figures/Pnodes.pdf}
    \subcaption{\label{fig:Pnodes7}}
    \end{subfigure}
    \hfil
    \begin{subfigure}{.3\textwidth}\centering
    \includegraphics[page=8]{figures/Pnodes.pdf}
    \subcaption{\label{fig:Pnodes8}}
    \end{subfigure}
    \begin{subfigure}{.3\textwidth}\centering
    \includegraphics[page=9]{figures/Pnodes.pdf}
    \subcaption{\label{fig:Pnodes9}}
    \end{subfigure}
    \caption{Definition of the left outer path~$P_l$ of the planar embedding~$\mathcal E_H$ of the graph~$H$. (a) Multipaths~$P^u_l$ and~$P^v_l$. In this example,~$\alpha^u_l=1$ and~$\alpha^v_l=0$. (b)-(c) Inserting vertices~$z_1,\dots,z_x$ and length-$2$ multi-paths incident to them, with~$S=-1$.  In (b), we have~$\tau_l=2$, hence~$x=|2-(-1)|=3$, while in (c) we have~$\tau_l=-4$, hence~$x=|-4-(-1)|=3$.}\label{fig:universal-works}
\end{figure}
    
    \item If~$\chi_l=0$, we introduce a multi-path~$P^u_l$ between a vertex~$w'_u$ and~$w_u$ and a multi-path~$P^v_l$ between a vertex~$w'_v$ and~$w_v$, see \cref{fig:Pnodes7}. If~$\alpha^u_l=1$, then~$P^u_l$ is oriented so that~$w_u$ is a switch and the partition of its edges is such that~$w_u$ is only incident to edges in one part; we let~$\lambda_H$ assign label~$-1$ to the angle at~$w_u$ in the outer face of~$\mathcal E_H$ and label~$1$ to the angle at~$w_u$ in the internal face of~$\mathcal E_H$. If~$\alpha^u_l=0$, then~$P^u_l$ is oriented so that~$w_u$ is a not a switch; we let~$\lambda_H$ assign label~$0$ to both angles at~$w_u$. Analogously, if~$\alpha^v_l=1$, then~$P^v_l$ is oriented so that~$w_v$ is a switch and the partition of its edges is such that~$w_v$ is only incident to edges in one part; we let~$\lambda_H$ assign label~$-1$ to the angle at~$w_v$ in the outer face of~$\mathcal E_H$ and label~$1$ to the angle at~$w_v$ in the internal face of~$\mathcal E_H$. If~$\alpha^v_l=0$, then~$P^v_l$ is oriented so that~$w_v$ is a not a switch; we let~$\lambda_H$ assign label~$0$ to both angles at~$w_v$. 
    
    Let~$S$ be the sum of the labels assigned to the angles at $w_u$ and $w_v$ in the outer face of~$\mathcal E_H$, note that~$S\in \{-2,-1,0\}$. We introduce~$x=|\tau_l-S|$ vertices~$z_1,\dots,z_x$ in~$P_l$, as in \cref{fig:Pnodes8} and \cref{fig:Pnodes9}. We let~$z_0:=w'_u$ and~$z_{x+1}:=w'_v$. For~$i=0,\dots,x$, we introduce a multi-path between~$z_i$ and~$z_{i+1}$. These multi-paths are oriented so that~$z_0$ and~$z_{x+1}$ are not switches and so that~$z_1,\dots,z_x$ are switches. This orientation is indeed possible, given that~$\tau_l$ is odd if and only if the edges between~$u$ and~$w_u$ and between~$v$ and~$w_v$ are both incoming~$u$ and~$v$ or both outgoing from~$u$ and~$v$, by construction. We let~$\lambda_H$ assign label~$0$ to each angle at~$z_0$ or~$z_{x+1}$; also, for~$i=1,\dots,x$, if~$\tau_l-S>0$, we let~$\lambda_H$ assign label~$1$ to the angle at~$z_i$ in the outer face of~$\mathcal E_H$ and label~$-1$ to the angle at~$z_i$ in the internal face of~$\mathcal E_H$, while if~$\tau_l-S<0$, we let~$\lambda_H$ assign label~$-1$ to the angle at~$z_i$ in the outer face of~$\mathcal E_H$ and label~$1$ to the angle at~$z_i$ in the internal face of~$\mathcal E_H$. Note that, if~$\tau_l-S=0$, then~$x=0$, and a single multi-path between~$w'_u$ and~$w'_v$ is inserted. We assign the edges of the introduced multi-paths to the parts of the edge set of~$H$ so that both edges incident to each vertex among~$z_1,\dots,z_x$ belong the same part.

\end{itemize}
We complete the definition of~$(\mathcal E_H,\lambda_H)$ by letting~$\lambda_H$ assign labels at~$u$ and~$v$ according to~$\lambda^u$ and~$\lambda^v$, respectively: The label of the angle at~$u$ in the outer face of~$\mathcal E_H$ is~$\lambda^u$ and the one at~$u$ in the internal face of~$\mathcal E_H$ is~$0-\lambda^u$, and similar for~$v$.

We prove that~$H$ and~$(\mathcal E_H,\lambda_H)$ satisfy the required properties. 
\begin{itemize}
\item First, note that~$H$ is an acyclic digraph. For the contrary, assume that~$P_l$ is a directed path from~$u$ to~$v$ and~$P_r$ is a directed path from~$v$ to~$u$. By construction,~$P_l$ is a directed path from~$u$ to~$v$ if and only if~$\tau_l=0$, and~$P_r$ is a directed path from~$v$ to~$u$ if and only if~$\tau_r=0$. By construction, since the edge of~$P_l$ incident to~$u$ is outgoing from~$u$, we have~$\rho^u_l=\texttt{out}$; similarly, we have~$\rho^v_l=\texttt{in}$,~$\rho^v_r=\texttt{out}$, and~$\rho^u_r=\texttt{in}$. Furthermore, by construction we have~$\rho^u_l\neq \rho^u_r$ if and only if~$\lambda^u=0$; similarly, we have~$\lambda^v=0$. However, this contradicts the definition of the value~$\tau_r$, which requires~$\tau_r=2-\tau_l-\lambda^u-\lambda^v$. 
\item Second, since~$H$ is a cycle, then clearly~$\mathcal E_H$ is a planar embedding in which~$u$ and~$v$ are incident to the outer face. 
\item Third, we prove that~$(\mathcal E_H,\lambda_H)$ is an upward embedding. Since~$\mathcal E_H$ is a planar embedding, we need to prove that~$\lambda_H$ is an upward-consistent angle assignment, hence it satisfies Properties C1--C3, see also~\cref{th:upward-conditions}. Concerning Properties C1 and C2, note that the degree of every vertex of~$H$ is~$2$, hence the sum of the labels assigned to each vertex has to be~$0$. By construction the internal vertices of the multi-paths, as well as the vertices~$w'_u$ and~$w'_v$ define two flat angles that are both assigned label~$0$ by~$\lambda_H$, while the vertices~$z_1,\dots,z_x$  define two switch angles which are one assigned label~$-1$ and one assigned~$1$ by~$\lambda_H$. Vertex~$w_u$ (the argument for~$w_v$ is analogous) either defines two flat angles that are both assigned label~$0$ by~$\lambda_H$ (if~$\alpha_l^u=0$), or defines two switch angles which are one assigned label~$-1$ and one assigned~$1$ by~$\lambda_H$ (if~$\alpha_l^u=1$). Finally, by construction, vertex~$u$ (the argument for~$v$ is analogous) defines two flat angles which are assigned label~$0$ by~$\lambda_H$ if~$\lambda^u=0$, and  two switch angles which are one assigned label~$-1$ and one assigned~$1$ by~$\lambda_H$ if~$\lambda^u=\pm 1$. Concerning Property C3, by construction the left-turn-number and right-turn-number of~$(\mathcal E_H,\lambda_H)$ are~$\tau_l$ and~$\tau_r$, respectively, while the label of the angles at~$u$ and~$v$ in the outer face of~$\mathcal E_H$ are respectively~$\lambda^u$ and~$\lambda^v$. The fact that the sum of the labels assigned to the angles in the outer face of~$\mathcal E_H$ is~$2$ hence follows from~$\tau_r=2-\tau_l-\lambda^u-\lambda^v$. Since every angle in the internal face of~$\mathcal E_H$ is assigned a label which sums up to~$0$ with the label assigned to the angle at the same vertex in the outer face of~$\mathcal E_H$, it follows that the sum of the labels assigned to the angles in the internal face of~$\mathcal E_H$ is~$-2$.  
\item Fourth, we prove that~$(\mathcal E_H,\lambda_H)$ is 4-modal. Consider any vertex~$z$ of~$P_l$, the argument for the vertices of~$P_r$ is symmetric. If~$z$ is internal to some multi-path, or if~$z=w'_u$, or if~$z=w'_v$, then it has one incoming and one outgoing edge, hence it is 4-modal. If~$z\in\{z_1,\dots,z_x\}$, then it is incident to two edges in the same part of the edge set of~$H$, hence it is 4-modal. If~$z=w_u$ (the argument for the case in which~$z=w_v$ is analogous), then either it has one incoming and one outgoing edge (if~$\alpha_l^u=0$), or it is incident to two edges in the same part of the edge set of~$H$ (if~$\alpha_l^u=1$), hence it is 4-modal. Finally, Check 6 ensures that~$u$ and~$v$ are 4-modal.
\item Finally, we prove that~$(\mathcal E_H,\lambda_H)$ contains no impossible face. By definition, a multi-path contains both left and right edges, hence every maximal directed path in~$H$ that is entirely composed of left edges or entirely composed of right edges contains~$u$ or contains~$v$ (or contains both). Let~$f_I$ and~$f_O$ be the internal and outer face of~$\mathcal E_H$.
\begin{itemize}
\item We prove that~$f_I$ is not impossible. Indeed, we show that the edge connecting~$u$ to its neighbor~$w_u$ is not part of a maximal directed path causing~$f_I$ to be impossible, the proof for the other three edges incident to~$u$ or~$v$ is analogous. If~$\alpha^u_l=1$, then the angle at~$w_u$ in~$f_I$ is labeled~$1$, hence the  maximal directed path containing the edge connecting~$u$ to~$w_u$ does not have a small angle in~$f_I$ at its end-vertex and thus does not cause~$f_I$ to be impossible. If~$\alpha^u_l=0$, then the maximal directed path containing the edge connecting~$u$ to~$w_u$ also contains the multi-path between~$w_u$ and~$w'_u$, hence it is not entirely composed of left edges or entirely composed of right edges and does not cause~$f_I$ to be impossible. 
\item That the edges connecting~$u$ and~$v$ to its neighbors do not belong to any maximal directed path causing~$f_O$ to be impossible is ensured by Checks 7--11. Namely, Check 7 deals with directed paths that have~$u$ and~$v$ as end-vertices, Check 8 deals with directed paths that have one of~$u$ and~$v$ as an end-vertex and do not contain the other one, Check 9 deals with directed paths that have one of~$u$ and~$v$ as an internal vertex and do not contain the other one, Check 10 deals with directed paths that have one of~$u$ and~$v$ as an internal vertex and the other one as an end-vertex, and finally Check 11 deals with directed paths that have~$u$ and~$v$ as internal vertices.
\end{itemize}
\end{itemize}
\end{itemize}
This concludes the proof of the lemma.
\end{proof}

Consider a~$uv$-external good embedding~$(\mathcal E_{\mu},\lambda_\mu)$ of~$G_{\mu}$ with descriptor pair~$(\sigma,\omega)$. For~$i=1,\dots,k$, let~$\mathcal E_{\nu_i}$ be the~$uv$-external good embedding~$(\mathcal E_{\nu_i},\lambda_{\nu_i})$ of~$G_{\nu_i}$ which is the restriction of~$\mathcal E_{\mu}$ to~$G_{\nu_i}$ and let~$(\sigma_i,\omega_i)$ be the descriptor pair of~$(\mathcal E_{\nu_i},\lambda_{\nu_i})$. Assume, without loss of generality up to a change of the indices of the nodes~$\nu_i$, that the clockwise order around~$u$ in~$\mathcal E_{\mu}$ of the pertinent graphs of the children of~$\mu$ is~$G_{\nu_1},\dots,G_{\nu_k}$, where the left outer path of~$\mathcal E_{\nu_1}$ is also the left outer path of~$\mathcal E_{\mu}$ and the right outer path of~$\mathcal E_{\nu_k}$ is also the right outer path of~$\mathcal E_\mu$. The sequence~$\mathcal S_\mu=[(\sigma_1,\omega_1),\dots,(\sigma_k,\omega_k)]$ is then called the \emph{descriptor sequence} of~$(\mathcal E_{\mu},\lambda_\mu)$. The \emph{contracted descriptor sequence} of~$(\mathcal E_{\mu},\lambda_\mu)$ is the sequence of descriptor pairs obtained from~$\mathcal S_\mu$ by identifying consecutive descriptor pairs that are equal; refer to \cref{fig:constracted-sequences} for an example. Finally, the \emph{generating set}~$\mathcal G(\sigma,\omega)$ for the descriptor pair~$(\sigma,\omega)$ is the set of contracted descriptor sequences that a P-node~$\mu$ with poles~$u$ and~$v$ can have in a~$uv$-external good embedding with descriptor pair~$(\sigma,\omega)$. We have the following.

\begin{figure}[tb!]
    \centering
\includegraphics[page=3,width=.7\textwidth]{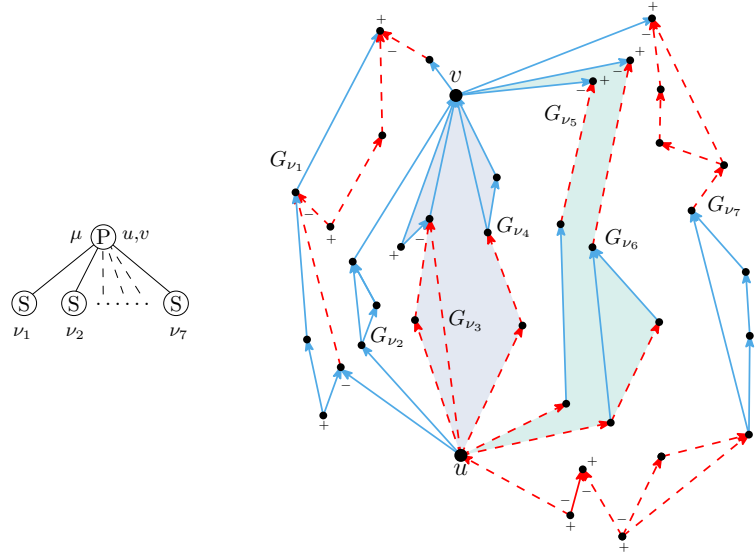}
    \caption{A $uv$-external good embedding $(\mathcal E_\mu, \lambda_\mu)$
    of the pertinent graph $G_\mu$ of a P-node~$\mu$ of the SPQ-tree of a biconnected partitioned directed $2$-tree. To avoid visual cluttering, only the large and small angles on the left and right outer paths of the graphs $G_{\nu_1}, G_{\nu_2},\dots,G_{\nu_{7}}$ are shown; large and small angles are labeled with a $+$ and a $-$ sign, respectively.
    The descriptor pair $(\sigma,\omega)$ of $(\mathcal E_\mu, \lambda_\mu)$ is such that 
    $\sigma = \langle 
    1, 2,0,-1, \texttt{out}, \texttt{in}, \texttt{out}, \texttt{out} \rangle$ and 
    $\omega = \langle L,R,L,L,0,0,1,0,0,0\rangle$.
    The descriptor sequence of $(\mathcal E_\mu, \lambda_\mu)$ is $[(\sigma_1,\omega_1),(\sigma_2,\omega_2),\dots,(\sigma_{7},\omega_{7})]$
    and these are the descriptor pairs of the 
    $uv$-external good embeddings of the graphs $G_{\nu_1}, G_{\nu_2},\dots,G_{\nu_{7}}$ in $(\mathcal E_\mu, \lambda_\mu)$, respectively. The shape descriptors are
    $\sigma_1= \langle1, -1, 1, 1, \texttt{out}, \texttt{out}, \texttt{out}, \texttt{out} \rangle$,
    $\sigma_2= \langle 0, 0, 1, 1, \texttt{out}, \texttt{out}, \texttt{in}, \texttt{in} \rangle$,
    $\sigma_3=\sigma_4=\langle 
    0, 0, 1, 1, \texttt{out}, \texttt{out}, \texttt{in}, \texttt{in} \rangle$,
    $\sigma_5=\sigma_6 = \langle 
    -1,  1, 1, 1, \texttt{out}, \texttt{out}, \texttt{out}, \texttt{out}\rangle$,
    $\sigma_{7} = \langle 
    -2, 2, 1, 1, \texttt{in}, \texttt{in}, \texttt{out}, \texttt{out}\rangle$, and the pbe descriptors are
    $\omega_1 = \langle L,L,L,L,0,0,1,0,0,0 \rangle$,
    $\omega_2 = \langle L,L,L,L,1,0,0,0,0,0 \rangle$,
    $\omega_3=\omega_4 = \langle R,R,L,L,0,0,0,0,0,0 \rangle$,
    $\omega_5=\omega_6=\langle R, R, L, L, 0, 0, 0, 0, 0, 0\rangle $, and
    $\omega_7=\langle R, R, L, L, 0, 0, 1, 0, 0, 0\rangle $. The contracted descriptor sequence of $(\mathcal E_\mu, \lambda_\mu)$ is $[(\sigma_1,\omega_1), (\sigma_2,\omega_2), (\sigma_3,\omega_3), (\sigma_5,\omega_5), (\sigma_7,\omega_7)]$.
    }\label{fig:constracted-sequences}
\end{figure}


\begin{lemma} \label{le:generating}
The generating set~$\mathcal G(\sigma,\omega)$ of a descriptor pair~$(\sigma,\omega)$ has size~$O(1)$ and can be constructed in~$O(1)$ time. Further, each contracted descriptor sequence in~$\mathcal G(\sigma,\omega)$ has length~$O(1)$. 
\end{lemma}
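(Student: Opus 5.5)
The plan is to bound the structure of any descriptor sequence $\mathcal S_\mu=[(\sigma_1,\omega_1),\dots,(\sigma_k,\omega_k)]$ of a $uv$-external good embedding with descriptor pair $(\sigma,\omega)$. The sequence will split into $O(1)$ \emph{special} positions and $O(1)$ maximal \emph{regular} stretches between them, and the descriptor pair will be forced to be constant along each regular stretch. Once this holds, every contracted descriptor sequence has length $O(1)$. Moreover, each special or regular block ranges over a constant number of pbe values, and its shape values are determined by $\sigma$ and by $O(1)$ local labels. Hence $\mathcal G(\sigma,\omega)$ can be produced by enumerating constantly many templates and keeping those that pass $O(1)$-time checks of the kind used in \cref{le:s-node-check}.

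\textbf{Step 1: at most $O(1)$ special children.} For $i=1,\dots,k-1$, let $a^u_i$ and $a^v_i$ be the labels of the angles at $u$ and $v$ in the internal face $f_i$ of $\mathcal E_\mu$ that lies between $G_{\nu_i}$ and $G_{\nu_{i+1}}$. The face $f_i$ is bounded by the right outer path of $\mathcal E_{\nu_i}$ and the left outer path of $\mathcal E_{\nu_{i+1}}$. Condition~C3 for $f_i$ gives $\tau_r^i+\tau_l^{i+1}+a^u_i+a^v_i=-2$. Condition~C3 for the outer face of $\mathcal E_{\nu_i}$ gives $\tau_l^i+\tau_r^i+\lambda^u_i+\lambda^v_i=2$. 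Combining the two yields
\[\tau_l^{i+1}=\tau_l^i+(\lambda^u_i+\lambda^v_i-a^u_i-a^v_i)-4 .\]
Thus $\tau_l$ changes between consecutive children unless $\lambda^u_i=\lambda^v_i=1$ and $a^u_i=a^v_i=-1$.

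I would call child $i$ \emph{special} if $\lambda^u_i\neq 1$, $\lambda^v_i\neq1$, $a^u_i\neq -1$ or $a^v_i\neq-1$. Next I show that there are only $O(1)$ special children. Each pole carries at most one large angle in $\mathcal E_\mu$. A flat angle between two children, or a non-$+1$ outer angle of a child at $u$, corresponds to a transition between incoming and outgoing edges at $u$. By 4-modality (the circular order $LO,RO,RI,LI$), such transitions occur at most a constant number of times around $u$. The same bound holds for the switches between $L$ and $R$ edges, i.e.\ for changes of $p^u_l,p^u_r$ along the sequence. The same reasoning applies at $v$. Hence the special children, together with the children at which some $\rho$ or $p$ label changes, number $O(1)$.

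\textbf{Step 2: constancy along regular stretches.} Within a stretch of non-special children, the $\rho$ and $p$ labels at the poles are fixed, and the recurrence of Step~1 gives $\tau_l^{i+1}=\tau_l^i$ and $\tau_r^i=-\tau_l^i$. Consequently all shape descriptors in the stretch coincide, and they are fixed by the boundary data of the stretch together with $\sigma$, because the $\tau$-values entering $\tau_l$ and $\tau_r$ of $\sigma$ telescope. It then remains to control the pbe labels $\chi$ and $\alpha$ in the stretch.

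For this I would use the absence of impossible faces $f_i$. Both angles at $u$ and $v$ in each $f_i$ are small, so a monochromatic directed outer path of the right kind (a $\chi$ label equal to $1$), or a monochromatic path ending in a small angle (an $\alpha$ label equal to $1$), on one side of $f_i$ forces the opposite side to be a single edge. Since $G$ is simple, at most one child is a single edge $(u,v)$. This restricts the $\chi/\alpha$ labels to $O(1)$ patterns per stretch, and I would argue they are constant except possibly at the stretch ends.

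\textbf{Main obstacle.} The delicate step is this last one. In principle, several pbe values could alternate inside a regular stretch, since the faces $f_i$ only check adjacent pairs; an alternating pattern would yield unboundedly long contracted sequences. What I would try first is a normalization argument. Within a regular stretch, the validity of every face $f_i$ depends only on the adjacent pair of descriptors, and all internal angles at the poles are $-1$. I would show that children with equal shape descriptor and equal pole labels can be reordered, grouping equal descriptor pairs together, while preserving goodness and $(\sigma,\omega)$. The generating set is then taken to consist of such normalized contracted sequences, which suffices for the P-node test. If the reordering fails in some configuration, the fallback is an explicit case analysis on the at most $2^{6}$ $\chi/\alpha$ combinations, showing that only a bounded number of alternations is feasible.
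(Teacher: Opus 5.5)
Your outline follows the paper's proof. The paper calls a descriptor pair \emph{replicable} when both outer pole angles are large, the $\rho$ and $p$ labels agree on both sides of each pole, and all six $\chi,\alpha$ labels are $0$. It then shows, with the same turn-number identity you derive, that two consecutive replicable pairs separated by a face with small pole angles and single-part pole edges coincide. The gap is the step you call the main obstacle. You leave it open and propose a fix that cannot prove the lemma, yet your own Step~2 observation already closes it.

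Let $f$ be an internal face between two children whose angles at $u$ and $v$ are both $-1$. These are switch angles, so every maximal directed path on the boundary of $f$ lies entirely on one side of $f$.
\begin{itemize}
\item If a child has $\alpha=1$ on the side facing $f$, the witnessing single-part path from the pole to $w$ has small angles at both ends in $f$. The rest of the boundary of $f$ contains the remainder of that side plus the whole opposite side, hence at least two edges. So $f$ is impossible, with no exception.
\item If the child has $\chi=1$ on that side, $f$ is impossible unless the opposite side is the edge $(u,v)$. Since $G$ is simple, that is a single child.
\end{itemize}
Hence every child flanked on both sides by such faces has all $\chi,\alpha$ labels equal to $0$, except the at most two neighbors of the edge child. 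This is exactly the paper's bound on non-replicable pairs; your single-edge caveat is in fact slightly more careful than the paper's wording. No alternation is possible. Each nonzero label is ruled out by one adjacent face, whatever lies across it (apart from the edge child in the $\chi$ case), so pairwise compatibility never enters. Also, the exceptional positions are the neighbors of the edge child, which may lie inside a stretch, not the stretch ends.

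Drop the normalization fallback. $\mathcal G(\sigma,\omega)$ is by definition the set of \emph{all} contracted descriptor sequences that occur, and the lemma asserts that every one of them has length $O(1)$. If alternating patterns could occur, the lemma would simply be false, and restricting to reordered sequences could at best justify a different algorithm, not this statement.

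With the argument above in place, the rest goes through. The turn-numbers stay in an $O(1)$ window, since there are $O(1)$ steps with bounded increments. Enumerating the $O(1)$ candidate sequences and checking them locally, at the poles and at each face between consecutive entries, gives the size and time bounds. The paper does this construction explicitly by consuming the edge-type sequences $\pi_u$ and $\pi_v$ incrementally.
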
 

\begin{proof}
Consider a~$uv$-external good embedding~$(\mathcal E_{\mu},\lambda_\mu)$ of~$G_{\mu}$ with descriptor pair~$(\sigma,\omega)$. Let~$\sigma=\shapeDesc{\tau_l}{\tau_r}{\lambda^{u}}{\lambda^{v}}{\rho_l^{u}}{\rho_r^{u}}{\rho_l^{v}}{\rho_r^{v}}$ and~$\omega=\shapeDescPUBE{p_l^{u}}{p_r^{u}}{p_l^{v}}{p_r^{v}}{\chi_l}{\chi_r}{\alpha_l^{u}}{\alpha_r^{u}}{\alpha_l^{v}}{\alpha_r^{v}}$. For~$i=1,\dots,k$, let~$(\mathcal E_{\nu_i},\lambda_{\nu_i})$ be the~$uv$-external good embedding of~$G_{\nu_i}$ in~$(\mathcal E_{\mu},\lambda_\mu)$. For each pertinent graph~$G_{\nu_i}$ of a child~$\nu_i$ of~$\mu$, let~$(\mathcal E_{\nu_i},\lambda_{\nu_i})$ be its~$uv$-external good embedding in~$(\mathcal E_{\mu},\lambda_\mu)$, let~$(\sigma_i,\omega_i)$ be the descriptor pair of~$(\mathcal E_{\nu_i},\lambda_{\nu_i})$, let~$\sigma_i=\shapeDesc{\tau^i_l}{\tau^i_r}{\lambda^{i,u}}{\lambda^{i,v}}{\rho_l^{i,u}}{\rho_r^{i,u}}{\rho_l^{i,v}}{\rho_r^{i,v}}$ and let~$\omega_i=\shapeDescPUBE{p_l^{i,u}}{p_r^{i,u}}{p_l^{i,v}}{p_r^{i,v}}{\chi^i_l}{\chi^i_r}{\alpha_l^{i,u}}{\alpha_r^{i,u}}{\alpha_l^{i,v}}{\alpha_r^{i,v}}$. 

We first prove that the generating set~$\mathcal G(\sigma,\omega)$ of~$(\sigma,\omega)$ has size~$O(1)$ and that each contracted descriptor sequence in~$\mathcal G(\sigma,\omega)$ has length~$O(1)$. The key point for the proof is that, in~$(\mathcal E_{\mu},\lambda_\mu)$,  there are only constantly-many flat or large angles at~$u$ and~$v$, and constantly-many changes between left and right edges in the circular orders of incident edges around~$u$ and~$v$. The embeddings~$(\mathcal E_{\nu_i},\lambda_{\nu_i})$ whose descriptor pair is not the same as the one of an embedding~$(\mathcal E_{\nu_j},\lambda_{\nu_j})$ next to them in \upmu are found where something ``notable'' happens in~$(\mathcal E_{\mu},\lambda_\mu)$. We deem notable the fact that~$(\mathcal E_{\nu_i},\lambda_{\nu_i})$ is incident to the outer face of~$\mathcal E_{\mu}$, or that the angle at~$u$ or~$v$ in the outer face of~$(\mathcal E_{\nu_i},\lambda_{\nu_i})$ is not large, or that the edges incident to~$u$ on the outer face of~$(\mathcal E_{\nu_i},\lambda_{\nu_i})$ are not in the same part of the edge set of~$G$, or that the edges incident to~$v$ on the outer face of~$(\mathcal E_{\nu_i},\lambda_{\nu_i})$ are not in the same part of the edge set of~$G$, or that an internal face of~$\mathcal E_{\mu}$ that is incident to~$\mathcal E_{\nu_i}$ has an angle at~$u$ or~$v$ that is not small or that is delimited by two edges that are not in the same part of the edge set of~$G$. A contracted descriptor sequence might contain constantly-many additional descriptor pairs ``close to'' where something notable happens. Any such additional descriptor pair represents arbitrarily many graphs, all embedded with that descriptor pair in~\upmu, that can be placed next to each other, without modifying the fact that we overall have a~$uv$-external good embedding with descriptor pair~$(\sigma,\omega)$. 

Formally, a descriptor pair~$(\sigma_i,\omega_i)$ is called \emph{replicable} if it satisfies the following properties. First,~$\lambda^{i,u}=\lambda^{i,u}=1$; second,~$\rho_l^{i,u}=\rho_r^{i,u}$,~$p_l^{i,u}=p_r^{i,u}$; third,~$\rho_l^{i,v}=\rho_r^{i,v}$ and~$p_l^{i,v}=p_r^{i,v}$; and finally,~$\chi^i_l=\chi^i_r=\alpha_l^{i,u}=\alpha_r^{i,u}=\alpha_l^{i,v}=\alpha_r^{i,v}=0$. A descriptor pair that is not replicable is called \emph{special}. First, we note that a contracted descriptor sequence contains $O(1)$ special descriptor pairs. This is immediate for special descriptor pairs that violate~$\lambda^{i,u}=\lambda^{i,u}=1$, or~$\rho_l^{i,u}=\rho_r^{i,u}$, or~$p_l^{i,u}=p_r^{i,u}$, or~$\rho_l^{i,v}=\rho_r^{i,v}$, or~$p_l^{i,v}=p_r^{i,v}$. If a special descriptor pair~$(\sigma_i,\omega_i)$ violates~$\chi^i_l=\alpha_l^{i,u}=\alpha_l^{i,v}=0$, then this forces the face that is incident to the left outer path of~$(\mathcal E_{\nu_i},\lambda_{\nu_i})$ to be either the outer face or to have a non-small angle at $u$ or $v$, otherwise the face would be impossible in \upmu; hence, there are constantly-many of such special descriptor pairs. An analogous argument for the special descriptor pairs that violate~$\chi^i_r=\alpha_r^{i,u}=\alpha_r^{i,v}=0$ shows that altogether there are $O(1)$ special descriptor pairs. Now consider two replicable descriptor pairs~$(\sigma_i,\omega_i)$ and~$(\sigma_j,\omega_j)$ that are consecutive in the contracted shape sequence of \upmu, and let~$f$ be the internal face of~$\mathcal E_\mu$ bounded by the corresponding embeddings. Suppose that~the angles~$\beta^u$ and~$\beta^v$ at~$u$ and~$v$ in~$f$ are small, that the edges incident to $u$ on $f$ are in the same part of the edge set of $G$, and that the edges incident to $v$ on $f$ are in the same part of the edge set of $G$, as these conditions can only be violated constantly-many times. This implies that~$\lambda^{i,u}=\lambda^{i,v}=\lambda^{j,u}=\lambda^{j,v}=1$, that~$\rho_l^{i,u}=\rho_r^{i,u}=\rho_l^{j,u}=\rho_r^{j,u}$, that~$\rho_l^{i,v}=\rho_r^{i,v}=\rho_l^{j,v}=\rho_r^{j,v}$, that~$p_l^{i,u}=p_r^{i,u}=p_l^{j,u}=p_r^{j,u}$, that~$p_l^{i,v}=p_r^{i,v}=p_l^{j,v}=p_r^{j,v}$, and that~$\chi^i_l=\chi^i_r=\alpha_l^{i,u}=\alpha_r^{i,u}=\alpha_l^{i,v}=\alpha_r^{i,v}=\chi^j_l=\chi^j_r=\alpha_l^{j,u}=\alpha_r^{j,u}=\alpha_l^{j,v}=\alpha_r^{j,v}=0$. Also, we have~$\tau^{j}_l=\tau^{i}_l$ and~$\tau^{j}_r=\tau^{i}_r$, where the former comes from~$\tau^{j}_l+\tau^i_r+\beta^u+\beta^v=-2 \Rightarrow \tau^{j}_l=-\tau^i_r$ and from~$\tau^i_l+\tau^i_r+\lambda^{i,u}+\lambda^{i,v}=2 \Rightarrow \tau^i_l=-\tau^i_r$, and the latter comes from~$\tau^{j}_l=-\tau^i_r$ and from~$\tau^{j}_l+\tau^{j}_r+\lambda^{j,u}+\lambda^{j,v}=2\Rightarrow \tau^{j}_l=-\tau^{j}_r$. Hence, the two descriptor pairs~$(\sigma_i,\omega_i)$ and~$(\sigma_j,\omega_j)$ coincide, and thus they would be contracted into a single descriptor pair in the contracted descriptor sequence of \upmu. This concludes the proof that each contracted descriptor sequence in~$\mathcal G(\sigma,\omega)$ has length~$O(1)$. 

In order to prove that~$\mathcal G(\sigma,\omega)$  has size~$O(1)$, note that each element of a descriptor pair can only assume constantly-many different values, with the apparent exception of the left- and right-turn-numbers. However, since the left-turn-number of $(\sigma,\omega)$ is $\tau_l$, and hence so is the left-turn-number of~$(\sigma_i,\omega_i)$, where~$(\mathcal E_{\nu_i},\lambda_{\nu_i})$ contains the left outer path of \upmu, then the left-turn-number of every descriptor pair~$(\sigma_i,\omega_i)$ in the sequence can only be in the interval $[\tau_l-4,\tau_l]$; this is because, for every other descriptor pair~$(\sigma_j,\omega_j)$ in the contracted descriptor sequence, $\tau^j_l=\tau^i_l-\lambda^u_x-\lambda^v_y-2$, where $\lambda^u_x,\lambda^v_y\in \{-1,0,1\}$ are the angles at $u$ and $v$ in the internal face of the graph composed of the left outer path of~$(\mathcal E_{\nu_i},\lambda_{\nu_i})$ and of the left outer path of an embedding~$(\mathcal E_{\nu_j},\lambda_{\nu_j})$ with descriptor pair~$(\sigma_j,\omega_j)$. Similarly, the right-turn-numbers of the descriptor pairs in a contracted descriptor sequence can only assume constantly-many different values. Since each contracted descriptor sequence has length~$O(1)$, this results in constantly-many different contracted descriptor sequences, and hence~$\mathcal G(\sigma,\omega)$ has size~$O(1)$.


We now show how to construct, in~$O(1)$ time, the generating set~$\mathcal G(\sigma,\omega)$. We start by introducing some definitions. 
We define the \emph{edge-type sequence}~$\pi_u$ of~$u$ in~$(\sigma,\omega)$ as follows. Recall that, in clockwise order around~$u$ in a good embedding of~$G_\mu$, the edges incident to~$u$ appear in the clockwise circular order: left outgoing (LO), right outgoing (RO), right incoming (RI), and left incoming (LI). Then~$\pi_u$ defines the linear order in which the four types of edges can be encountered around~$u$ in \upmu, starting at the edge on the left outer path and ending at the edge on the right outer path and moving clockwise. Thus, the first element of~$\pi_u$ is defined by the labels~$\rho_l^{u}$ and~$p_l^{u}$ and then~$\pi_u$ follows the elements in the circular sequence [LO, RO, RI, LI] until the edge type corresponding to the labels~$\rho_r^{u}$ and~$p_r^{u}$ is encountered. This completely defines~$\pi_u$, unless~$\rho_l^{u}=\rho_r^{u}$ and~$p_l^{u}=p_r^{u}$; in this case, if~$\lambda^{u}=1$, then~$\pi_u$ consists of just one element, while if~$\lambda^{u}=-1$, then~$\pi_u$ consists of five elements. The \emph{edge-type sequence}~$\pi_v$ of~$v$ in~$(\sigma,\omega)$ is defined analogously, however the edge types are considered in counter-clockwise order, so that the sequence again starts from the left outer path and ends at the right outer path. 


We construct the generating set~$\mathcal G(\sigma,\omega)$ by repeated augmentations. Throughout the process,~$\mathcal G(\sigma,\omega)$ contains initial subsequences of the contracted descriptor sequences that eventually form the generating set~$\mathcal G(\sigma,\omega)$. An initial subsequence of the edge-type sequence~$\pi_u$ of~$u$ in~$(\sigma,\omega)$ and an initial subsequence of the edge-type sequence~$\pi_v$ of~$v$ in~$(\sigma,\omega)$ are associated to each sequence in~$\mathcal G(\sigma,\omega)$. These represent the edge types that are used by the current subsequence of the contracted descriptor sequence. We call \emph{final} a sequence in~$\mathcal G(\sigma,\omega)$ that belongs to the final generating set~$\mathcal G(\sigma,\omega)$.   


We initialize~$\mathcal G(\sigma,\omega)$ by inserting into it several sequences, each composed of a single descriptor pair~$(\sigma_1,\omega_1)\in \mathcal U_{|V(G_\mu)|}$, where~$\sigma_1=\shapeDesc{\tau^1_l}{\tau^1_r}{\lambda^{1,u}}{\lambda^{1,v}}{\rho_l^{1,u}}{\rho_r^{1,u}}{\rho_l^{1,v}}{\rho_r^{1,v}}$ and~$\omega_1=\shapeDescPUBE{p_l^{1,u}}{p_r^{1,u}}{p_l^{1,v}}{p_r^{1,v}}{\chi^1_l}{\chi^1_r}{\alpha_l^{1,u}}{\alpha_r^{1,u}}{\alpha_l^{1,v}}{\alpha_r^{1,v}}$. The descriptor pairs~$(\sigma_1,\omega_1)$ that compose the sequences to be initially inserted into~$\mathcal G(\sigma,\omega)$ are constructed as follows. 

\begin{itemize}
    \item First, since the left outer path of \upmu coincides with the one of the ``leftmost'' pertinent graph of a child of~$\mu$, the labels~$\tau^1_l$,~$\rho_l^{1,u}$,~$\rho_l^{1,v}$,~$p_l^{1,u}$,~$p_l^{1,v}$,~$\chi^1_l$,~$\alpha_l^{1,u}$, and~$\alpha_l^{1,v}$ are univocally determined by~$(\sigma,\omega)$, namely~$\tau^1_l=\tau_l$,~$\rho_l^{1,u}=\rho_l^{u}$,~$\rho_l^{1,v}=\rho_l^{v}$,~$p_l^{1,u}=p_l^{u}$,~$p_l^{1,v}=p_l^{v}$,~$\chi^1_l=\chi_l$,~$\alpha_l^{1,u}=\alpha_l^{u}$, and~$\alpha_l^{1,v}=\alpha_l^{v}$.
    \item The labels~$\lambda^{1,u}$,~$\rho_r^{1,u}$, and~$p_r^{1,u}$ are defined by the choice of an initial subsequence~$\pi'_u$ (with~$|\pi'_u|\geq 1$) of the edge-type sequence~$\pi_u$ of~$u$ in~$(\sigma,\omega)$. Indeed, the last element of~$\pi'_u$ directly defines~$\rho_r^{1,u}$ and~$p_r^{1,u}$; also, we have~$\lambda^{1,u}=1$ if~$|\pi'_u|\leq 2$ and~$\rho_l^{1,u}=\rho_r^{1,u}$, we have~$\lambda^{1,u}=-1$ if~$|\pi'_u|\geq 4$ and~$\rho_l^{1,u}=\rho_r^{1,u}$, and we have~$\lambda^{1,u}=0$ otherwise; the initial subsequence~$\pi'_u$ is associated to the sequence. 
    \item The labels~$\lambda^{1,v}$,~$\rho_r^{1,v}$, and~$p_r^{1,v}$ are analogously defined by the choice of an initial subsequence~$\pi'_v$ (with~$|\pi'_v|\geq 1$) of~$\pi_v$; also,~$\pi'_v$ is associated to the sequence. 
    \item The label~$\tau^1_r$ is set to~$2-\tau^1_l-\lambda^{1,u}-\lambda^{1,v}$. 
    \item The label~$\chi^1_r$ is set to~$0$ if~$\tau^1_r\neq 0$, or if~$\rho_r^{1,u}=\rho_r^{1,v}$, or if~$p_r^{1,u}\neq p_r^{1,v}$, or if~$\rho^{1,u}_r=\texttt{out}$ and~$p^{1,u}_r=L$, or if~$\rho^{1,u}_r=\texttt{in}$ and~$p^{1,u}_l=R$, or if~$\rho^{1,v}_r=\texttt{in}$ and~$p^{1,v}_r=L$, or if~$\rho^{1,v}_r=\texttt{out}$ and~$p^{1,v}_l=R$, as in these cases the right outer path of a~$uv$-external good embedding with descriptor pair $(\sigma_1,\omega_1)$ is not a directed path composed of left edges with the outer face to its left or composed of right edges with the outer face to its right. Also, the label~$\chi^1_r$ is set to~$0$ if~$\lambda^{1,u}=\lambda^u$,~$\lambda^{1,v}=\lambda^v$, and~$\chi_r=0$. Indeed, if the graph whose~$uv$-external good embedding has descriptor pair~$(\sigma_1,\omega_1)$ is the rightmost in clockwise order around~$u$, this is required by the condition~$\chi_r=0$; if the graph whose~$uv$-external good embedding has descriptor pair~$(\sigma_1,\omega_1)$ has another graph to its right, then having~$\chi^1_r=1$ would imply that the face between such two graphs is impossible, given that the equalities~$\lambda^{1,u}=\lambda^u$ and~$\lambda^{1,v}=\lambda^v$ imply that the angles at~$u$ and~$v$ in such a face have label~$-1$. Finally, if none of the previous conditions applies, we set the label~$\chi^1_r$ in both possible ways.
    \item The label~$\alpha^{1,u}_r$ is set to~$0$ if~$\chi^1_r=1$, or if~$\rho^{1,u}_r=\texttt{out}$ and~$p^{1,u}_r=L$, or if~$\rho^{1,u}_r=\texttt{in}$ and~$p^{1,u}_l=R$, or if~$\lambda^{1,u}=\lambda^u$ and~$\alpha^{u}_r=0$. If none of the previous conditions applies, we set the label~$\alpha^{1,u}_r$ in both possible ways.  The label~$\alpha^{1,v}_r$ is set analogously.
\end{itemize}

This concludes the description of the initialization of~$\mathcal G(\sigma,\omega)$.


Now consider any sequence~$S=[(\sigma_1,\omega_1),\dots,(\sigma_i,\omega_i)]$ currently in~$\mathcal G(\sigma,\omega)$. If the subsequence of $\pi_u$ associated to $S$ coincides with $\pi_u$, if  the subsequence of $\pi_v$ associated to $S$ coincides with $\pi_v$, and if~$\chi^i_r=\chi_r$,~$\alpha_r^{i,u}=\alpha_r^{u}$, and~$\alpha_r^{i,v}=\alpha_r^{v}$, then~$S$ is final. If we also have~$\alpha_r^u=\alpha_r^v=\chi_r=0$ and  $(\sigma_i,\omega_i)$ is not replicable, then we insert in~$\mathcal G(\sigma,\omega)$ also the final sequence~$S'$ obtained by appending to~$S$ a replicable descriptor pair~$(\sigma_{i+1},\omega_{i+1})$ with~$\tau^{i+1}_r=\tau^{i}_r$,~$\rho^{i+1,u}_l=\rho^{i,u}_r$,~$\rho^{i+1,v}_l=\rho^{i,v}_r$,~$p^{i+1,u}_l=p^{i,u}_r$, and~$p^{i+1,v}_l=p^{i,v}_r$ (the other labels are forced by the definition of replicable descriptor pair). 


If~$S$ is not final, then we construct several sequences that replace it in~$\mathcal G(\sigma,\omega)$, each obtained by appending a descriptor pair~$(\sigma_{i+1},\omega_{i+1})$ to~$S$, as described in the following. Let~$\pi_u(S)$ and~$\pi_v(S)$ be the initial subsequences of~$\pi_u$ and~$\pi_v$ associated to~$S$, respectively. We pick, in all possible ways according to the rules described below, two elements in~$\pi_u$ that are the edge types of the edges incident to~$u$ and on the outer face of an embedding with descriptor pair~$(\sigma_{i+1},\omega_{i+1})$; the first picked element either coincides with the last element of~$\pi_u(S)$ or follows it in~$\pi_u$, while the second picked element either coincides with the first picked element or follows it in~$\pi_u$. These elements determine the labels~$\rho^{i+1,u}_l$,~$p^{i+1,u}_l$,~$\rho^{i+1,u}_r$, and~$p^{i+1,u}_r$, for example if the first picked element is LI, then~$\rho^{i+1,u}_l=\texttt{in}$ and~$p^{i+1,u}_l=L$. The elements also determine~$\lambda^{i+1,u}$, namely~$\lambda^{i+1,u}=1$ if the length of the subsequence of~$\pi_u$ composed of the elements between the first and the second picked element is at most~$2$ and~$\rho^{i+1,u}_l=\rho^{i+1,u}_r$,~$\lambda^{i+1,u}=-1$ if such length is at least~$4$ and~$\rho^{i+1,u}_l=\rho^{i+1,u}_r$, and~$\lambda^{i+1,u}=0$ otherwise. Two elements in~$\pi_v$ are picked with analogous rules, and this determines the five labels~$\rho^{i+1,v}_l$,~$p^{i+1,v}_l$,~$\rho^{i+1,v}_r$,~$p^{i+1,u}_r$, and~$\lambda^{i+1,v}$. The labels~$\tau^{i+1}_l$ and~$\tau^{i+1}_r$ are also determined by the choice made. Namely, consider the face~$f$ between the embeddings with descriptor pairs~$(\sigma_i,\omega_i)$ and~$(\sigma_{i+1},\omega_{i+1})$. The angle~$\beta_u$ at~$u$ in~$f$ has label~$-1$ if the length of the subsequence of~$\pi_u$ composed of the elements between the last element in~$\pi_u(S)$ and the first picked element is at most~$2$ and~$\rho^{i,u}_r=\rho^{i+1,u}_l$, has label~$1$ if such length is at least~$4$ and~$\rho^{i,u}_r=\rho^{i+1,u}_l$, and it has label~$0$ otherwise. The label of the  angle~$\beta_v$ at~$v$ in~$f$ is computed analogously. Then we have~$\tau^{i+1}_l=-2-\tau^{i}_r-\beta_u-\beta_v$ and we have~$\tau^{i+1}_r=2-\tau^{i+1}_l-\lambda^{i+1,u}-\lambda^{i+1,v}$. It remains to deal with the labels~$\chi^{i+1}_l$,~$\alpha^{i+1,u}_l$,~$\alpha^{i+1,v}_l$,~$\chi^{i+1}_r$,~$\alpha^{i+1,u}_r$,~$\alpha^{i+1,v}_r$. For these labels we perform all possible choices, completing the definition of~$(\sigma_{i+1},\omega_{i+1})$ in several, constantly-many, ways. We discard those descriptor pairs~$(\sigma_{i+1},\omega_{i+1})$ in which one of the following conditions is satisfied, as each condition implies a contradiction to the meaning of the labels, or that~$f$ is an impossible face:
\begin{itemize}
    \item~$(\chi^{i+1}_l=1 \land (\alpha^{i+1,u}_l=1 \lor \alpha^{i+1,v}_l=1)) \lor (\chi^{i+1}_r=1\land (\alpha^{i+1,u}_r=1 \lor \alpha^{i+1,v}_r=1))$;
    \item~$(\beta_u=-1 \land (\alpha^{i,u}_r=1 \lor \alpha^{i+1,u}_l=1)) \lor (\beta_v=-1\land(\alpha^{i,v}_r=1 \lor \alpha^{i+1,v}_l=1)) \lor (\beta_u=-1\land \beta_v=-1 \land \chi^{i}_r=1) \lor (\beta_u=-1 \land \beta_v=-1 \land \chi^{i+1}_l=1)$;
    \item~$(\beta_u=0 \land \alpha^{i,u}_r=1 \land \alpha^{i+1,u}_l=1 \land p^{i,u}_r=p^{i+1,u}_l) \lor (\beta_v=0\land \alpha^{i,v}_r=1\land \alpha^{i+1,v}_l=1\land p^{i,v}_r=p^{i+1,v}_l)$;
    \item~$(\alpha^{i+1,u}_l=1\land \beta_u=0\land \chi^i_r=1\land \beta_v=-1\land p^{i,u}_r=p^{i+1,u}_l)\lor (\alpha^{i+1,v}_l=1\land \beta_v=0\land \chi^i_r=1\land \beta_u=-1\land p^{i,v}_r=p^{i+1,v}_l) \lor (\alpha^{i,u}_r=1\land \beta_u=0 \land \chi^{i+1}_l=1 \land \beta_v=-1 \land p^{i,u}_r=p^{i+1,u}_l)\lor (\alpha^{i,v}_r=1\land \beta_v=0\land \chi^{i+1}_l=1\land \beta_u=-1\land p^{i,v}_r=p^{i+1,u}_l)$;
    \item~$(\alpha^{i+1,u}_l=1\land \beta_u=0\land \chi^i_r=1\land \beta_v=0\land \alpha^{i+1,v}_l=1 \land p^{i,u}_r=p^{i+1,u}_l=p^{i+1,v}_l)\lor (\alpha^{i,u}_r=1\land \beta_u=0\land \chi^{i+1}_l=1\land \beta_v=0\land \alpha^{i,v}_r=1\land p^{i,u}_r=p^{i+1,u}_l=p^{i,v}_r)$.
\end{itemize}

We also discard those descriptor pairs~$(\sigma_{i+1},\omega_{i+1})$ in which one of the following conditions is satisfied, as each condition implies that the sequence~$[(\sigma_1,\omega_1),\dots,(\sigma_i,\omega_i),(\sigma_{i+1},\omega_{i+1})]$ cannot be completed to a contracted descriptor sequence of a $uv$-external good embedding with descriptor pair~$(\sigma,\omega)$, as the desired values for the labels~$\alpha^u_r$,~$\alpha^v_r$, and~$\chi_r$ cannot be all achieved without creating an impossible face:
\begin{itemize}
    \item~$\alpha^{i+1,u}_r=1$,~$\alpha^u_r=0$,~$\rho^{i+1,u}_r=\rho^{u}_r$, and the final subsequence of~$\pi_u$ starting from the second picked element in~$\pi_u$ contains at most two elements;
    \item~$\alpha^{i+1,v}_r=1$,~$\alpha^v_r=0$,~$\rho^{i+1,v}_r=\rho^{v}_r$, and the final subsequence of~$\pi_v$ starting from the second picked element in~$\pi_v$ contains at most two elements;
    \item~$\chi^{i+1}_r=1$,~$\chi_r=0$,~$\rho^{i+1,u}_r=\rho^{u}_r$,~$\rho^{i+1,v}_r=\rho^{v}_r$, the final subsequence of~$\pi_u$ starting from the second picked element in~$\pi_u$ contains at most two elements, and the final subsequence of~$\pi_v$ starting from the second picked element in~$\pi_v$ contains at most two elements.
\end{itemize}

Finally, we discard a descriptor pair~$(\sigma_{i+1},\omega_{i+1})$ if it is replicable, if it coincides with~$(\sigma_{i},\omega_{i})$, if the first picked element in~$\pi_u$ is the last element of~$\pi_u(S)$, and if the first picked element in~$\pi_v$ is the last element of~$\pi_v(S)$. For every descriptor pair~$(\sigma_{i+1},\omega_{i+1})$ which was not discarded, we insert in~$\mathcal G(\sigma,\omega)$ the sequence $[(\sigma_1,\omega_1),\dots,(\sigma_i,\omega_i),(\sigma_{i+1},\omega_{i+1})]$. 

This concludes the description of the construction of~$\mathcal G(\sigma,\omega)$. Since the length of~$\mathcal G(\sigma,\omega)$ is~$O(1)$ and since at each step of the described construction constantly-many choices and checks are performed, the construction takes overall~$O(1)$ time.  
\end{proof}

Our algorithm to compute the feasible set~$\mathcal F_\mu$ of~$\mu$ is the following. First, we generate an~$n$-universal set~$\mathcal U_{n}$ of descriptor pairs. By definition, we have~$\mathcal F_\mu\subseteq \mathcal U_{n}$. Second, for each descriptor pair~$(\sigma,\omega)$ in~$\mathcal U_{n}$, we construct the generating set~$\mathcal G(\sigma,\omega)$ of~$(\sigma,\omega)$. Third, for each contracted descriptor sequence~$\mathcal C$ in~$\mathcal G(\sigma,\omega)$, we test whether~$\mathcal C$ is \emph{realizable} by~$G_\mu$, i.e., whether there exists a~$uv$-external good embedding of~$G_\mu$ whose contracted descriptor sequence is a subsequence of~$\mathcal C$ containing the first and the last elements of~$\mathcal C$; here ``subsequence'' means a sequence that can be obtained from~$\mathcal C$ by deleting elements. In the positive case, we add~$(\sigma,\omega)$ to~$\mathcal F_\mu$. The running time of the algorithm is as follows. First, the construction of~$\mathcal U_{n}$ takes~$O(n)$ time, by \cref{le:universal}. Second,~$\mathcal U_{n}$ contains~$O(n)$ descriptor pairs, again by \cref{le:universal}. For each descriptor pair~$(\sigma,\omega)$ in~$\mathcal U_{n}$, the construction of the generating set~$\mathcal G(\sigma,\omega)$ takes~$O(1)$ time, by \cref{le:generating}. For each contracted descriptor sequence~$\mathcal C$ in~$\mathcal G(\sigma,\omega)$, the decision on whether~$\mathcal C$ is realizable by~$G_\mu$ takes~$O(k)$ time, by the upcoming \cref{le:realizable}. Since~$\mathcal G(\sigma,\omega)$ contains $O(1)$ contracted descriptor sequences, again by \cref{le:generating}, we have that the construction of~$\mathcal F_\mu$ takes~$O(n k)$ time. 

Before presenting \cref{le:realizable}, we motivate our definition of realizable contracted descriptor sequence. Consider a contracted descriptor sequence~$\mathcal C$. Deciding whether there exists a~$uv$-external good embedding of~$G_\mu$ whose contracted descriptor sequence {\em is}~$\mathcal C$ is not an easy task, from an algorithmic point of view. However, deciding whether there exists a~$uv$-external good embedding of~$G_\mu$ whose contracted descriptor sequence is a subsequence of~$\mathcal C$ containing the first and the last elements of~$\mathcal C$ is algorithmically easier, and equivalent, in order to decide whether~$(\sigma,\omega)$ belongs to~$\mathcal F_\mu$ or not. The last statement is justified by the following lemma. 

\begin{lemma}\label{le:realizability_why}
Let~$\mathcal C$ be a contracted descriptor sequence in~$\mathcal G(\sigma,\omega)$. Any subsequence~$\mathcal C'$ of~$\mathcal C$ containing the first and last elements of~$\mathcal C$ also belongs to~$\mathcal G(\sigma,\omega)$.
\end{lemma}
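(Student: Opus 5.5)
By the definition of the generating set, it suffices to build, from a $uv$-external good embedding whose contracted descriptor sequence is $\mathcal C$, a $uv$-external good embedding (of some P-node pertinent graph) with descriptor pair $(\sigma,\omega)$ whose contracted descriptor sequence is $\mathcal C'$. By induction on $|\mathcal C|-|\mathcal C'|$, it is enough to handle the deletion of a single internal element $(\sigma_j,\omega_j)$ of $\mathcal C$, with $1<j<|\mathcal C|$. The first and last elements are kept, so the left outer path of the leftmost child and the right outer path of the rightmost child stay the outer paths. Hence the labels $\tau_l$, $\tau_r$, $\rho$, $p$, $\chi$ and $\alpha$ of the outer paths are unchanged. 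The angles $\lambda^u,\lambda^v$ in the outer face are also unchanged, since they are determined by which edges bound the outer face at $u$ and $v$, and these are untouched.

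\textbf{Construction.} Remove all children whose descriptor pair is the block $(\sigma_j,\omega_j)$. The face $f$ now lies between the right outer path of a child $\nu$ (with pair $(\sigma_{j-1},\omega_{j-1})$) and the left outer path of a child $\nu'$ (with pair $(\sigma_{j+1},\omega_{j+1})$). The angles $\beta_u,\beta_v$ of $f$ at $u$ and $v$ are forced by C2 and C3. Every other face and every internal angle is inherited unchanged, and C1--C3 hold as in the original embedding.

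\textbf{Checks on the new face.}
\begin{enumerate}
\item \emph{Upward consistency and 4-modality.} Around $u$ and $v$, the edge types of the remaining children still appear in an order compatible with the edge-type sequences $\pi_u$ and $\pi_v$. The removed children only occupied an interval of these sequences, so 4-modality at $u$ and $v$ is preserved. Similarly, the angle $\beta_u$ at $u$ in $f$ is obtained by merging the angles at $u$ of the two faces adjacent to the removed block. Because the relevant path in $\pi_u$ got longer, the angle can only move from small to flat or large, never the reverse. The same holds for $\beta_v$.
\item \emph{No impossible face.} This is the main obstacle. I would argue through the defining conditions of impossible faces, namely the discard rules used in the construction of $\mathcal G(\sigma,\omega)$ in the proof of \cref{le:generating}. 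An impossible face needs a monochromatic directed path on one side of $f$ whose end-angles are both small. Such a path on the boundary of $f$ lies in the right outer path of $\nu$ or the left outer path of $\nu'$, possibly passing through $u$ or $v$ when $\beta_u$ or $\beta_v$ is flat. These are the same boundary paths that bounded the two faces adjacent to the removed block. In the original embedding those faces were not impossible, so each such path had a non-small end-angle or a non-single-edge opposite side.
\end{enumerate}

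For the second check, I would do a case analysis mirroring the five discard conditions. A path made impossible in $f$ would need $\beta_u=-1$ or $\beta_v=-1$, or a flat concatenation at $u$ or $v$. Since the merged angles are at least as large as before, any path ending at a small angle in $f$ also ended at a small angle in the corresponding old face. A flat concatenation in $f$ can only arise where the old angle was small or flat, and in that case the same concatenation or a small end-angle was already present in an old face. The delicate sub-case is an angle changing from small to flat, which might newly join two monochromatic paths. There I would use that a flat angle needs an in/out change, so the two joined edges are one incoming and one outgoing. The path through $u$ then continues, and its end at $u$'s side was already non-small in the old face.

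The remaining point is that the opposite side of $f$ is not a single edge. If the opposite side is a single edge, I would use simplicity of the graph: at most one child of the P-node is the edge $(u,v)$.
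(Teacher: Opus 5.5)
\textbf{Verdict: there is a genuine gap, in your step~2 (no impossible face).} Your reduction itself is right. You keep the children whose pairs survive and observe that the outer paths and the outer angles at $u$ and $v$ are unchanged, so the descriptor pair is still $(\sigma,\omega)$. Step~2, however, you only sketch, and as sketched it does not work.

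First, the non-impossibility condition is inverted. You write that in the old faces each monochromatic path ``had a non-small end-angle or a non-single-edge opposite side''. The correct protection is a non-small end-angle, or an opposite side that \emph{is} a single edge.

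This matters in exactly the case where deletion removes a protection. Suppose the removed block contains the Q-child $(u,v)$. Suppose also that the right outer path $P$ of $\nu$ is a directed $u$--$v$ path of right edges with small end-angles in $f_1$. Then $f_1$ was non-impossible only because its opposite side was the edge $(u,v)$. After deletion, simplicity tells you that the opposite side of $f$ is \emph{not} a single edge. That is what would make $f$ impossible, not what saves it, so your closing remark argues in the wrong direction.

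To close this case you would have to:
\begin{itemize}
\item use the non-impossibility of the face on the other side of the deleted edge to get a non-small angle at $u$ or $v$;
\item propagate that angle to $\beta_u$ or $\beta_v$ by your monotonicity;
\item and then still handle a maximal path that now extends through a flat $\beta_u$ or $\beta_v$.
\end{itemize}

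Your small-to-flat sub-case has the same problem. There the old end-angle at $u$ was small by hypothesis, so ``its end at $u$'s side was already non-small'' is not available. You would have to argue through the far end-vertices or the single-edge clause. With paths through $u$, through $v$, or through both, this becomes a case analysis of the same flavour as the one in \cref{le:s-node-check}, and your proposal does not carry it out.

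\textbf{Missing idea.} All of this is unnecessary once you note that goodness is hereditary, by \cref{th:characterization}. Take an upward book embedding respecting $(\mathcal E_\mu,\lambda_\mu)$ and delete the vertices and edges of the discarded children. What remains is an upward book embedding of the smaller graph that respects the restricted upward embedding, so that restricted embedding is good; this is exactly the argument of \cref{le:outer-good}. It gives upward-consistency, 4-modality and the absence of impossible faces in one stroke. Combined with your observation that the outer face is unchanged, it proves the lemma, and the paper's proof is just this restriction.
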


\begin{proof}
Let~$(\mathcal E_{\mu},\lambda_\mu)$ be a~$uv$-external good embedding of a P-node~$\mu$ with poles~$u$ and~$v$ such that the contracted descriptor sequence of~$(\mathcal E_{\mu},\lambda_\mu)$ is~$\mathcal C$. Let~$(\mathcal E'_{\mu'},\lambda'_{\mu'})$  be the restriction of~$(\mathcal E_{\mu},\lambda_\mu)$ to the graph~$H$ composed of the pertinent graphs~$G_{\nu_i}$ whose~$uv$-external good embedding in~$(\mathcal E_{\mu},\lambda_\mu)$ has a descriptor pair that belongs to~$\mathcal C'$. Note that there are at least two such pertinent graphs~$G_{\nu_i}$. This is obvious if~$\mathcal C$ contains more than one descriptor pair, as in this case the first and last elements of~$\mathcal C$ are distinct and belong to~$\mathcal C'$, by assumption. Otherwise,~$\mathcal C$ contains a single element, which is necessarily~$(\sigma,\omega)$, thus the embedding of every pertinent graph~$G_{\nu_i}$ in~$(\mathcal E_{\mu},\lambda_\mu)$ has descriptor pair~$(\sigma,\omega)$. Also,~$\mu$ has at least two children, since it is a P-node. Hence,~$H$ is the pertinent graph~$G'_{\mu'}$ of a P-node~$\mu'$ of the SPQ-tree of a graph~$G'$, and~$\mathcal C'$ is indeed a contracted descriptor sequence. Also,~$\mathcal C'$ belongs to~$\mathcal G(\sigma,\omega)$, since the descriptor pair of a~$uv$-external good embedding only depends on the first and last elements of its contracted descriptor sequence and these elements are the same in~$\mathcal C$ and~$\mathcal C'$, hence the descriptor pair of a~$uv$-external good embedding of~$G'_{\mu'}$ with~$\mathcal C'$ as contracted descriptor sequence is~$(\sigma,\omega)$.
\end{proof}

We now present the following.

\begin{lemma} \label{le:realizable}
It is possible to test in~$O(k)$ time whether a contracted descriptor sequence~$\mathcal C$ is realizable by~$G_\mu$. 
\end{lemma}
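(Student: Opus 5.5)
We reduce realizability to a bipartite-matching-like assignment problem between the $k$ children of $\mu$ and the $O(1)$ positions of $\mathcal C=[(\sigma'_1,\omega'_1),\dots,(\sigma'_c,\omega'_c)]$, with $c\in O(1)$ by \cref{le:generating}. A $uv$-external good embedding of $G_\mu$ whose contracted descriptor sequence is a subsequence of $\mathcal C$ containing its first and last elements corresponds to choosing, for every child $\nu_i$, a position $j(i)\in\{1,\dots,c\}$ with $(\sigma'_{j(i)},\omega'_{j(i)})\in\mathcal F_{\nu_i}$, so that position $1$ and position $c$ are used. Children are then ordered around $u$ by non-decreasing $j(i)$. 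The compatibility conditions (angles at $u$ and $v$ in the faces between consecutive children, 4-modality at the poles, absence of impossible faces between consecutive children) were already checked when $\mathcal C$ was put into $\mathcal G(\sigma,\omega)$. They remain valid when some positions are skipped, which is exactly what \cref{le:realizability_why} guarantees. The one subtlety is a position used by several children: for a replicable pair this is harmless by its definition. For a special pair, I would argue, using the same reasoning as in the proof of \cref{le:generating}, that the construction of $\mathcal G(\sigma,\omega)$ permits at most one child at that position, or that repetition is consistent. So I would record, for each position $j$, a capacity $\kappa_j\in\{1,\infty\}$: $\infty$ if the pair is replicable, $1$ otherwise.

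\textbf{Step 1.} For each child $\nu_i$ and each $j\le c$, query whether $(\sigma'_j,\omega'_j)\in\mathcal F_{\nu_i}$ in $O(1)$ time via \cref{le:shape-data-structure}. This gives a set $J_i\subseteq\{1,\dots,c\}$ in $O(kc)=O(k)$ total time. If some $J_i$ is empty, reject.

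\textbf{Step 2.} Decide whether there is an assignment $j(i)\in J_i$ respecting the capacities $\kappa_j$ and using positions $1$ and $c$; if $c=1$, at least two children must be assigned, which is automatic since $k\ge2$. Group the children by their type $J_i$; there are at most $2^c=O(1)$ types, with multiplicities computed in $O(k)$ time. Positions with $\kappa_j=1$ number at most $c$. So we guess, in $O(1)$ ways, the injective map from the set of capacity-one positions that are used (which must include $1$ and $c$ if they have capacity one) to children types. For each guess, we check that the type multiplicities suffice, that every remaining child has some $\infty$-capacity position in its $J_i$, and that positions $1$ and $c$, if they have capacity $\infty$, are compatible with at least one child. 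The last requirement needs care: if a single child is the only candidate for both position $1$ and position $c$, we need enough children. This is again a constant-size case check over type counts. Each guess is verified in $O(1)$ time from the type counts, so Step 2 costs $O(k)$ overall. Equivalently, it is a flow problem on a network with $O(1)$ type nodes and $O(1)$ position nodes, solvable in $O(1)$ time once the counts are known.

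\textbf{Correctness and main obstacle.} For soundness, an accepted assignment yields an embedding: take for each $\nu_i$ a $uv$-external good embedding with descriptor pair $\sigma'_{j(i)},\omega'_{j(i)}$, and glue them in the order above. Its restriction to the used positions is a subsequence of $\mathcal C$ containing the endpoints, and the face checks carry over as argued. Completeness is immediate from the definitions. The main obstacle is the justification that multiplicity at a position is governed only by replicability: that placing several children with the same special pair side by side either creates an impossible or non-upward face, or is already accounted for. I would try to derive this from the face-sum condition \ref{cond:face-assignment} and the proof of \cref{le:generating}, where consecutive equal pairs were shown to force the replicable conditions.
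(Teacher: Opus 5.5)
\textbf{Gap: the capacity rule for special pairs.} The paper's proof takes a simpler route than yours. After building $A_{\mathcal C}$ (your Step~1), it rejects only if some child is isolated, if the first or last pair of $\mathcal C$ is isolated, or if both have degree one with the same neighbour. Otherwise it gives the first and last pairs to distinct children and every other child an arbitrary neighbour, with no bound on how many children share a position. Your concern about multiplicity is therefore well placed, and it is not addressed there. For example, two children $u\to w_i\to v$ with all four edges in $L$ have the same special pair, but the face between them is impossible.

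The gap is in how you resolve it. The rule $\kappa_j=1$ for every special pair is asserted without proof, and it is false. Let $\mu$ have the Q-node child $(u,v)\in L$, oriented from $u$ to $v$, and the S-node child $u\to w\to v$ with both edges in $L$. Both children have the unique pair $P=(\langle 0,0,1,1,\texttt{out},\texttt{out},\texttt{in},\texttt{in}\rangle,\langle L,L,L,L,1,0,0,0,0,0\rangle)$, which is special because $\chi_l=1$. Putting the edge to the left of the path gives a $uv$-external good embedding: the inner face is an $st$-face whose left path is a single edge. Its descriptor pair is $P$ and its contracted sequence is $[P]$. Since both feasible sets equal $\{P\}$, the only sequence of $\mathcal G(P)$ these children can fill is $[P]$, and that needs both of them on one position of capacity one. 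This is exactly your case $c=1$, which is not ``automatic'' under your own capacities, so your test returns $\mathcal F_\mu=\emptyset$.

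The error also propagates globally. Take edges $a\to b$, $a\to w_1\to b$, $b\to c$, $b\to w_2\to c$ in $L$ and $a\to d\to c$ in $R$. This graph has the upward book embedding with spine order $a,d,w_1,b,w_2,c$. Yet every choice of $e^*$ leaves one of the two triangles as a non-root P-node of the above kind, so your algorithm would reject it.

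\textbf{The rule the argument actually gives.} Suppose two consecutive children carry the same pair. 4-modality at $u$ and $v$ forces that pair to have a single edge type and label $1$ at each pole; otherwise the block structure around a pole would repeat. Hence $\tau_l+\tau_r=0$ for that pair. Property~C3 for the face $f$ between the two children then forces both angles of $f$ at $u$ and $v$ to be small.
\begin{itemize}
\item Any $\alpha$-label equal to $1$ makes $f$ impossible, because one of the two copies presents that side to $f$ and the rest of the boundary has at least two edges.
\item $\chi_l=1$ (resp.\ $\chi_r=1$) makes $f$ impossible unless the left (resp.\ right) child of the two is the edge $(u,v)$. That is the only way the rest of the boundary of $f$ can be a single edge.
\end{itemize}
So a special pair has capacity $1$, except the pair of the Q-node child. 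That pair can be shared by the Q-node child and one further child, placed on the side dictated by its $\chi$-label. Since all flags vanish for replicable pairs, the same computation shows they can be repeated freely. With these capacities your constant-size assignment over child types does run in $O(k)$ time. As written, however, Step~2 rejects realizable sequences.
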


\begin{proof}
Let~$\mathcal C=[(\sigma_1,\omega_1),\dots,(\sigma_\ell,\omega_\ell)]$. By \cref{le:generating}, we have~$\ell \in O(1)$. We create a bipartite graph~$A_{\mathcal C}$ with vertex set~$(\{\nu_1,\dots,\nu_k\},\{(\sigma_1,\omega_1),\dots,(\sigma_\ell,\omega_\ell)\})$ and with an edge between a vertex~$\nu_i$ and a vertex~$(\sigma_j,\omega_j)$ if~$(\sigma_j,\omega_j)$ belongs to~$\mathcal F_{\nu_i}$. This construction takes~$O(k)$ time. Indeed, for each of the~$k$ children~$\nu_i$ of~$\mu$, by \cref{le:shape-data-structure} it can be tested in~$O(1)$ time whether each of the~$\ell \in O(1)$ descriptor pairs~$(\sigma_j,\omega_j)$ belongs to~$\mathcal F_{\nu_i}$. 

We now test whether~$\mathcal C$ is realizable by~$G_\mu$ as follows. First, if a vertex~$\nu_i$ has degree~$0$ in~$A_{\mathcal C}$, then~$\mathcal C$ is not realizable by~$G_\mu$; indeed,~$\mathcal F_{\nu_i}$ contains no descriptor pair among those in~$\mathcal C$, hence any~$uv$-external good embedding of~$G_\mu$ does not yield a contracted descriptor sequence which is a subsequence of~$\mathcal C$. Second, if~$(\sigma_1,\omega_1)$ or~$(\sigma_\ell,\omega_\ell)$ has degree~$0$ in~$A_{\mathcal C}$, or if~$(\sigma_1,\omega_1)$ and~$(\sigma_\ell,\omega_\ell)$ have degree~$1$ and have the same unique neighbor in~$A_{\mathcal C}$, then~$\mathcal C$ is not realizable by~$G_\mu$; indeed, in both cases the children of~$\mu$ cannot be ordered and assigned with a descriptor pair in their feasible sets so that the first child in the ordering is assigned with~$(\sigma_1,\omega_1)$ and the last child with~$(\sigma_\ell,\omega_\ell)$. If we did not conclude that~$\mathcal C$ is not realizable by~$G_\mu$, then~$\mathcal C$ is realizable by~$G_\mu$. Indeed, let~$(\sigma_m,\omega_m)$ be the descriptor pair between~$(\sigma_1,\omega_1)$ and~$(\sigma_\ell,\omega_\ell)$ that has smaller degree in~$A_{\mathcal C}$ and let~$(\sigma_M,\omega_M)$ be the other descriptor pair, with a possible tie broken arbitrarily. We can assign~$(\sigma_m,\omega_m)$ to any neighbor~$\nu_x$ in~$A_{\mathcal C}$; this neighbor exists since the degree of~$(\sigma_1,\omega_1)$ and~$(\sigma_\ell,\omega_\ell)$ in~$A_{\mathcal C}$ is at least one. Then we can assign~$(\sigma_M,\omega_M)$ to a neighbor~$\nu_y\neq \nu_x$; this neighbor obviously exists if the degree of~$(\sigma_M,\omega_M)$ in~$A_{\mathcal C}$ is at least two, and it exists even if the degree of~$(\sigma_M,\omega_M)$ in~$A_{\mathcal C}$ is one, as in this case the unique neighbors of~$(\sigma_1,\omega_1)$ and~$(\sigma_\ell,\omega_\ell)$ are different. We assign each remaining node~$\nu_i$ with any descriptor pair that is a neighbor of~$\nu_i$ in~$A_{\mathcal C}$. Finally, we order the children of~$\mu$ as dictated by~$\mathcal C$: First the children that have been assigned with the descriptor pair~$(\sigma_1,\omega_1)$, then the children that have been assigned with the descriptor pair~$(\sigma_2,\omega_2)$, and so on. Using a~$uv$-external good embedding of~$G_{\nu_i}$ with the assigned descriptor pair, for each child~$\nu_i$ of~$\mu$, results in a~$uv$-external good embedding of~$G_{\mu}$ whose contracted descriptor sequence is a subsequence of~$\mathcal C$ containing~$(\sigma_1,\omega_1)$ and~$(\sigma_\ell,\omega_\ell)$. Since the describe test can be performed in~$O(k+\ell)\in O(k)$ time, the lemma follows.
\end{proof}

We thus get the following.

\begin{lemma}\label{lem:P_node_general}
Let~$\mu$ be a P-node of~$T$ with children~$\nu_1,\dots,\nu_k$. Given the feasible sets~$\mathcal{F}_{\nu_1}$,~$\dots$,~$\mathcal{F}_{\nu_k}$ of~$\nu_1,\dots,\nu_k$, respectively, the feasible set~$\mathcal{F}_\mu$ of~$\mu$ can be computed in~$O(n k)$ time. This sums up to~$O(n^2)$ time over all P-nodes of~$T$. 
\end{lemma}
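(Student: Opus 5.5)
The first claim, the $O(nk)$ bound for a single P-node, follows by chaining the lemmas already proved. The second claim, the $O(n^2)$ total, is a summation over the tree.

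For a fixed P-node $\mu$, the plan is as follows.

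\textbf{Step 1 (universal set).} Build the $n$-universal set $\mathcal U_n$ in $O(n)$ time by \cref{le:universal}. It has $O(n)$ elements and contains every descriptor pair that a $uv$-external good embedding of $G_\mu$ can have, so $\mathcal F_\mu\subseteq\mathcal U_n$.

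\textbf{Step 2 (generating sets).} For each $(\sigma,\omega)\in\mathcal U_n$, build $\mathcal G(\sigma,\omega)$ in $O(1)$ time by \cref{le:generating}. It consists of $O(1)$ contracted descriptor sequences, each of length $O(1)$.

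\textbf{Step 3 (realizability test).} For each sequence $\mathcal C\in\mathcal G(\sigma,\omega)$, test realizability by $G_\mu$ in $O(k)$ time with \cref{le:realizable}, querying the children's feasible sets in $O(1)$ time each via \cref{le:shape-data-structure}. Insert $(\sigma,\omega)$ into $\mathcal F_\mu$ if and only if some $\mathcal C$ is realizable. This gives total time $O(n)+O(n)\cdot O(1)\cdot O(k)=O(nk)$.

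\textbf{Correctness.} We need both inclusions.
\begin{itemize}
\item If $(\sigma,\omega)\in\mathcal F_\mu$, then some $uv$-external good embedding of $G_\mu$ has descriptor pair $(\sigma,\omega)$. Its contracted descriptor sequence lies in $\mathcal G(\sigma,\omega)$ by definition, and it is trivially realizable as its own subsequence.
\item Conversely, suppose $\mathcal C$ is realizable. Then some good embedding of $G_\mu$ has a contracted descriptor sequence $\mathcal C'$ that is a subsequence of $\mathcal C$ and contains its first and last elements. By \cref{le:realizability_why}, $\mathcal C'\in\mathcal G(\sigma,\omega)$. By the replaceability property of descriptor pairs discussed before \cref{le:shape-data-structure}, the descriptor pair of that embedding is determined by the first and last elements of $\mathcal C'$, so it is $(\sigma,\omega)$.
\end{itemize}
The main subtlety is making this converse direction explicit. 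It should follow directly from how \cref{le:generating} builds sequences: the left data come from the first pair and the right data from the last, and the face checks are local between consecutive pairs. I would spell out that a $uv$-external upward embedding whose consecutive faces pass these checks is good.

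\textbf{Summation.} Summing $O(nk_\mu)$ over all P-nodes gives $O(n\sum_\mu k_\mu)$. Each node of $T$ is the child of at most one P-node, and $T$ has $O(n)$ nodes, so $\sum_\mu k_\mu=O(n)$ and the total is $O(n^2)$. This bound is slightly loose, since we pay $|\mathcal U_n|=O(n)$ rather than $O(|V(G_\mu)|)$ per P-node, but it is enough for the claim.
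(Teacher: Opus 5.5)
Your proposal is essentially the paper's own argument. You build $\mathcal U_n$, construct $\mathcal G(\sigma,\omega)$ for each of its $O(n)$ pairs, run the $O(k)$ realizability test of \cref{le:realizable} on each of the $O(1)$ sequences, justify correctness via \cref{le:realizability_why}, and sum using $\sum_\mu k_\mu=O(n)$. One step should be phrased differently: the claim that the descriptor pair of a good embedding is a function of the first and last elements of its contracted sequence is false, and it does not follow from the replaceability remark. The paper's proof of \cref{le:realizability_why} asserts the same thing. For a counterexample, take the cycle formed by $u\to a\to x\leftarrow b\leftarrow v$ and $u\to c\to y\leftarrow d\leftarrow v$, with the edges leaving $u$ and $v$ in $L$ and the other four in $R$. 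Keep the path through $x$ as left outer path, and place in the outer face either the large angle of $u$ or that of $v$. This gives two good embeddings with identical child descriptor pairs but with $(\lambda^u,\lambda^v)$ equal to $(1,-1)$ and $(-1,1)$, respectively. So the converse should instead be closed by applying replaceability to the embedding that witnesses $\mathcal C'\in\mathcal G(\sigma,\omega)$, substituting the children of $G_\mu$ block by block; extra consecutive copies of a pair arise only for replicable pairs, which can be inserted freely.
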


\subsection{Root} \label{sub:root}

The root~$\rho^*$ of~$T$ corresponds to the entire graph~$G$ and can be treated as a P-node with two children, whose pertinent graphs are the edge~$e^*$ and the pertinent graph of the child~$\sigma^*$ of~$\rho^*$ in~$T$. By \cref{lem:P_node_general}, the feasible set of the root can hence be computed in~$O(n)$ time from the feasible sets of~$\sigma^*$ and of a node representing~$e^*$; the latter can be computed in~$O(1)$ time as in \cref{lem:Q_node_general}. Once the feasible set~$\mathcal F_{\rho^*}$ of~$\rho^*$ has been computed, we have that~$G$ admits a good embedding with~$e^*$ on the outer face if and only if~$\mathcal F_{\rho^*}$ is non-empty. 

By \cref{lem:Q_node_general,lem:S_node_general,lem:P_node_general}, the entire processing of~$T$ takes overall~$O(n^2)$ time. Repeating the test for every possible choice of~$e^*$ leads to the following.

\begin{theorem} \label{th:series-parallel}
Let~$G$ be an~$n$-vertex biconnected partitioned directed partial~$2$-tree. It is possible to test in~$O(n^3)$ time whether~$G$ admits an upward book embedding. 
\end{theorem}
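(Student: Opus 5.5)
The plan is to assemble the pieces already established in this section, with \cref{th:characterization} as the bridge between the combinatorial test and the geometric question. By \cref{th:characterization}, $G$ admits an upward book embedding if and only if it admits a good embedding $(\mathcal E,\lambda)$. Every planar embedding has some edge on its outer face. Hence it suffices, for each of the $O(n)$ edges $e^*$ of $G$, to decide whether $G$ admits a good embedding with $e^*$ on the outer face, and to answer positively if and only if some choice succeeds. It therefore remains to show that a single choice of $e^*$ can be handled in $O(n^2)$ time.

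For fixed $e^*$, I would compute the SPQ-tree $T$ rooted at $\rho^*$ in $O(n)$ time. I would then fill in the feasible sets $\mathcal F_\mu$ bottom-up, using the following results.
\begin{itemize}
\item \cref{lem:Q_node_general} for Q-nodes, in $O(1)$ time each.
\item \cref{lem:S_node_general} for S-nodes, whose total cost over all S-nodes is $O(n^2)$ via the super-additivity argument of \cref{le:the-most-important-lemma-in-history}.
\item \cref{lem:P_node_general} for P-nodes, at cost $O(nk)$ per node. Since the numbers of children of all P-nodes sum to $O(n)$, this totals $O(n^2)$.
\end{itemize}
The root is handled as a P-node with two children, namely $e^*$ and $\sigma^*$, in $O(n)$ time.

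The correctness of the dynamic program rests on two facts.
\begin{itemize}
\item \emph{Restrictions are good.} The restriction of a good embedding with $e^*$ on the outer face to any $G_\mu$ is a $uv$-external good embedding. Internal faces and 4-modality are clearly preserved, and \cref{le:outer-good} shows that the outer face $f_\mu$ is not impossible.
\item \emph{Descriptor pairs suffice.} Replacing a child's embedding by any other embedding with the same descriptor pair preserves goodness and preserves the parent's descriptor pair. This is what the exhaustive checks in \cref{le:s-node-check} and the generating-set and realizability arguments in \cref{le:generating,le:realizability_why,le:realizable} certify.
\end{itemize}
By induction on $T$, each $\mathcal F_\mu$ is therefore exactly the set of descriptor pairs realized by $uv$-external good embeddings of $G_\mu$. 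Consequently, $\mathcal F_{\rho^*}\neq\emptyset$ if and only if a good embedding with $e^*$ on the outer face exists.

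The main point requiring care is this equivalence at the root. The root combination must verify goodness of the outer face of the whole graph $G$, and not merely of $f_\mu$ for a proper subgraph. Treating $\rho^*$ as a P-node with children $e^*$ and $\sigma^*$ and applying the same checks used for P-nodes accomplishes this. Here the final outer face plays the role of $f_\mu$, and its non-impossibility is exactly what the universal-set checks enforce. Summing over all choices of $e^*$ gives $O(n)\cdot O(n^2)=O(n^3)$.
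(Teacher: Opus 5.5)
Your proposal is correct and takes essentially the same route as the paper. You use the characterization to reduce the question to the existence of a good embedding, run the $O(n^2)$ bottom-up SPQ-tree dynamic program (Q-, S-, and P-node lemmas, with the root treated as a two-child P-node) for each choice of the reference edge $e^*$, and multiply by the $O(n)$ choices. Your two correctness facts, that restrictions of a good embedding are good (via the outer-face lemma) and that descriptor pairs suffice, are the same reasoning the paper gives in the body of the section.
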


\section{Conclusions} \label{se:conclusions}

In this paper, we considered the problem of computing upward book embeddings of partitioned digraphs. Our research focused on the previously unsolved case of two pages, one of the ``ultimate'' algorithmic book embedding problems still open (see \cref{table:complexity}), and closed the complexity gap for the problem.  We also conceived a characterization of the upward embeddings that support such layouts, and leveraged such a characterization in combination with a number of algorithmic tools, such as flow techniques, SPQ-decompositions, and concise embedding encodings to obtain efficient testing algorithms for digraphs with a prescribed planar embedding and for biconnected directed partial~$2$-trees with a variable embedding. 

Our results coud be enhanced in two ways. First, the multiplicative linear overhead on the running time of our algorithm for biconnected directed partial~$2$-trees caused by rerouting the SPQ-tree at every Q-node might be avoided by using the techniques designed by Didimo et al.\ \cite{DBLP:conf/soda/DidimoLOP20}, see also \cite{DBLP:conf/gd/ChaplickGFGRS22,DBLP:journals/comgeo/Frati22}. Second, such an algorithm might be generalized to handle arbitrary, simply connected, partial $2$-trees; this appears to be a non-trivial task, as it requires handling the possible nestings of biconnected components into one another, as prescribed by the structure of the block-cut-vertex tree, while keeping cut-vertices $4$-modal and avoiding impossible faces. 

Additional interesting research directions are the following: 
\begin{itemize}
    \item  studying the complexity of the problem for single-source partitioned digraphs and, more generally, for digraphs with a bounded number of sources;
    \item determining whether the problem is \NP-complete for instances of bounded treewidth; and
    \item devising FPT algorithms with respect to parameters that are more restrictive than the treewidth.
\end{itemize} 

\medskip
{\bf Acknowledgements.} This research started at the Summer Workshop on Graph Drawing 2024 (SWGD 2024). The authors thank the other participants for useful discussions. 

\bibliographystyle{plainurl}
\bibliography{bibliography}
\end{document}